\documentclass[11pt, letterpaper]{article}
\usepackage{fullpage}
\usepackage[utf8]{inputenc} \usepackage{ae}
\usepackage{enumerate}
\usepackage{amssymb}
\usepackage{amsmath}
\usepackage{graphicx}
\usepackage{stmaryrd}
\usepackage[breaklinks,bookmarks=false]{hyperref}
\usepackage{enumerate}
\usepackage[usenames,dvipsnames,svgnames,table]{xcolor}

\newtheorem{theorem}{Theorem}[section]

\newtheorem{lemma}[theorem]{Lemma}

\newtheorem{definition}[theorem]{Definition}

\newtheorem{fact}[theorem]{Fact}

\newenvironment{proof}{\noindent{\bf Proof:}\hspace*{1em}}{\qed\bigskip}
 
\clubpenalty=10000
\widowpenalty = 10000
\newcommand{\qed}{\hfill\ensuremath{\square}}

\DeclareMathOperator{\age}{age}
\DeclareMathOperator{\dist}{dist}

\newcommand{\cC}{\mathcal{C}}
\newcommand{\ocC}{\bar{\mathcal{C}}}
\newcommand{\cP}{\mathcal{P}}
\newcommand{\tO}{\widetilde{O}}
\newcommand{\Dstar}{\gamma^*}
\newcommand{\cT}{\mathcal{T}}

\newcommand{\Reff}{R_{\mathrm{eff}}}
\newcommand{\diameff}{\gamma_{\mathrm{eff}}}
\newcommand{\diamdist}{\Delta}
\newcommand{\energy}{\mathcal{E}}
\newcommand{\cut}{{K}}
\newcommand{\wit}[1]{\mathop{W}(#1)}
\newcommand{\cR}{\mathcal{R}}
\newcommand{\of}{\bar{f}}

\newcommand{\bE}{\mathbb{E}}

\begin{document}

\title{Fast Generation of Random Spanning Trees and the Effective Resistance Metric}

\author{ Aleksander Mądry\thanks{aleksander.madry@epfl.ch} \\ EPFL \and Damian Straszak\thanks{damian.straszak@gmail.com} \\ University of Wrocław\and Jakub Tarnawski\thanks{jakub.tarnawski@gmail.com} \\ University of Wrocław}

\date{}
 
\maketitle

\begin{abstract}
We present a new algorithm for generating a uniformly random spanning tree in an undirected graph. Our algorithm samples such a tree in expected $\tO(m^{4/3})$ time. This improves over the best previously known bound of $\min(\tO(m\sqrt{n}),O(n^{\omega}))$ -- that follows from the work of Kelner and Mądry [FOCS'09] and of Colbourn et al. [J. Algorithms'96] -- whenever the input graph is sufficiently sparse.

At a high level, our result stems from carefully exploiting the interplay of random spanning trees, random walks, and the notion of effective resistance, as well as from devising a way to algorithmically relate these concepts to the combinatorial structure of the graph. This involves, in particular, establishing a new connection between the effective resistance metric and the cut structure of the underlying graph.  
\end{abstract}

\section{Introduction}\label{sec:introduction}

Random spanning trees are among the oldest and most extensively investigated probabilistic objects in graph theory (see, e.g., \cite{LyonsP13}). Their study goes back as far as to the works of Kirchhoff in the 1840s \cite{Kirchhoff47}. Since then, it has spawned a large number of surprising connections to a variety of topics in theoretical computer science, as well as in discrete probability and other areas of mathematics. For example, random spanning trees are intimately related to the task of spectral graph sparsification \cite{SpielmanS08,BatsonSS09} and Goyal et al. \cite{GoyalRV09} showed how to directly use them to generate an efficient cut sparsifier. They are also closely tied to the notion of electrical flows, a powerful primitive that was instrumental in some of the recent progress on the maximum flow problem \cite{ChristianoKMST11,Madry13}. Finally, random spanning trees were at the heart of the recent results that broke certain long-standing approximation barriers for both the symmetric and the asymmetric version of the Traveling Salesman problem \cite{AsadpourGMOS10,OveisGharanSS11}. 

It is, therefore, not surprising that the task of developing fast algorithms for generating random spanning trees has received considerable attention over the years (\cite{Guenoche83,Kulkarni90,ColbournDM88,ColbournDN89,ColbournMN96,Aldous90,Broder89,Wilson96,KelnerM09}). Broadly, these algorithms can be divided into two categories: determinant-based ones and random walk--based ones. 

The determinant-based algorithms rely on the celebrated Kirchhoff's Matrix Tree Theorem \cite{Kirchhoff47} (also see, e.g., Ch.3, Thm.8 in \cite{Bollobas79}) that reduces counting of the spanning trees of a graph to computing a determinant of a certain associated matrix. This connection was first used in the works of Gu\'enoche \cite{Guenoche83} and of Kulkarni \cite{Kulkarni90} and resulted in $O(mn^3)$-time algorithms. Then, after a sequence of improvements \cite{ColbournDM88,ColbournDN89}, this line of research was culminated with the result of Colbourn et al. \cite{ColbournMN96} that generates a random spanning tree in time $O(n^{\omega})$, where $\omega<2.373$ is the exponent of the fastest algorithm for square matrix multiplication \cite{CoppersmithW90,Vassilevska12}.

On the other hand, the random walk--based approach to random spanning tree generation is based on the following striking theorem due to Broder \cite{Broder89} and Aldous \cite{Aldous90}:

\begin{theorem}[\cite{Aldous90,Broder89}] \label{thm:rand_tree_via_rand_walk}
Suppose you simulate a random walk in an undirected graph $G=(V,E)$, starting from an arbitrary vertex $s$ and continuing until this walk covers the whole graph, i.e., until every vertex has been visited. For each vertex $v$ other than $s$, let $e_v$ be the edge through which $v$ was visited for the first time in this walk. Then, $T = \{ e_v : v \in V \setminus \{s\}\}$ is a uniformly random spanning tree of $G$.
\end{theorem}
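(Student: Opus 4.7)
The plan is to analyze an auxiliary Markov chain on rooted spanning trees whose evolution is coupled to the random walk. Define a chain on states $(v, T)$, where $v \in V$ and $T$ is a spanning tree of $G$: from $(v, T)$, pick a uniformly random neighbor $u$ of $v$, add the edge $(v, u)$ to $T$ (creating a unique cycle $C$), and remove from $C$ the edge incident to $u$ other than $(v, u)$; the new state is $(u, T')$. The projection of this chain onto the first coordinate is exactly the simple random walk on $G$, so the augmented chain is a lift of the walk that additionally carries tree-valued information.

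The key structural observation is that, once the cover time has elapsed, the tree component $T_t$ admits a clean combinatorial description: for every vertex $w \neq X_t$, the edge at $w$ in $T_t$ is the edge used by the walk the last time it departed $w$. Thus the augmented chain naturally generates the \emph{last-exit} tree of the walk. To translate this into the first-entrance tree that appears in the theorem statement, I would invoke the reversibility of the simple random walk on an undirected graph: under time reversal, the last-exit tree of the forward walk corresponds exactly to the first-entrance tree of the reversed walk, and by reversibility the reversed walk again has the law of a simple random walk.

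The central technical step is to show that the augmented chain has stationary distribution $\pi(v, T) = \deg(v)/(2m \cdot \tau(G))$, where $\tau(G)$ denotes the number of spanning trees of $G$; equivalently, conditional on the root, the tree is uniform over spanning trees, while the root itself follows the walk's stationary distribution. This can be established by checking global balance state by state: for each target state $(u, T')$, one enumerates the predecessors combinatorially --- each being a choice of neighbor $r$ of $u$ together with a consistent candidate for the ``previous'' edge at $u$ in the tree --- and verifies that the total inflow equals $\pi(u, T')$. Combined with the reversibility argument above, this gives uniformity of the first-entrance tree produced by the Aldous--Broder procedure.

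The step I expect to be the main obstacle is precisely this stationary-distribution verification. The augmented chain is not reversible in the standard sense --- the naive reverse transition from $(u, T')$ does not in general return to $(v, T)$ --- so detailed balance fails, and one must carefully verify global balance via a combinatorial bijection between incoming and outgoing transitions at each rooted-tree state. An additional subtlety is that the walk is stopped at the \emph{random} cover time rather than at a deterministic horizon, so the time-reversal step must be stated with some care to correctly transport the joint law of the walk and its recorded tree through this random stopping.
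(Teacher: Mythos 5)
The paper states this theorem with a citation to Aldous and Broder and does not prove it, so there is no in-paper argument to compare against; what follows is an assessment of your proposal on its own terms.

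Your architecture is the standard Aldous argument and is sound in outline: lift the walk to a chain $(X_t,T_t)$ on rooted spanning trees, observe that once the graph is covered $T_t$ is the last-exit tree of the trajectory, prove that $\pi(v,T)\propto d(v)$ is stationary, and transfer to first-entrance trees by time reversal. Two small remarks before the main issue: your transition rule needs the convention that when $(v,u)\in T$ no cycle is created and $T'=T$ with the root moved to $u$; and the stationarity check you flag as the main obstacle is in fact a clean bijection rather than a delicate global-balance computation --- for a target $(u,T')$ the predecessors are parameterized by the choice of a neighbor $p$ of $u$, which determines the child $v$ of $u$ in $T'$ whose subtree contains $p$, so the inflow is exactly $d(u)\cdot c$. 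The genuine gap is the reversal step. The claim ``the reversed walk again has the law of a simple random walk'' is false for the object you actually have in hand: a walk started at a fixed $s$ and stopped at its random cover time. Reversing that path does not yield a random walk (the reversed path is pinned to end at $s$, starts at the random last-covered vertex, and the stopping rule does not survive reversal), and you flag this in your final paragraph without resolving it. The standard repair avoids the random stopping time entirely: take a doubly-infinite stationary walk $(X_n)_{n\in\mathbb{Z}}$; let $T^-$ be the last-exit tree rooted at $X_0$ built from the past and $T^+$ the first-entrance tree rooted at $X_0$ built from the future; use stationarity of $\pi$ together with the almost-sure forgetting of the initial tree (once every vertex has been exited) to conclude $T^-$ is uniform given $X_0$; use reversibility of the stationary chain about the deterministic time $0$ to get that $T^+$ and $T^-$ have the same conditional law given $X_0$; and finally condition on $X_0=s$, under which the forward trajectory is exactly a simple random walk from $s$ and $T^+$ is exactly the first-entrance tree of the theorem. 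Without this device (or an equivalent one) the argument does not close, so the proposal as written has a real hole even though it correctly locates where the difficulty lies.
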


In light of this theorem, we can focus our attention on simulating a covering random walk in the input graph $G=(V,E)$ and then just reading off our sampled spanning tree out of the first-visit edges of that walk. The running time of the resulting procedure is proportional to the cover time of the input graph $G$. If $G$ has $n$ vertices and $m$ edges, it is known that this cover time is always $O(mn)$ \cite{AleliunasKLLR79}. This improves upon the $O(n^\omega)$ bound corresponding to the determinant-based approach whenever $G$ is sparse enough. 

Unfortunately, there are examples of graphs for which the above $O(mn)$ bound on the cover time is tight. Therefore, it is natural to wonder if one can obtain a random walk--based algorithm for random spanning tree generation that runs faster than the cover time. Wilson \cite{Wilson96} was the first one to show that this is indeed the case. He put forth a certain random process that is a bit different than random walk, but related. Using this process one is able to generate a random spanning tree in time proportional to the mean hitting time of the graph. This quantity turns out to never be much larger than the cover time and is often much smaller. However, its worst-case asymptotics is still $\Theta(mn)$.

Later on,  Kelner and Mądry \cite{KelnerM09} showed that this worst-case bound of $\Theta(mn)$ can nevertheless be improved upon. (Initially, this improvement was only for approximate sampling, but later Propp \cite{Propp10} showed how to make it work for the exact case too -- see Lemma 7.3.2 in \cite{Madry11}.) They observed that to use Theorem~\ref{thm:rand_tree_via_rand_walk}, one does not need to simulate the covering random walk in full. It suffices to generate its shortcut transcript which retains enough information to allow us to recover what the first-visit edges were. 

 Following this observation, they put forth such a shortcutting technique that relied on a simple diameter--based graph partitioning primitive of Leighton and Rao \cite{LeightonR99}. This technique yields a transcript which is significantly shorter than the full transcript of the covering walk and can be efficiently generated using fast (approximate) Laplacian system solvers \cite{SpielmanT03,SpielmanT04,KoutisMP10,KoutisMP11,KelnerOSZ13}.  This resulted in a generation procedure with $\tO(m\sqrt{n})$ total (expected) running time, which is the best one known for sparse graphs.

\subsection{Our Contribution}

In this paper, we present a new algorithm that makes progress on the sparse-graph regime of fast random spanning tree generation. More precisely, we prove the following theorem. 

\begin{theorem}
\label{thm:main}
For any connected graph $G$ with $m$ edges, we can generate a uniformly random spanning tree of $G$ in expected $\tO(m^{4/3})$ time.
\end{theorem}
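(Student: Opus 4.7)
The starting point is Theorem~\ref{thm:rand_tree_via_rand_walk}: it suffices to run a random walk until it covers $G$ and read off the first-visit edges. Following Kelner and Mądry~\cite{KelnerM09}, the plan is not to simulate this walk step by step but to produce a shortcut transcript that still retains enough information to identify the first-visit edges, and then to bound the time needed to generate that transcript. The improvement over $\tO(m\sqrt{n})$ should come, as suggested in the abstract, from replacing the shortest-path--based graph partitioning of \cite{LeightonR99,KelnerM09} by a partition tailored to the effective resistance metric, together with a new structural theorem relating $\Reff$ to the cut structure of $G$.

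The plan has three stages. Stage one is the structural core: for a parameter $\gamma$, I would show that the vertex set of $G$ admits a partition into clusters $\cC_1,\dots,\cC_k$ such that (a) the effective resistance diameter of each $\cC_i$ (measured in the cluster's induced subgraph) is at most $\gamma$, and (b) the total number of inter-cluster edges is small, on the order of $\tO(n/\gamma)$, consistent with the fact that the average effective resistance of an edge is $(n-1)/m$. The natural handle on (a) is the inequality $\Reff(u,v)\ge 1/c(S)$ for any $uv$-separating cut $S$: if a candidate cluster has effective resistance diameter larger than $\gamma$ then it must contain an edge cut of capacity less than $1/\gamma$, along which we may refine. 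Turning this qualitative observation into the quantitative bound (b), and into an efficient algorithm that produces the partition using approximate cut/flow or spectral primitives, is the main combinatorial contribution and the place where the promised connection between $\Reff$ and cuts has to do the real work.

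Stage two is the simulation. With the partition in hand, the walk's behavior inside a cluster $\cC_i$---from entering at a boundary vertex until it next hits the boundary---can be summarized by a single exit-vertex (and, after a little care, first-visit-edge) distribution. I would sample from this distribution in $\tO(|\cC_i|)$ time by invoking an approximate Laplacian system solver \cite{SpielmanT04,KoutisMP11,KelnerOSZ13}, which effectively computes the harmonic measure on the cluster boundary. This lets one replace a whole intra-cluster sojourn of the random walk by a single shortcut step of the transcript. The total number of such cluster-transitions during the covering walk is controlled via the commute-time identity $\bE[\tau_{uv}+\tau_{vu}]=2m\Reff(u,v)$: inside any cluster the walk reaches its boundary in expected $O(m\gamma)$ steps, so combined with the cover-time bound one obtains a bound on the number of shortcut steps, and hence on the number of Laplacian solves needed.

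Stage three is a routine parameter balance: the total expected running time factors roughly as (number of shortcut steps) times (cost per Laplacian solve), a product of a decreasing and an increasing function of $\gamma$, and optimizing gives $\gamma\approx m^{-1/3}$ and total time $\tO(m^{4/3})$. The main obstacle, by a significant margin, is stage one. Unlike shortest-path distances, effective resistance is a global spectral quantity with no local monotonicity, so neither Leighton--Rao--style region growing nor standard low-diameter decomposition arguments transfer directly. Building the required partition from the cut side of the effective-resistance/cut duality---and doing so within the $\tO(m^{4/3})$ budget---is the genuinely novel structural ingredient and the crux of the argument.
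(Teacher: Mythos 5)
Your high-level motivation (shortcut the covering walk using a partition adapted to the effective resistance metric, exploit a new $\Reff$-to-cuts connection, and balance parameters around $m^{4/3}$) is the right one, and stage two -- sampling exit distributions via Laplacian solves -- is essentially what the paper does. But the proposal has two substantive gaps that the paper's actual development is built precisely to avoid.

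First, the structural claim you propose as ``the natural handle'' on stage one is incorrect. The inequality $\Reff(u,v)\ge 1/c(S)$ for any $uv$-separating cut $S$ gives a \emph{lower} bound on every cut size, i.e.\ $c(S)\ge 1/\Reff(u,v)$; when $\Reff(u,v)$ is large this says nothing useful, and in particular does \emph{not} yield a sparse cut. The connection the paper actually establishes (Lemma~\ref{lem:good_cut}) is considerably more delicate: it requires two vertex \emph{sets} $U,W$ that each already have small effective resistance diameter, and it produces a cut of size at most $\sqrt{m/\gamma_{u,w}(U,W)}$, not $1/\gamma$. The proof is a flow/energy argument -- Menger gives $k$ edge-disjoint paths, which are routed together with low-energy internal flows inside $U$ and $W$ to upper-bound $\Reff(u,w)$ by $\diameff(U)+m/k^2+\diameff(W)$, so $k$ cannot be too large. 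Without something of this form there is no way to turn ``far apart in $\Reff$'' into ``separated by a small cut.'' Note also that the paper measures $\diameff$ in $G$, not in the cluster's induced subgraph as you write; the latter may be much larger or even infinite, and the construction depends on the former.

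Second, a single partition into clusters of effective resistance diameter $\gamma$ with only $\tO(n/\gamma)$ crossing edges does not exist in general, and this is exactly the barrier the paper describes (Figure~\ref{fig:barrier_example}): two expanders joined by $\sqrt n$ long paths. Any low-$\Reff$-diameter cluster containing a large chunk of an expander has a huge boundary, so the trade-off you hypothesize cannot be met. The paper's way around this is not to refine the cluster -- it is to accept the large region ($G_1\cup G_2$) with small $\Reff$-diameter as a single piece that is \emph{never shortcut over}, use Lemma~\ref{lem:ultimate} to show the walk covers it cheaply, sample the random tree's intersection with that region only, then contract/delete and recurse. That iterative, conditioning-based structure (the covering families, ages, and minimum-age interiors of Sections~\ref{sec:covering_families}--\ref{sec:marginals_overview}, with $O(\log m)$ rounds) is essential, because after deleting edges the effective resistance distances blow up and no fixed partition survives. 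Your one-shot ``build a partition, shortcut every cluster, read off the tree'' plan would stall at exactly the $\Omega(n^{3/2})$ bottleneck the paper set out to beat. (Also, you want $\gamma\approx m^{1/3}$, not $m^{-1/3}$.)
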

This improves over the $\tO(m\sqrt{n})$ bound of Kelner and Mądry \cite{KelnerM09} whenever the graph is sufficiently sparse.

\subsection{Our Approach}\label{sec:our_approach}

The starting point of our approach is the Kelner-Mądry algorithm \cite{KelnerM09} we mentioned above. Recall that this algorithm generates a random spanning tree via simulation of a shortcut transcript of the covering random walk. That transcript retains enough information to allow us to recover what the first-visit edges were while being relatively short and possible to generate quickly. This results in an improved $\tO(m\sqrt{n})$ running time. 

Unfortunately, their technique has certain limitations. Specifically, there are graphs for which the length of the shortcut transcript they aim to construct is inherently $\Omega(n^{3/2})$. (See Section \ref{sec:overview} for such an example.) This makes the $\Omega(n^{3/2})$ running time a barrier for their approach.

Our algorithm broadly follows the theme put forth by Kelner-Mądry. However, to obtain an improved running time and, in particular, to overcome this $\Omega(n^{3/2})$ barrier, we depart from this theme in a number of ways. 

The key new element of our approach is making the effective resistance metric, and its connections to random spanning trees and random walks, the central driver of our algorithm. Compared to the traditional graph distance--based optics employed by Kelner and Mądry, using such an effective resistance--based view gives us a much tighter grasp on the behavior of random walks. It also enables us to tie this behavior to the cut structure of the graph in a new way. 

Specifically, we develop a relationship between the fact that two vertex sets are relatively far away from each other in the effective resistance metric and the existence of a small cut separating them. (See Lemma \ref{lem:good_cut} in Section \ref{sec:parititioning}.)  It can be seen as a certain minimum $s$-$t$ cut--based analogue of the celebrated Cheeger's inequality \cite{AlonM85,Alon86} (see also \cite{KwokLLOT13}) and the role it plays in the context of the sparsest cut problem. We believe this relationship to be of independent interest.

Furthermore, to take full advantage of this effective resistance--based view, we also change the way we approach the task of generation of a random spanning tree. Namely, instead of focusing on efficient generation of a single shortcut transcript and then using it to recover the corresponding random spanning tree in full, we generate a number of carefully crafted transcripts, one after another. None of these transcripts alone is enough to extract the whole random spanning tree from it. More precisely, each one of them provides us only with a sample from the marginal distribution corresponding to the intersection of the random spanning tree with some subset of edges of the graph. However, by making sure that all these samples are consistent with each other (which is achieved by running each consecutive sampling procedure in a version of the graph that incorporates the choices made in previous iterations), we can combine all of them to recover the entire random spanning tree in the end. 
 
Finally, to make the above approach algorithmically efficient, we utilize a diverse toolkit of modern tools such as Laplacian system solvers, fast approximation algorithms for cut problems, and efficient metric embedding results.

%
%

\subsection{Organization of the Paper}
We begin the technical part of the paper in Section \ref{sec:preliminaries}, where we present some preliminaries on the tools and notions we will need later. Next, in Section \ref{sec:overview}, we provide an overview of our algorithm and the description of its main components. We develop these components in subsequent sections. Specifically, in Section \ref{sec:parititioning}, we describe our graph-cutting procedure and, in particular, establish the connection between the effective resistance metric and graph cuts. Then, in Section \ref{sec:sampling}, we conclude by providing and analyzing our sampling procedure that generates the random spanning tree.

\section{Preliminaries} \label{sec:preliminaries}

Throughout the paper, $G = (V,E)$ denotes an undirected connected graph with vertex set $V$ and edge set $E$ (we allow parallel edges here). Let $V', V''\subseteq V$ be two subsets of vertices of $G$. We define $E(V',V'')$ to be the set of all the edges in $G$ whose one endpoint is in $V'$ and the other one is in $V''$. Also, we define the {\em interior $E(V')$} of $V'$ to be the set of all the edges in $G$ with both their endpoints in $V'$, i.e., $E(V'):=E(V',V')$. Further, the {\em boundary $\partial V'$} of the set $V'$ is the set of all the edges in $G$ with exactly one endpoint in $V'$. We say that an edge $e\in E$ is a {\em boundary edge} of $V'$ if $e\in \partial V'$. Finally, for a collection $\cC$ of subsets of $V$, we say that an edge $e \in E$ is a boundary edge of $\cC$ iff $e$ is a boundary edge of some set $V' \in \cC$.

\subsection{Random Spanning Trees and Random Walks}

Our objective is to sample a random spanning tree $T \subseteq G$ from the uniform distribution, that is, so that for every possible spanning tree $T$ of $G$, the probability that we output $T$ is $\frac{1}{|\cT_G|}$, where $\cT_G$ is the set of all spanning trees of $G$.

In light of the connection that Theorem~\ref{thm:rand_tree_via_rand_walk} establishes between random spanning trees and random walks, the latter notion will be at the center of our considerations.

By a random walk on $G$ we mean the stochastic process corresponding to a walk that traverses the vertices of the graph by moving, in each step, from the current vertex $u$ to one of its neighbors $v$, chosen uniformly at random, over the edge $(u,v) \in E$. It is well-known that this stochastic process has a stationary distribution described by the formula $\pi(v)=\frac{d(v)}{2|E|}$, where $d(v)$ is the degree of $v$ (see, e.g., \cite{Lovasz93}).

Now, by the {\em cover time} $Cov(G)$ of a graph $G$ we mean the expected time it takes for a random walk started at a vertex $s \in V$ to visit (cover) every vertex of $G$. (We take a worst-case bound here over all the possible choices of $s$.)

\subsection{Effective Resistances and the Effective Resistance Metric}
\label{sec:effective_resistance}

Another key notion we will use to establish our result is that of effective resistance and the effective resistance metric. This notion arose originally in the context of study of electrical circuits in physics, but since then it has found a number of equivalent characterizations that highlight its intimate connection to the behavior of random walks and also to random spanning trees themselves (see \cite{DoyleS84,Bollobas98,LyonsP13,Lovasz93}). The characterization that will be especially useful to us was proved in \cite{ChandraRRS89} (see also, e.g., \cite{Lovasz93}). It states that the {\em effective resistance} distance $\Reff(u,v)$ between two points $u$ and $v$ in $G$ is equal to $\frac{\kappa(u,v)}{2 |E|}$, where $\kappa(u,v)$ is the {\em commute time} between $u$ and $v$, i.e., the expected number of steps that a random walk started at $u$ takes before visiting $v$ and returning back to $u$.  Also, even more interestingly in our context, it is known that for any edge $(u,v)$ of $G$, the probability that this edge is included in a random spanning tree is exactly $\Reff(u,v)$. 

Given the above characterization in terms of the commute time, it is not hard to see that the distances $\Reff(u,v)$ give rise to a metric -- called the {\em effective resistance metric}. We may thus define the {\em effective resistance diameter} of a subset of vertices $V' \subseteq V$ to be $\diameff(V') := \max_{x,y \in V'} \Reff(x,y)$. We also set $\diameff(G):= \diameff(V(G))$.


\subsection{Low-Dimensional Embedding of the Effective Resistance Metric} \label{sec:reff_embedding}

In our algorithm we will often be interested in computing the effective resistance distance between pairs of vertices. It is well-known that, for any two vertices $u$ and $v$, one can obtain a very good estimate of the effective resistance $\Reff(u,v)$ in $\tO(m)$ time using the fast (approximate) Laplacian system solvers \cite{SpielmanT03,SpielmanT04,KoutisMP10,KoutisMP11,KelnerOSZ13}. Unfortunately, as we will be interested in distances between many different pairs of points, this running time bound is still prohibitive. However, Spielman and Srivastava \cite{SpielmanS08} designed a way for computing (approximate) effective resistances distances that has significantly better per-query performance. 
\begin{theorem}
\label{thm:SpielmanSrivastava}
Let $G=(V,E)$ be a graph with $m$ edges. For every $\varepsilon>0$ we can find in $\tO(m)$ time a low-dimensional embedding $\cR$ of the effective resistance metric into $\mathbb{R}^{O(\varepsilon^{-2}\log m)}$ such that with high probability
\begin{equation}
\label{eq:SpielmanSrivastava}
\forall u,v\in V \quad (1 - \varepsilon) \Reff(u,v) \le \cR(u,v) \le (1 + \varepsilon) \Reff(u,v),
\end{equation}
and each value of $\cR(u,v)$ can be computed in $O(\varepsilon^{-2}\log m)$ time.
\end{theorem}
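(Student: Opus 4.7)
The plan is to reduce the problem to a standard Euclidean dimensionality reduction and then invoke the Johnson--Lindenstrauss lemma. The starting point is the classical identity
\begin{equation*}
\Reff(u,v) = (\chi_u - \chi_v)^T L^+ (\chi_u - \chi_v) = \|B L^+ (\chi_u - \chi_v)\|_2^2,
\end{equation*}
where $L$ is the Laplacian of $G$, $L^+$ is its Moore--Penrose pseudoinverse, $B \in \mathbb{R}^{m \times n}$ is the signed edge--vertex incidence matrix (so $L = B^T B$), and $\chi_v$ is the indicator of $v$. Thus, if we define $\Pi := B L^+ \in \mathbb{R}^{m \times n}$ and set $p_v := \Pi \chi_v \in \mathbb{R}^m$, then $\Reff(u,v) = \|p_u - p_v\|_2^2$ is exactly the squared Euclidean distance between the corresponding ``electrical embedding'' points. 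This reframes the task as producing a lower-dimensional embedding of $n$ points in $\mathbb{R}^m$ that approximately preserves their pairwise squared distances.

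Next, I would apply Johnson--Lindenstrauss to the $n$ vectors $\{p_v\}_{v \in V}$. Let $Q \in \mathbb{R}^{k \times m}$ be a random matrix with i.i.d.\ $\pm 1/\sqrt{k}$ entries (Achlioptas) for $k = O(\varepsilon^{-2} \log m)$, and define $Z := Q \Pi = Q B L^+ \in \mathbb{R}^{k \times n}$. The embedding is then $\cR(u,v) := \|Z\chi_u - Z\chi_v\|_2^2$. By the standard JL analysis applied to the $n$ vectors $p_v$ (which automatically preserves the $\binom{n}{2}$ pairwise differences by linearity), with probability at least $1 - m^{-c}$ we simultaneously have
\begin{equation*}
(1-\varepsilon) \Reff(u,v) \le \cR(u,v) \le (1+\varepsilon) \Reff(u,v) \quad \text{for all } u,v \in V.
\end{equation*}

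For the running-time claim, the construction of $Z$ is done in two stages. First compute $Y := Q B$ explicitly; since $B$ has only $2m$ nonzeros this takes $O(km) = \tO(\varepsilon^{-2} m)$ time. Then obtain $Z$ by applying $L^+$ to each of the $k$ rows of $Y$ on the right, i.e., solving $k$ Laplacian linear systems using the fast (approximate) solvers of \cite{SpielmanT03,KoutisMP10,KelnerOSZ13}, each in $\tO(m)$ time, for a total of $\tO(km) = \tO(m)$. The matrix $Z$ is stored explicitly, so each column $Z\chi_v$ is an $\mathbb{R}^k$ vector available in $O(1)$ access per coordinate; a query $\cR(u,v)$ then costs $O(k) = O(\varepsilon^{-2} \log m)$ time to read off the two columns and compute the squared distance.

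The main technical obstacle is that the Laplacian solvers are only approximate, so strictly speaking we obtain $\tilde Z \approx Z$ rather than $Z$ itself; this has to be combined with the JL distortion without destroying the clean $(1\pm\varepsilon)$ multiplicative bound. I would handle this by running the solvers to accuracy $1/\mathrm{poly}(m)$, which by standard perturbation arguments for $L^+$ (using that the relevant vectors have polynomially bounded norms in terms of $m$ and the edge weights) translates into an extra multiplicative distortion of $1 + 1/\mathrm{poly}(m)$ in each $\cR(u,v)$; this is then absorbed into $\varepsilon$ by a constant-factor rescaling of the target accuracy before invoking JL. A final union bound over the $\binom{n}{2}$ pairs gives the ``with high probability'' guarantee.
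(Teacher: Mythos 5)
Your proof is correct and is exactly the argument of Spielman and Srivastava, which is what the paper invokes by citing \cite{SpielmanS08} without reproducing the proof: the electrical-embedding identity $\Reff(u,v)=\|BL^+(\chi_u-\chi_v)\|_2^2$, a Johnson--Lindenstrauss sketch $Q$ applied to $BL^+$, and $O(\varepsilon^{-2}\log m)$ approximate Laplacian solves to form $Z=QBL^+$, with the solver error absorbed into the $(1\pm\varepsilon)$ guarantee. The only cosmetic discrepancy is that the construction time is really $\tO(m\varepsilon^{-2})$ rather than $\tO(m)$; since the paper only ever uses $\varepsilon=1/2$, this does not matter.
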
 
We note that the way we employ the above theorem in our algorithm ensures that even when the guarantee \eqref{eq:SpielmanSrivastava} fails, this does not affect the algorithm's correctness -- just the running time. So, as this guarantee holds with high probability, we can assume from now on that \eqref{eq:SpielmanSrivastava} always holds.

\subsection{Bounding the Cover Time via Effective Resistance}
\label{sec:cover_time_and_effective_resistance}

In Section \ref{sec:effective_resistance}, we have seen the tight connection between effective resistance and the commute time of random walks. As it turns out, drawing on this connection we can relate effective resistance to another key characteristic of random walks: the cover time. To describe this, let us recall the classic results of \cite{AleliunasKLLR79} (see also \cite{Lovasz93}) that the cover time $Cov(G)$ of a (connected) graph with $m$ edges can be bounded from above by $O(mn)$ as well as by $\tO(m\diamdist(G))$, where $\diamdist(G)$ is the graph-distance diameter.

It turns out that by working with the effective resistance-based -- instead of the more traditional graph distance-based -- notion of diameter, one is able to obtain a much tighter characterization of the cover time by replacing $\diamdist(G)$ by $\diameff(G)$ in the bound. It is tighter, as it is not hard to see that the effective resistance distance is always upper-bounded by the graph distance and thus, in particular, $\diameff(G)\leq\diamdist(G)$ for any $G$. Moreover, it is also the case that the resulting upper bound on the cover time is close to being a lower bound. The following lemma, proved in Appendix~\ref{app:cover_weighted}, demonstrates both bounds. It stems from the work of Matthews \cite{Matthews88}. (Also, see \cite{KahnKLV00,DingLP12} for even tighter characterizations of the cover time.) 

\begin{lemma}\label{lem:cover_weighted}
For any graph $G$ we have $m \diameff(G)\leq Cov(G)\leq O(m \diameff(G)\log m)$.
\end{lemma}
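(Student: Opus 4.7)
The plan is to derive both inequalities from the identity $\kappa(u,v) = 2m\,\Reff(u,v)$ stated in Section~\ref{sec:effective_resistance}, combined with standard relationships between hitting times, commute times, and cover time. Let $H(u,v)$ denote the expected hitting time from $u$ to $v$, so that $\kappa(u,v) = H(u,v) + H(v,u)$, and let $Cov_s(G)$ denote the expected cover time starting from $s$, so that $Cov(G) = \max_s Cov_s(G)$.

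For the lower bound, I would simply observe that covering the graph from $s$ requires in particular hitting every vertex, so $Cov_s(G) \ge \max_v H(s,v)$. Taking the worst case over the starting vertex, $Cov(G) \ge \max_{u,v} H(u,v) \ge \tfrac12 \max_{u,v} \kappa(u,v)$, since the two directed hitting times in a pair sum to the commute time. Plugging in $\kappa(u,v) = 2m\,\Reff(u,v)$ yields $Cov(G) \ge m \cdot \max_{u,v} \Reff(u,v) = m\,\diameff(G)$.

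For the upper bound, the natural tool is Matthews' theorem, which states that for any graph on $n$ vertices, $Cov_s(G) \le H_{\max} \cdot (1 + \tfrac12 + \cdots + \tfrac1n)$ where $H_{\max} = \max_{u,v} H(u,v)$. Since $H(u,v) \le \kappa(u,v) = 2m\,\Reff(u,v) \le 2m\,\diameff(G)$ for every pair, and since the harmonic sum is $O(\log n) = O(\log m)$ (using $n \le m+1$ because $G$ is connected), this immediately gives $Cov(G) \le O(m\,\diameff(G) \log m)$.

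I do not expect any significant obstacle here: once one has the commute time formula $\kappa(u,v) = 2m\,\Reff(u,v)$ from \cite{ChandraRRS89} and Matthews' inequality \cite{Matthews88} as black boxes, both directions are one-line deductions. The only care needed is to take the worst-case start vertex in the correct place (for the lower bound, the worst-case $s$ lets us drop the distinction between $H(u,v)$ and $H(v,u)$ via the commute-time bound; for the upper bound, Matthews' theorem works starting from any vertex, so the bound transfers to $Cov(G)$ directly).
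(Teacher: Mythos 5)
Your lower bound argument is identical to the paper's: $Cov(G)\ge \max_{u,v}H(u,v)\ge \frac12\max_{u,v}\kappa(u,v)=m\,\diameff(G)$, using $\kappa(u,v)=2m\,\Reff(u,v)$. For the upper bound you invoke Matthews' inequality directly, whereas the paper obtains it as the special case $V'=V$, $E'=E$ of its Lemma~\ref{lem:ultimate}, which in turn is proved by the Lov\'asz-style ``cover half the remaining targets per phase'' decomposition (Lemma~\ref{lem:LovaszCoverLog}) --- a Matthews-type argument in disguise. Both yield the same $O(m\,\diameff(G)\log m)$ bound. The only substantive difference is that the paper deliberately routes through the more general Lemma~\ref{lem:ultimate} (which restricts to a vertex subset $V'$ and counts only traversals of a chosen edge set $E'$) because that generalization is needed later to charge the cost of its shortcut random-walk simulation; your direct Matthews application is simpler but does not produce that reusable tool. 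Your proof is correct, including the bookkeeping $n\le m+1$ to convert $\log n$ into $\log m$.
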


In Section~\ref{sec:random_walks_restricted_to_subgraphs} we generalize the above lemma to deal with cover times of {\em subgraphs}. This may be of independent interest.

\subsection{Ball-growing Partitioning} \label{sec:ball_growing}

An important primitive in our algorithm is the following simple but powerful graph partitioning scheme based on the ball-growing technique of Leighton and Rao \cite{LeightonR99}. It is in fact a simplified version of the $(\phi,\gamma)$-decomposition used in~\cite{KelnerM09}.

Let us first introduce the following definition.
\begin{definition}\label{def:ball_decomp}
We say that a partition of a graph $G$ with $m$ edges into disjoint components $V_1,\ldots,V_k$ is a $(\phi, \gamma)$-decomposition of $G$ iff
\vspace{-7pt}
\begin{enumerate}[(1)]\addtolength{\itemsep}{-.5\baselineskip}
\item the graph distance--based diameter $\diamdist(V_i)$ of each $V_i$ is at most $\gamma$,
\item the total number of boundary edges of $V_1,\ldots, V_k$ is at most $\phi \cdot m$.
\end{enumerate}
\end{definition}
As it turns out -- see Appendix~\ref{app:ball_growing} -- we can always find the following $(\phi,\gamma)$-decomposition fast.
\begin{lemma}\label{lem:ball_growing}
For every $\phi=o(1)$ and every graph $G$ with $m$ edges, we can find a $\left(\phi, \frac{2\log (m+1)}{\phi}\right)$-decomposition in $O(m+n)$ time.
\end{lemma}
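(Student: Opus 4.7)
The plan is to execute the classical Leighton--Rao ball-growing scheme iteratively: repeatedly pick an arbitrary vertex in the current residual graph, perform a BFS and grow the ball around it until its boundary becomes small compared to its interior, cut the ball off as the next component $V_i$, and recurse on the remaining vertices until the graph is exhausted.

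For the stopping rule, I will handle the degenerate case of an initial single-vertex ball by working with $\mu_r := |E(B_r)| + 1$, where $B_r$ is the BFS ball of radius $r$, and stopping at the smallest $r$ with $|\partial B_r| \le \phi \mu_r$. The key structural observation is that every edge of $\partial B_r$ becomes an interior edge at radius $r+1$, so $\mu_{r+1} \ge \mu_r + |\partial B_r|$. Hence if the stopping condition fails at radius $r$, we get $\mu_{r+1} > (1+\phi)\mu_r$, and inductively $\mu_r > (1+\phi)^r$. Since trivially $\mu_r \le m+1$, the procedure must terminate by some $r \le \log_{1+\phi}(m+1) \le 2\log(m+1)/\phi$, where the last step uses $\log(1+\phi) \ge \phi/2$ for $\phi \in (0,1]$. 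This establishes the required diameter bound on each $V_i$.

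For the boundary count, summing the stopping condition over all produced components gives
\[
\sum_i |\partial V_i| \;\le\; \phi \sum_i (|E(V_i)|+1) \;\le\; \phi(m+k),
\]
where $k$ is the number of components. Since each crossing edge contributes to exactly two boundary sets and $G$ is connected (so $k \le n \le m+1$), the number of distinct boundary edges is at most $\phi(m+k)/2 = O(\phi m)$. The stated bound of $\phi m$ is then obtained by running the procedure with $\phi$ replaced by $\phi/c$ for an appropriate absolute constant $c$, which only inflates the diameter by the same constant and is absorbed into the $2\log(m+1)/\phi$ expression.

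For the running time, I maintain running counters of $|\partial B_r|$ and $|E(B_r)|$ updated incrementally as each new vertex is absorbed into the ball, so the stopping test takes amortized $O(1)$ per edge inspected. Each edge of $G$ is touched only during the single BFS in which it is classified as interior or boundary of the cut-off component, and is never re-examined afterwards, yielding $O(m+n)$ total time. The main subtlety I anticipate is guaranteeing that discarding a completed ball and restarting the BFS on the residual graph does not incur extra work proportional to the ball size; standard adjacency-list structures with lazy deletion of already-processed vertices handle this cleanly.
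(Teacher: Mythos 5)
Your proof follows essentially the same ball-growing route as the paper, so the overall approach is correct. Two small remarks on the bookkeeping are worth flagging.

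First, the paper sidesteps the need to rescale $\phi$ by using the stopping rule $|\partial D| \le \phi\,|E(D)|$ directly (no $+1$). Because the loop condition $|\partial D| > \phi\,|E(D)|$ is vacuously true when $E(D)=\emptyset$, the first expansion always happens for free, and then summing $|\partial V_i| \le \phi\,|E(V_i)|$ over the pieces (whose interiors are disjoint) gives exactly $\phi m$, not $O(\phi m)$. Your $\mu_r := |E(B_r)|+1$ device makes the growth argument cleaner and handles the degenerate single-vertex ball transparently, but it introduces the additive $+k$ term and thus the constant-factor rescaling of $\phi$. That is a legitimate trade-off, but it is worth knowing the cleaner formulation exists.

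Second, the sentence ``each crossing edge contributes to exactly two boundary sets'' is the wrong accounting for the quantity you just bounded. The stopping condition controls $|\partial V_i|$ in the \emph{residual} graph at the moment $V_i$ is cut off, and in that sum each inter-piece edge is counted \emph{once} (by the piece removed first), not twice. Interpreting $\partial V_i$ as the original-graph boundary would justify the factor of $2$, but then the stopping condition no longer bounds $|\partial V_i|$. Fortunately the error is in the conservative direction and is swallowed by your rescaling of $\phi$; the right statement is simply that the number of boundary edges equals $\sum_i |\partial_{\mathrm{res}} V_i| \le \phi(m+k) = O(\phi m)$, with no division by $2$. Everything else, including the $O(m+n)$ running time via amortized charging to edges at the moment they are classified, matches the paper's argument.
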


\subsection{Approximating Minimum $s$-$t$ Cuts} \label{sec:approximating_minimum_cuts}

One of the key parts of our algorithm relies on being able to quickly identify an approximate minimum $s$-$t$ cut in an undirected graph. To do that we use the recent fast $(1+\varepsilon)$-approximate maximum $s$-$t$ flow results of Sherman \cite{Sherman13}, Kelner et al. \cite{KelnerLOS14} and Peng \cite{Peng14}.


\begin{theorem}
For any graph $G$, any two vertices $s$ and $t$, and any $\varepsilon>0$, one can obtain a $(1+\varepsilon)$-approximation to the minimum $s$-$t$ cut problem in $G$ in $\tO(m\varepsilon^{-2})$ time.
\end{theorem}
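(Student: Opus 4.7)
The plan is to reduce this to the cited $(1+\varepsilon)$-approximate maximum $s$-$t$ flow algorithms via max-flow/min-cut duality. By the max-flow min-cut theorem, the value $F^*$ of the maximum $s$-$t$ flow equals the value $C^*$ of the minimum $s$-$t$ cut, so a $(1+\varepsilon)$-approximation to $F^*$ gives a $(1+\varepsilon)$-approximation to $C^*$. The work I would invoke produces a flow $f$ with $|f| \geq (1-\varepsilon')F^*$ in $\tO(m(\varepsilon')^{-2})$ time; setting $\varepsilon'=\Theta(\varepsilon)$ preserves the claimed runtime.

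The remaining step is to actually extract a cut (rather than just a numerical estimate of its value) from this approximate flow. The plan is to look at the residual graph $G_f$ and identify the set $S$ of vertices reachable from $s$ via edges with sufficient residual capacity; the boundary $\partial S$ is an $s$-$t$ cut whose capacity is close to $|f|$, hence close to $C^*$. In fact, the algorithms of Sherman and Kelner et al. operate in a primal-dual fashion and already output an approximate cut together with the approximate flow, so this extraction can be read off directly from the algorithm's output without any additional asymptotic overhead.

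The main obstacle is controlling the slack that the approximation introduces into the extracted cut. With an exact max flow, the residual graph cleanly disconnects $s$ from $t$ along a min cut; with a $(1+\varepsilon')$-approximate flow, some edges crossing our candidate cut may carry flow strictly less than their capacity, so $\mathrm{cap}(\partial S)$ can exceed $|f|$ by up to an additive $\varepsilon' F^*$ term if one is not careful about the reachability threshold. I would handle this by choosing $\varepsilon' = \varepsilon/c$ for a suitable constant $c$ and arguing, through a standard layer-counting/averaging argument in the residual graph, that among the concentric level sets of reachable vertices one must contain a cut whose capacity is at most $(1+\varepsilon)|f| \leq (1+\varepsilon)C^*$. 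The extra constant factors are absorbed into the $\tO(m\varepsilon^{-2})$ bound.
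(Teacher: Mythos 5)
The paper does not prove this theorem; it is stated as a direct citation to Sherman, Kelner et al., and Peng, and the substance of the claim is exactly that those works output a cut, not merely a flow value. Your key observation---that the cited algorithms are primal-dual and return an approximate cut alongside the approximate flow---is correct and is in fact the whole justification: the dual certificate is a potential function, and an integral cut is extracted from it by a standard sweep over level sets with negligible overhead. That one-line appeal to the dual output suffices.

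The fallback argument via the residual graph $G_f$, however, does not work as sketched. The ``layer-counting/averaging'' over concentric level sets fails because the quantity being averaged is not controlled by $\varepsilon F^*$: if $S_i$ is the set of vertices at residual BFS-distance less than $i$ from $s$, each forward residual edge crosses exactly one $\partial S_i$, so $\sum_i \mathrm{rescap}(\partial S_i)$ is bounded only by the total residual capacity, which can be $\Theta(m)$ rather than $\varepsilon F^*$. A three-edge path with capacities $10,1,10$ carrying a $(1-\varepsilon)$-fraction of the max flow already has two of its three level cuts with residual capacity $9+\varepsilon$. Your other variant---thresholding residual capacity and taking $S_\tau$, the set reachable via edges with residual exceeding $\tau$---also fails: one can make $\mathrm{rescap}(\partial S_\tau)$ exceed the residual min cut by an arbitrarily large factor for every admissible $\tau$ by attaching many low-residual parallel edges behind a single high-residual edge. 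In general, recovering a $(1+\varepsilon)$-cut from a $(1-\varepsilon)$-flow alone, without the dual certificate, is essentially the original problem; the cited algorithms avoid this precisely because the dual is computed for free along the way, which is the observation you should lean on and the residual-graph sketch should be dropped.
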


\section{The Overview of the Algorithm} \label{sec:overview}

The starting point of our algorithm is the random walk--based approach proposed by Theorem \ref{thm:rand_tree_via_rand_walk} and the Kelner-Mądry technique \cite{KelnerM09} of speeding it up. 

Recall that the bottleneck of the vanilla random walk--based approach is that performing a step-by-step simulation of a covering random walk can take $\Omega(mn)$ time. To circumvent this problem,  Kelner-Mądry observed that we actually never need to simulate this covering random walk in full. We just need to be able to recover all its first-visit edges $e_v$. As it turns out, the latter can be done more efficiently by considering a certain ``shortcut'' version of that walk and simulating this shortcutting instead.

Roughly speaking, the design of this Kelner-Mądry shortcutting boils down to finding a partition of the graph $G$ into regions that have diameter at most $\gamma$ each while cutting at most a $\phi$-fraction of edges, for some choice of parameters $\gamma$ and $\phi$. (One can obtain such a partition using the standard ball-growing technique of Leighton and Rao \cite{LeightonR99}.) Then, one starts a step-by-step simulation of a covering random walk in $G$. The crucial difference is that whenever some region is fully covered by the walk, one starts shortcutting any subsequent visits to it. More precisely, upon future revisits to such an already-covered region, one never simulates the walk inside it anymore, but instead immediately jumps out of it according to an appropriate exit distribution. As \cite{KelnerM09} shows, these exit distributions can be computed (approximately) using fast Laplacian system solvers \cite{SpielmanT03,SpielmanT04,KoutisMP10,KoutisMP11,KelnerOSZ13}. Further, the resulting overall (expected) time needed to perform the whole shortcut simulation is at most $\tO(m(\gamma + \phi m))=\tO(m^{3/2})$ (with the optimal choice of parameters being $\gamma=\tO(\sqrt{m})$ and $\phi=\tO(1/\sqrt{m})$).

The above technique can be further refined to yield a total running time of $\tO(mn^{1/2})$, which is an improvement for non-sparse graphs (see \cite{KelnerM09} for details). However, $\Omega(n^{3/2})$ time turns out to be a barrier for this approach, even in the sparse-graph regime. 

\begin{figure}[ht]
\centering
\vspace{8pt}
\includegraphics[width=0.5\textwidth]{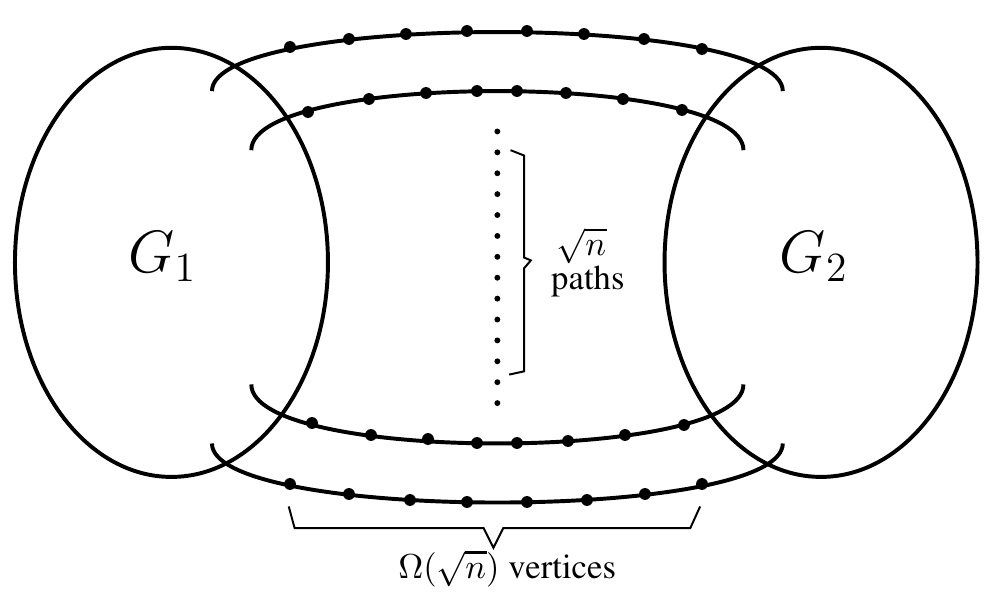}
\vspace{8pt}
\caption{An example of a graph on which Kelner-Mądry algorithm runs in $\Omega(n^{\frac{3}{2}})$ time. It consists of two expanders $G_1$ and $G_2$, each having $\Omega(n)$ vertices, connected by $\sqrt{n}$ disjoint paths of length $\Omega(\sqrt{n})$ each.} 
\label{fig:barrier_example}
\end{figure}

To understand why this is the case, consider the graph depicted in Figure \ref{fig:barrier_example}. This graph has an $O(\sqrt{n})$ diameter. Therefore, if one is interested in generating a random spanning tree only in $\tO(n^{3/2})$ running time, one can simply simulate its covering random walk faithfully -- by a well-known result \cite{AleliunasKLLR79}, this walk will have an expected length of $\tO(n^{3/2})$.

However, if one wanted to use the Kelner-Mądry approach to get an improved running time here, one hits a bottleneck. The difficulty is that the graph cannot be partitioned in a way that makes the diameter $\gamma$ of each piece, as well as the number $\phi m$ of edges cut, both be $o(\sqrt{n})$. (In particular, the two expander-like parts $G_1$ and $G_2$ of this graph are especially problematic due to their size and being far away from each other as well as having large boundaries.) So, the running time of the Kelner-Mądry algorithm is always $\Omega(m(\gamma+\phi m))=\Omega(n^{3/2})$ here.

\subsection{Breaking the $\Omega(n^{3/2})$ Barrier}

Our improved $\tO(m^{4/3})$-time algorithm that, in particular, breaks this $\Omega(n^{3/2})$ running time barrier, stems from a number of new observations. We present them here in the context of the example graph depicted in Figure \ref{fig:barrier_example}. This will serve as an informal motivation for the general techniques that we introduce in the remainder of the paper.

The key observation to make is that random walks are a spectral notion, not a combinatorial one. Therefore, basing our analysis on the effective resistance metric -- instead of applying the traditional graph distance--based optics -- turns out to be a better choice. 

In particular, if one considers our graph in Figure \ref{fig:barrier_example} from this perspective, one can notice that the effective resistance--based geometry of that graph differs from its graph distance--based geometry in a very crucial way. Namely, the effective resistance--based diameter of the whole graph is still $\Omega(n^{1/2})$. (This can be seen by looking at the middle vertices of any two different paths in $G$.) But, the two expander-like parts $G_1$ and $G_2$ -- which were far away from each other in the graph distance--based sense -- are actually very close from the point of view of effective resistance distance! (This difference stems from the fact that the latter notion depends not only on the length of the shortest path but also on the number of disjoint paths.) 

As a result, the union $G_1 \cup G_2$ of these two parts forms a region that can be covered relatively efficiently by a random walk. More precisely, as we show in Lemma~\ref{lem:ultimate}, a random walk in $G$ needs only a small number of steps {inside} $G_1\cup G_2$ before covering this region in full. 

Now, to turn this observation into a fast routine for random spanning tree generation, we need to cope with two issues. Firstly, Lemma~\ref{lem:ultimate} bounds only the number of steps {\em inside} the region $G_1\cup G_2$. Secondly, it measures the number of steps needed before the region is covered -- the number of steps needed to cover the whole graph $G$ might (and actually will) be much larger. 

Dealing with the former issue is relatively straightforward in our example. After all, the outside of our region in $G$ is just a collection of paths. So, it is easy to partition it into small sets with relatively sparse boundaries and then use the shortcutting technique of Kelner-Mądry to simulate the random walk over these sets efficiently. 

To deal with the latter issue, we abandon the original desire of sampling the whole random spanning tree ``in one shot'', i.e., from a trajectory of a single covering random walk. Instead, we first sample only a part of the random spanning tree: its intersection with our region $G_1\cup G_2$. This corresponds to simulating a random walk that covers $G_1\cup G_2$ (but not necessarily the whole graph $G$), and we already know how to do this fast. 

Once this intersection of our region and the random spanning tree is sampled, one can simply encode this sample in $G$ by contracting the edges of our region that were chosen and removing the remaining ones. As $G_1\cup G_2$ constituted a significant fraction of the edges of $G$ (and was also the most troublesome structure in $G$), this sampling makes large progress and thus allows us to simply recurse on the updated (and smaller) version of $G$.  

Now, even though the above discussion was focused on our example graph, it actually already covers almost all the crucial insights behind our general algorithm. 

In particular, at a high level, the main goal of our algorithm is exactly to try to identify in the input graph $G$ some large region that, on one hand, has a small effective resistance diameter and, on the other hand, has an exterior ``shatterable'' enough to enable efficient Kelner-Mądry-type shortcutting. Once such a region is identified, we can make sufficient progress by sampling from its intersection with the random spanning tree, as described above. (Observe that one of the benefits of this approach is that we never need to shortcut the random walk over this region.)

At this point, the only major missing element is a method of dealing with situations when such a large small-radius region with ``shatterable'' exterior does not exist. (For instance, such a situation arises when we remove all but one path connecting $G_1$ and $G_2$ in our example graph.\footnote{In this case, $G_1$ and $G_2$ are no longer close in the effective resistance distance, and neither $G_1$ nor $G_2$ can be partitioned into small sets with sparse boundary.}) To cope with such cases, we establish a relationship between the effective resistance distance and the cut structure of the graph. Specifically, we prove that in such cases there has to exist a small cut that separates two large parts of the graph. (See Lemma~\ref{lem:good_cut} in Section~\ref{sec:parititioning} for details.) Therefore, in lieu of advancing our sampling, we can make progress by refining our grasp on the cut structure of the input graph. 

In the sequel, we proceed to making the above ideas rigorous and introducing the key technical notions behind our algorithm.

\subsection{Subdivisions, Overlays, and Covering Families}\label{sec:covering_families}

The graph-theoretic notions that will be key to our algorithm are subdivisions and covering families. A {\em subdivision} $\cC$ is any collection of { disjoint} subsets of vertices. Then, a {\em covering family} $\ocC=(\cC_0,\ldots, \cC_{\ell})$ is any collection of $\ell+1$ subdivisions $\cC_0,\ldots, \cC_{\ell}$ -- where $\ell := \frac{1}{3} \log m$ --  such that (1) $\cC_0=\{V\}$; and (2) for each $0\leq i\leq \ell$, and any set $C\subseteq V$ in $\cC_i$, the total number of edges in the interior $E(C)$ of $C$ is at most ${m}/{2^i}$, i.e., for each $0\leq i \leq \ell$ and all $C\in \cC_i$,
\begin{equation}
\label{eq:def_size_bound_covering_family}
|E(C)| \leq \frac{m}{2^i},
\end{equation}
where $m$ is the number of edges of our input graph $G$. (Condition (1) in the above definition is just a convention that we employ to ensure that each vertex always belongs to at least one subdivision in the covering family. We also assume $\ell$ to be integral for simplicity.) It is important to observe here that even though sets contained in the same subdivision $\cC_i$ need to be disjoint, sets belonging to different subdivisions can (and often will) overlap.

Now, the key reason for our use of covering families is that, even though they do not directly constitute a partition of the graph $G$, they still provide an implicit -- and much more convenient to deal with from the computational point of view -- representation of certain partitions of $G$. These partitions will play an important role in our algorithm. To define them formally, consider a covering family $\ocC=(\cC_0,\ldots, \cC_{\ell})$. Then, the {\em overlay $\cP(\ocC)$ of $\ocC$} is the partition of $G$ resulting from superimposing all the subdivisions $\cC_i$ of $\ocC$ on top of each other. That is, $\cP(\ocC)$ is a partition of $G$ in which, for any two vertices $v$ and $u$ of $G$, these vertices are in the same component of $\cP(\ocC)$ iff they are not separated by any set in $\ocC$, i.e., for every set $C$ in each subdivision $\cC_i$ of $\ocC$, it is the case that either $u$ and $v$ both belong to $C$ or both are not a part of $C$.

\subsection{Shattering and $\alpha$-bounded Covering Families}\label{sec:shattering_alpha_bounded_families}

Our goal will be to obtain covering families $\ocC$ that, on one hand, have their overlay $\cP(\ocC)$ partition the graph $G$ into components of relatively small effective resistance diameter and, on the other hand, have a relatively small number of boundary edges in each of their subdivisions.

To make this precise, let us fix some covering family $\ocC=(\cC_0,\ldots, \cC_{\ell})$. We will call $\ocC$ {\em shattering} iff all the components in its overlay $\cP(\ocC)$ have effective resistance diameter at most $\Dstar:=56 m^{1/3}$. Also, let us call $\ocC$ {\em $\alpha$-bounded}, for some $\alpha>0$, iff for each $0\leq i\leq \ell$, the number of $i$-based boundary edges in $\ocC$ is at most $\alpha m^{1/3} 2^i$. Here, an edge $e$ is an {\em $i$-based boundary edge} of $\ocC$, for some $0\leq i\leq \ell$, iff $e$ belongs to the boundary $\partial C$ of some set $C\in \cC_i$, but is not part of the boundary of any set $C'$ in $\cC_{i'}$ with $i'<i$.

Before proceeding further, let us observe in passing that the fact that any boundary edge in $\cC_i$ has to be an $i'$-based boundary edge for some $i'\leq i$ makes the $\alpha$-boundedness property also give us a bound on the total number of all boundary edges of each subdivision $\cC_i$. Namely, the following simple fact holds.

\begin{fact}
\label{fa:alpha_boundness_and_boundary_edges}
If $\ocC=(\cC_0,\ldots, \cC_{\ell})$ is an $\alpha$-bounded covering family then, for each $0\leq i\leq \ell$, the total number of boundary edges of the subdivision $\cC_i$ is at most $2\alpha m^{1/3}2^{i}$. Furthermore, the total number of all boundary edges in the whole $\ocC$ is at most $2\alpha m^{1/3}2^{\ell}=O(\alpha m^{2/3})$.
\end{fact}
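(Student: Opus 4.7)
The statement follows in a routine manner from the definitions once one observes that the $i'$-based boundary edges form a partition of the full set of boundary edges of $\ocC$. The plan is simply to carry out this bookkeeping and apply a geometric-series bound.

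First I would spell out the key observation: any edge $e$ that lies in $\partial C$ for some $C\in\cC_i$ admits a unique minimum index $i^*(e)\in\{0,1,\ldots,i\}$ for which $e\in\partial C'$ for some $C'\in\cC_{i^*(e)}$, and by definition $e$ is then $i^*(e)$-based. In particular, for any fixed $i$, the boundary edges of the subdivision $\cC_i$ decompose as a disjoint union, over $i'\in\{0,1,\ldots,i\}$, of the $i'$-based boundary edges. Applying the $\alpha$-boundedness assumption to each piece and summing yields
\[
|\partial \cC_i| \;\leq\; \sum_{i'=0}^{i} \alpha m^{1/3}\, 2^{i'} \;\leq\; \alpha m^{1/3}\cdot 2^{i+1} \;=\; 2\alpha m^{1/3}\, 2^{i},
\]
which is the first claimed bound.

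For the second bound, the total set of boundary edges across all subdivisions of $\ocC$ is exactly the disjoint union, over $i'\in\{0,\ldots,\ell\}$, of the $i'$-based boundary edges, since every such edge has a unique minimum-index class. Summing again gives
\[
\sum_{i'=0}^{\ell} \alpha m^{1/3}\, 2^{i'} \;\leq\; 2\alpha m^{1/3}\, 2^{\ell},
\]
and substituting $\ell = \tfrac{1}{3}\log m$, so that $2^{\ell}=m^{1/3}$, yields the advertised $O(\alpha m^{2/3})$.

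There is no real obstacle here: the content of the fact is pure bookkeeping of the definitions combined with the identity $\sum_{i'\leq i} 2^{i'}\leq 2^{i+1}$. The one potential subtlety is that a single edge may lie in the boundaries of sets belonging to several different subdivisions; this causes no trouble because, in the first bound, we overcount benignly via the union bound over $i'\leq i$, while in the second bound we use only that each boundary edge falls into exactly one minimum-index class.
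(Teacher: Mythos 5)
Your proof is correct and follows exactly the route the paper has in mind: it points out that every boundary edge of $\cC_i$ is an $i'$-based boundary edge for a unique minimal $i'\le i$, then sums the $\alpha$-boundedness bounds geometrically. The paper states this observation in prose and leaves the bookkeeping implicit; you have simply made it explicit, including the correct caveat that the union over $i'\le i$ may overcount for the first bound (harmlessly) while being exact for the second.
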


Now, in the language of the above definitions, our interest will be in finding covering families that are shattering and $\alpha$-bounded for some $\alpha$ being polylogarithmic in $m$.\footnote{The exact value of $\alpha$ is irrelevant for us -- we treat it as a factor hidden under $\tO(\cdot)$.} As we show in Section \ref{sec:parititioning}, such covering families can indeed be found efficiently. In particular, we prove there the following lemma.

\begin{lemma}
\label{lem:partitioning_main}
For any $\alpha>0$ and any $\alpha$-bounded covering family $\ocC=(\cC_0,\ldots, \cC_{\ell})$, we can, in $\tO(m^{4/3})$ time, extend this $\ocC$ -- by only adding new sets to its  subdivisions or subdividing the existing ones -- to a covering family $\ocC'=(\cC_0',\ldots, \cC_{\ell}')$ that is shattering and $\alpha'$-bounded with $\alpha'=\alpha+\alpha^*$ and $\alpha^*$ being a fixed parameter that is polylogarithmic in $m$. 
\end{lemma}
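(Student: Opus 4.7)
The strategy is a greedy refinement procedure: while there is a component $C$ in the current overlay $\cP(\ocC)$ with $\diameff(C) > \Dstar$, extract from $C$ a new set to add to an appropriate subdivision, and iterate until the overlay is shattering. The incremental nature of the modifications (only adding sets to $\cC_i$'s or subdividing existing ones) guarantees that the output is still a covering family and that all sets of the original $\ocC$ persist.

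For efficient implementation, I would precompute the Spielman-Srivastava embedding $\cR$ (Theorem \ref{thm:SpielmanSrivastava}) once in $\tO(m)$ time, after which any approximate effective resistance query costs $O(\log m)$. The refinement of an offending component $C$ then proceeds by a dichotomy. Let $i$ be the largest index with $|E(C)| \le m/2^i$, so that any set contained in $C$ is eligible for $\cC_i$. First I would attempt an effective-resistance ball-growing step inside $C$: starting from an arbitrary $v \in C$, sweep a radius $r$ in $[0, \Dstar/2]$ and, using $\cR$, locate a value $r^*$ at which the boundary of $B_{r^*}(v) \cap C$ is an $O(\log m / \Dstar)$ fraction of its interior. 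Such a radius is produced by the standard Leighton--Rao telescoping argument whenever the interior is large enough; we add $B_{r^*}(v) \cap C$ to the deepest subdivision its size permits and recurse on $C$ minus this ball.

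If no such $r^*$ exists, then every ball around $v$ of effective-resistance radius up to $\Dstar/2$ is contained in a small fraction of $E(C)$, while $\diameff(C) > \Dstar$ forces some vertex at effective-resistance distance $> \Dstar/2$ from $v$. This is precisely the setting in which Lemma \ref{lem:good_cut} produces a small, balanced $s$-$t$ cut in $C$. I would compute such a cut using the near-linear $(1+\varepsilon)$-approximate min-cut algorithm of Section \ref{sec:approximating_minimum_cuts}, split $C$ along it, and register the two pieces in the deepest subdivisions their sizes permit. Few new boundary edges are introduced (by the small-cut guarantee) and $|E(C)|$ drops multiplicatively (by the balance guarantee), which both caps the number of cuts and makes them amenable to charging.

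The main obstacle is a simultaneous charging argument controlling the per-level boundary budget and the runtime. For boundary edges: in the ball-growing branch the sparsity ratio $O(\log m / \Dstar) = \tO(m^{-1/3})$ together with the fact that sets in $\cC_i$ carry at most $m$ total interior edges limits the $i$-based boundary produced at level $i$ to $\tO(m^{2/3})$, which must be shown to split across levels in accordance with the $\alpha^* m^{1/3} 2^i$ per-level budget; this is the delicate combinatorial step, relying on the observation that pieces placed in $\cC_i$ have size $\Theta(m/2^i)$ and hence bounded-ratio boundary. In the cut branch, the small-cut guarantee of Lemma \ref{lem:good_cut} plus balanced charging against $|E(C)|$ caps the total number of cut operations across all levels by $\tO(m^{1/3})$. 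The runtime is then dominated by $\tO(m^{1/3})$ ball-growing sweeps and $\tO(m^{1/3})$ approximate min-cut calls, each costing $\tO(m)$, yielding the claimed $\tO(m^{4/3})$ bound.
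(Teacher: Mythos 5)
Your high-level plan -- precompute the Spielman--Srivastava embedding once, then refine offending components by a ball-growing/small-cut dichotomy driven by effective resistance, with the cut branch justified by Lemma~\ref{lem:good_cut} -- is broadly the right spirit, but the way you combine the two branches has two genuine gaps, both of which the paper's actual construction is specifically engineered to avoid.

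First, the per-level boundary budget does not survive your rule of adding ball-growing pieces to ``the deepest subdivision their size permits.'' Ball-growing gives a boundary-to-interior ratio of $\tO(m^{-1/3})$, so a piece with $|E(S)| \approx m/2^i$ contributes $\tO(m^{2/3}/2^i)$ new boundary edges at level $i$; the budget at level $i$ is $\alpha^* m^{1/3} 2^i$, and the former exceeds the latter by a factor of roughly $m^{1/3}/4^i$, which is polynomially large for all $i$ noticeably below $\ell$. So a sparsity ratio of $\tO(m^{-1/3})$ is only good enough at the top level $\ell$. This is not a ``delicate combinatorial step'' one can hope to patch: it is why the paper never places a ball-growing-derived set at any level below $\ell$. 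Instead, the paper runs graph-distance ball-growing (Lemma~\ref{lem:ball_growing}), adds only the resulting \emph{small} pieces (interior $\le m^{2/3}$) to $\cC_\ell$, and leaves the union $L$ of large pieces untouched whenever they are pairwise close in effective resistance; sets at lower levels $i<\ell$ are created \emph{exclusively} from $O(m^{1/3})$-size cuts, for which $O(m^{1/3}) \cdot 2^{i+1}$ insertions stays inside the $\alpha^* m^{1/3} 2^i$ budget.

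Second, your cut branch lacks a balance guarantee. Applying Lemma~\ref{lem:good_cut} with $U=\{v\}$ and $W$ being a faraway vertex yields a small cut but says nothing about the two sides both being large, and without balance you cannot bound the number of cut calls by $\tO(m^{1/3})$ (nor bound the number of insertions at level $i$ by $2^{i+1}$, which is the crux of Lemma~\ref{lem:number_of_insertions}). The paper manufactures balance by taking $P_1$ and $P_2$ to be two \emph{large} pieces (interiors $> m^{2/3}$) output by graph-distance ball-growing that turn out to be far apart in effective resistance, so $\gamma_{v_1,v_2}(P_1,P_2) > m^{1/3}$ and the resulting set has $|E(S)| > m^{2/3}$, which is what makes the charging go through. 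The paper also needs a nontrivial massaging step (Lemma~\ref{lem:massaging}) to turn the raw cut into a set that is disjoint from the existing sets in $\cC_i$ while retaining the balance and the boundary-edge accounting; your sketch does not address this, and without it the new set may intersect other sets at the same level, which is disallowed.

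Finally, a smaller point: your claim that ``the standard Leighton--Rao telescoping argument'' produces the sparse-boundary radius in the effective-resistance metric requires more care than in graph distance (the continuous sweep does not directly give $|E(B_{r+dr})| \ge |E(B_r)| + |\partial B_r|$), though one can recover the argument by stepping in unit increments since $\Reff(u,v)\le 1$ for a unit edge. The paper sidesteps this entirely by running ball-growing in graph distance, using the fact that graph-distance diameter dominates effective-resistance diameter, and reserving the effective-resistance metric for the test between the large pieces that triggers the cut branch.
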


Note that the claimed existence of a $\log^{O(1)}m$-bounded and shattering covering family follows directly from this lemma once one takes $\ocC$ to be a trivial $0$-bounded covering family with $\cC_0=\{V\}$ and all other $\cC_i$ being empty. However, as we will see shortly, our algorithm will actually need the above, more general, statement.

To try to motivate the definition of a covering family and its $\alpha$-boundedness, let us get a little ahead and say that
each boundary edge of $\cC_i$ will incur a preprocessing time cost proportional to the maximum size of a set in $\cC_i$ (cf. Lemma~\ref{lem:cost_shortcutting_sampling}).
For an $\alpha$-bounded covering family $\ocC$, the sum of these quantities is bounded by $\alpha m^{1/3} 2^i \cdot \frac{m}{2^i} = \alpha m^{4/3}$ for every $i$. Thus, we will be able to support our running time bound by maintaining the trade-off between sizes of sets in $\cC_i$ and the number of boundary edges of $\cC_i$.


\subsection{$F$-Marginals, $(F,F')$-Conditioned Graphs, and Minimum-Age Interiors}\label{sec:marginals_overview}

As we mentioned earlier, our algorithm, instead of trying to sample the whole random spanning tree at once, will perform a sequence of samplings from certain marginal distributions. Namely, for a given subset of edges $F\subseteq E$ of the graph $G$, we will be interested in sampling from the marginal distribution -- that we will call the {\em $F$-marginal of $G$} -- describing the intersection of the random spanning tree with the set $F$. 

Observe that the connection between random walks and sampling random spanning trees provided by Theorem \ref{thm:rand_tree_via_rand_walk} can be straightforwardly extended to give a sampling procedure for any $F$-marginal. We can just simulate a covering random walk $X$ in $G$, starting from an arbitrary vertex $s$, and take as our sample the intersection $E_{F}:= \{ e_v : v \in V \setminus \{s\} \} \cap F$ of $F$ and all the first-visit edges $e_v$ to all the vertices of $G$ other than $s$ in that walk. 

Now, the key observation to make here is that to sample from the $F$-marginal of $G$, not only do we not need to simulate such a covering random walk $X$ in full, but also we do not even need to recover all the first-visit edges $e_v$. All we need is just to be able to extract all the edges $e_v$ that could belong to $E_F$. So, as trivially $E_F\subseteq F$, we do not ever have to keep exact track of the movement of the walk $X$ over edges which have no common endpoint with an edge of $F$.

It turns out that having such ability to choose some $F$ and then mostly ignore the parts of the random walks that do not touch $F$ is very convenient and our algorithm takes advantage of that in a crucial way. Namely, as we show below, in any graph, there is always a choice of an appropriate set $F$ so that sampling from the corresponding $F$-marginal can be performed efficiently and it non-trivially advances our progress on recovering the whole random spanning tree.

To formalize this, let us consider some covering family $\ocC=(\cC_0,\ldots, \cC_{\ell})$ and let $\cP(\ocC)$ be the corresponding overlay. Given some component $P\in \cP(\ocC)$ of that overlay, we say that $P$ has {\em age} equal to $r$, for some $0\leq r\leq \ell$, -- we will denote this by $\age(P)=r$ -- iff
\begin{itemize}\addtolength{\itemsep}{-.5\baselineskip}
\item $P$ is a contained in some set $C$ -- that we will call the {\em age witness $\wit{P}$} of $P$ -- from the subdivision $\cC_r$ of $\ocC$;
\item and $P$ is not contained in any set $C'$ from a subdivision $\cC_{r'}$ with $r'>r$.
\end{itemize}
(Note that due to the definition of the overlay it must be that each component $P$ of that overlay can only be either fully contained in or completely disjoint from each set $C$ in the covering family.) Also, observe that, as in our convention each covering family has $\cC_0=\{V\}$, the age of all components of the overlay is well-defined. 

Finally, given a covering family $\ocC$, we define its {\em age} $r^*$ to be the smallest age of all the components of its overlay $\cP(\ocC)$, and its {\em minimum-age interior} $F^*$ to be the union of interiors $E(P)$ of all the components of its overlay $\cP(\ocC)$ that have such minimum age $r^*$.  

\subsection{Efficient Sampling from Minimum-Age Interiors}

Now, one of the crucial properties of the minimum-age interior $F^*$ of a covering family $\ocC$ is that as long as $\ocC$ is shattering and $\log^{O(1)}m$-bounded, we can efficiently sample from the corresponding $F^*$-marginal. More specifically, in Section \ref{sec:sampling} we establish the following lemma.

\begin{lemma}
\label{lem:conditioning_main}
Given any covering family $\ocC=(\cC_0,\ldots, \cC_{\ell})$ that is shattering and $\alpha$-bounded, for some $\alpha$ being polylogarithmic in $m$, let $F^*$ be its minimum-age interior. We can sample from the corresponding $F^*$-marginal in (expected) $\tO(m^{4/3})$ time.
\end{lemma}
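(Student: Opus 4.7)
The plan is to apply Theorem~\ref{thm:rand_tree_via_rand_walk} but simulate only as much of the covering random walk as is needed to recover $T\cap F^*$, using a hierarchical shortcutting inspired by \cite{KelnerM09} but indexed by the covering family $\ocC$. Since every edge of $F^*$ lies entirely inside a single minimum-age overlay component, an edge of $F^*$ can only arise as a first-visit edge $e_v$ when $v$ belongs to $R:=\bigcup_{\age(P)=r^*}P$. Therefore, starting a random walk at an arbitrary vertex and running it only until every vertex of $R$ has been visited, and then intersecting the recorded first-visit edges $\{e_v\}_{v\in R}$ with $F^*$, gives an exact sample from the $F^*$-marginal. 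It remains to carry out this partial simulation in $\tO(m^{4/3})$ time.

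To do so, I would preprocess the covering family as follows: for every set $C$ in a subdivision $\cC_i$ with $i>r^*$, use the Laplacian solvers of Section~\ref{sec:reff_embedding} to build a data structure that, given any entry vertex, returns a sample from the exit distribution of the walk confined to $C$ in time $\tO(|V(C)|)$. Since the sets inside a single $\cC_i$ are edge-disjoint, the total preprocessing cost is $\tO(\ell m)=\tO(m)$. The simulation then runs in two modes: while inside a minimum-age overlay component $P$, advance the walk one step at a time so that first-visit edges to vertices of $P$ are detected; whenever the walk enters a higher-age overlay component whose witness $\wit{P'}\in\cC_i$ has already been fully covered, replace the excursion inside $\wit{P'}$ by a single jump to an exit vertex sampled from the precomputed distribution. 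Stop as soon as every vertex of $R$ has been visited.

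For the running-time analysis, the number of walk steps spent step-by-step inside each minimum-age component $P$ is bounded, by the restricted cover-time analogue of Lemma~\ref{lem:cover_weighted} from Section~\ref{sec:random_walks_restricted_to_subgraphs}, by $\tO(|E(P)|\cdot\diameff(P))=\tO(|E(P)|\,m^{1/3})$ using the shattering property $\diameff(P)\le\Dstar=O(m^{1/3})$; since the minimum-age components are disjoint this sums to $\tO(m^{4/3})$. Each shortcut jump through an $i$-based higher-age witness costs $\tO(|V(\wit{P'})|)=\tO(m/2^i)$ by \eqref{eq:def_size_bound_covering_family}, while the number of such jumps can be charged against crossings of $i$-based boundary edges; combined with Fact~\ref{fa:alpha_boundness_and_boundary_edges} and a standard KM-style amortization, this yields a total shortcut cost of $\sum_{i>r^*}(\alpha m^{1/3}2^i)\cdot \tO(m/2^i)=\tO(\alpha\,\ell\,m^{4/3})=\tO(m^{4/3})$.

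The main obstacle will be to make the hierarchical shortcutting provably correct and efficient. On the correctness side one must verify that replacing an excursion through a fully covered higher-age witness by a single exit-distribution jump preserves the joint law of the first-visit edges we are about to output, and that minimum-age components are never themselves shortcut before all their first-visit edges are recorded. On the efficiency side one must prove the restricted cover-time bound in the presence of these shortcuts, and bound the number of entries into each already-covered component by polylog times its boundary size -- the standard Kelner-Mądry amortization, but adapted so that the accounting respects the multi-scale age structure of $\ocC$ rather than a single $(\phi,\gamma)$-decomposition.
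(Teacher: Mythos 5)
Your high-level plan (run the Broder--Aldous walk only until every minimum-age overlay component is covered, shortcut excursions using the multi-scale structure of $\ocC$) is the same as the paper's, but the concrete shortcutting strategy you propose diverges from the paper's in ways that break the time analysis, and several of your cost claims are off.

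\textbf{Missing escape sets.}
Your simulation never shortcuts inside a minimum-age component $P\in\cP^*$, even after $P$ is covered; you always step one at a time there. But the walk may re-enter a covered $P$ arbitrarily often while it is still covering other pieces of $\cP^*$, and the invocation of Lemma~\ref{lem:ultimate} with $V'=P$, $E'=E(P)$ only bounds the traversals \emph{up to the moment $P$ is covered}. The paper's fix is to define, for each $P\in\cP^*$, an escape set $S(P):=\cP_{r^*}(P)$ (the component of the coarsened partition $\cP_{r^*}$ containing $P$), and to jump out of $S(P)$ on every visit after $P$ has been covered. Lemma~\ref{lem:different_Pp_components} is needed to show these escape sets are pairwise disjoint for distinct $P\in\cP^*$, which is what makes the jump safe (it cannot skip over unvisited vertices of another minimum-age component). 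Taking the escape set to be exactly $P$ would also not work: the boundary of an arbitrary overlay component can be as large as $\tO(m^{2/3})$, so the cost of its shortcut sampler (Lemma~\ref{lem:cost_shortcutting_sampling}) would be too high; the coarser $\cP_{r^*}(P)$ is the right choice precisely because it is modest ($|E(\cP_{r^*}(P))|\le m/2^{r^*}$) while its boundary consists only of boundary edges from levels $\le r^*$.

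\textbf{Waiting for witnesses to be covered.}
You shortcut over a witness $\wit{P'}\in\cC_i$ only after it ``has already been fully covered.'' This is both unnecessary and fatal for the running time: covering a higher-age set $\wit{P'}$ step by step can take $\Theta(|E(\wit{P'})|\cdot\diameff(\wit{P'}))$ which is not controlled by any of the parameters at your disposal. Since the $F^*$-marginal depends only on first-visit edges inside the minimum-age components, and $\wit{P'}$ is disjoint from every minimum-age component, the paper jumps over $\wit{P'}$ \emph{immediately}, on every entry, with no covering precondition. That observation is exactly what makes the whole scheme go beyond Kelner--Mądry and you should state and use it.

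\textbf{Preprocessing and per-jump costs.}
Your claim that the shortcut samplers can be built with total preprocessing $\tO(\ell m)=\tO(m)$, and queried in time $\tO(|V(C)|)$, does not match any of the available constructions. Lemma~\ref{lem:cost_shortcutting_sampling} charges $\tO(|S|+|E(S)||\partial S|)$ for building (one approximate Laplacian solve per boundary edge of $S$, refined via Propp's trick) and then $O(\log m)$ per query. The right computation is the one the paper does: for each level $i$, $\sum_{S\in\cC_i}|E(S)||\partial S|\le \frac{m}{2^i}\cdot\tO(\alpha m^{1/3}2^i)=\tO(m^{4/3})$, and similarly for the modest $\cP_{r^*}$ components, giving $\tO(m^{4/3})$ preprocessing and $O(\log m)$ per jump. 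Relatedly, your estimate of the number of jumps as ``$\alpha m^{1/3}2^i$'' counts boundary edges, not boundary-edge \emph{traversals}; the correct count is obtained from Lemma~\ref{lem:ultimate} applied to the set of boundary edges and $V'=V$, giving $\tO(|E'|\diameff(G))$ traversals, not $|E'|$.

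\textbf{Large-diameter case.}
The bound via Lemma~\ref{lem:ultimate} with $V'=V$ needs $\diameff(G)=O(m^{2/3})$, which need not hold (especially since the input $G$ to this lemma is a conditioned graph whose resistances may have blown up). The paper deals with this separately (Section~\ref{sec:deadling_with_large_diameter_case}) by an extra ball-growing partition into small-diameter pieces $F_i$ and an additional shortcutting layer; your proposal does not address this.
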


Once we have obtained a sample $F'\subseteq F^*$ from our $F^*$-marginal, as in the lemma above, it is important to ensure that this sample remains consistent with all the subsequent samplings that we perform. A simple way to achieve this is by encoding this random choice of $F'$ into the graph $G$. Namely, given the graph $G$, the subset $F^*\subseteq E$ of its edges and the sample $F'\subseteq F^*$ from the corresponding $F^*$-marginal, let us define the {\em $(F^*,F')$-conditioned graph} $G'$ to be $G$ after we contract in it all the edges in $F'$ and remove all the edges in $F^*\setminus F'$. It is not hard to see that sampling a random spanning tree of $G$ that is consistent with our sample $F'$, i.e., finding a random spanning tree $T$ conditioned on the fact that the intersection of $T$ and $F^*$ is exactly $F'$, boils down to finding a random spanning tree in the $(F^*,F')$-conditioned graph $G'$. So, to ensure consistency across all our samplings from consecutive marginal distributions, we just need to make sure to always work with the corresponding conditioned graph. And it is easy to keep track of all edge contractions so as to be able to recover the whole random spanning tree in the original graph at the end.

The key consequence of sampling from the $F^*$-marginal, with $F^*$ being the minimum-age interior, is that it allows us to make non-trivial progress, as measured by the structure of the covering family that we maintain. That is, either the age of that covering family increases, or the interiors of all the components of its overlay partitioning become empty (which by Fact \ref{fa:alpha_boundness_and_boundary_edges} means that there are few edges left overall). This is made precise in the following lemma, whose proof appears in Appendix~\ref{app:conditioning_progress}.

\begin{lemma}
\label{lem:conditioning_progress}
Let $G$ be a graph and let $\ocC=(\cC_0,\ldots, \cC_{\ell})$ be an age-$r^*$ covering family in $G$ that is $\alpha$-bounded, with its minimum-age interior being $F^*$. Given a sample $F'\subseteq F^*$ from the $F^*$-marginal, we can construct, in $\tO(m)$ time, the corresponding $(F^*,F')$-conditioned graph $G'$, as well as a covering family $\ocC'=(\cC_0',\ldots, \cC_{\ell}')$ in that graph. Furthermore, this $\ocC'$ will be $\alpha$-bounded and, if $r^*<\ell$, then the age of $\cC'$ will be at least $r^*+1$; otherwise, i.e., if $r^*=\ell$, then each edge of $G'$ will be in the boundary $\partial C$ of some set $C$ in $\ocC'$.
\end{lemma}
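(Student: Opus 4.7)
The plan is to first construct $G'$ in $O(m)$ time by contracting all edges of $F'$ and deleting all edges of $F^*\setminus F'$, tracking vertex identifications via a union-find. Because $F'\subseteq F^*=\bigcup_{P:\,\age(P)=r^*}E(P)$, every contraction merges two vertices that lie in the same minimum-age overlay component $P$ of $\cP(\ocC)$, and by the overlay-component property each $C\in\ocC$ either fully contains or is disjoint from such a $P$. Hence every identification respects every set of $\ocC$, and I would take $\cC_i'$ to consist of the identified images of the sets of $\cC_i$, for each $i$. The axiom $\cC_0'=\{V(G')\}$ is inherited from $\cC_0=\{V\}$, and the size bound $|E(C')|\le|E(C)|\le m/2^i$ is immediate since contraction and deletion only remove edges.

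To drive the age forward in the case $r^*<\ell$, I would additionally insert into $\cC_\ell'$, for every minimum-age overlay component $P$ of $\cP(\ocC)$, the identified image of $P$ in $G'$. This step is valid: every interior edge of $P$ lay in $F^*$ and was processed, so $|E(P)|=0$ in $G'$ and the size bound $m/2^\ell$ is trivial; the newly inserted $P$'s are pairwise disjoint as distinct overlay components, and each is disjoint from every preexisting $C\in\cC_\ell$, because $\age(P)=r^*<\ell$ forces $P\not\subseteq C$ and the overlay property then gives $P\cap C=\emptyset$. When $r^*=\ell$ I would add nothing, since then each such $P$ would overlap its own age witness $\wit{P}\in\cC_\ell$ and break disjointness.

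Since every newly inserted set is itself an old overlay component and the overlay property ensures it neither splits nor merges any other old component, the new overlay $\cP(\ocC')$ is just $\cP(\ocC)$ with identifications applied. This immediately yields both claims. For $r^*<\ell$: each minimum-age $P$ now sits inside $P\in\cC_\ell'$ and thus has age $\ell\ge r^*+1$, while every other component $Q$ keeps its old age $r_Q>r^*$; hence the age of $\ocC'$ is at least $r^*+1$. For $r^*=\ell$: every overlay component is minimum-age, so $F^*$ is the union of all interior edges of $\cP(\ocC)$, meaning the surviving edges in $G'$ are exactly the inter-component edges of the overlay, and each such edge separates two distinct overlay components and is therefore a boundary edge of some set in $\ocC'$.

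The step I expect to require the most care is preserving $\alpha$-boundedness, and placing the new sets at the top level $\ell$ is what makes this clean. For $i<\ell$ no sets are added, so the $i$-based counts can only shrink. For $i=\ell$ in the case $r^*<\ell$, any edge $(u,v)$ that becomes a boundary edge of some newly inserted $P$ has its endpoints in two distinct overlay components of $\cP(\ocC)$, so by the overlay property some set of $\ocC$ already separates them — hence $(u,v)$ was already a boundary edge of $\ocC$, with based index at most $\ell$. Adding an attestation at level $\ell$ cannot lower that index, so every edge keeps the same based index in $\ocC'$ as in $\ocC$, and every $i$-based count is preserved. The whole procedure is implementable in $\tO(m)$ time: building $G'$ and maintaining identifications is near-linear, and computing each vertex's $O(\log m)$-long signature relative to $\ocC$ suffices to bucket vertices into overlay components, read off $r^*$ and the minimum-age $P$'s, and attach them as new sets of $\cC_\ell'$ in $\tO(m)$ total.
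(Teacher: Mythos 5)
Your proof is correct and follows essentially the same route as the paper's: construct $G'$ and carry $\ocC$ over via the vertex identifications (valid because contractions stay inside overlay components), then in the case $r^*<\ell$ raise the age of each minimum-age component $P$ by inserting a new set into $\cC_\ell'$, and argue no new boundary edges or lower $i$-based indices are created. The only cosmetic difference is that the paper inserts a singleton $\{v\}$ into $\cC_\ell'$ for each vertex $v$ of each minimum-age component, whereas you insert the whole (contracted image of) $P$ as one set; both are valid since $E(P)=\emptyset$ in $G'$ so the size bound is trivial, and both arguments for disjointness from $\cC_\ell$ and for $\alpha$-boundedness go through identically.
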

An important technical convention in this context is that the value of $m$ that we use in the definitions of covering family (see \eqref{eq:def_size_bound_covering_family}) and of $\alpha$-boundedness always refers to the number of edges of the {\em original} input graph, i.e., the one before we start removing and contracting edges. (This, of course, means that the running times that we cite in all lemmas are also always proportional to the number of edges of original graph, not the conditioned graph we might be applying them to.)

\subsection{Our Algorithm}\label{sec:algorithm_wrapup}

Having described above all the key ideas and tools behind our approach, we are finally ready to present our algorithm. In fact, at this point, it boils down to a simple iterative procedure.

Namely, our algorithm maintains a graph $G$ and an $\alpha$-bounded covering family $\ocC$ in that graph, for some $\alpha$ being polylogarithmic in $m$. Initially, $G$ is the original input graph and $\ocC$ is the trivial $0$-bounded and age-$0$ covering family with $\cC_0=\{V\}$ and all other $\cC_i$ being empty. 

In each iteration, given $G$ and $\ocC$, we first apply Lemma \ref{lem:partitioning_main} to $\ocC$ to obtain a covering family $\ocC^1$ that is shattering and $\alpha'$-bounded with $\alpha'=\alpha+\alpha^*$. Next, we use Lemma \ref{lem:conditioning_main} to obtain a sample $F'$ from the $F$-marginal corresponding to the minimum-age interior $F$ of $\ocC^1$ and then employ the procedure from Lemma \ref{lem:conditioning_progress} (with $\ocC=\ocC^1$) to obtain the corresponding $\alpha'$-bounded covering family $\ocC'$ in the $(F,F')$-conditioned graph $G'$. (Note that even though $\ocC^1$ was shattering, this $\ocC'$ might no longer be such, as the removal of edges during construction of $G'$ can significantly increase the effective resistance distances.) 

If it is {\em not} the case at this point that all the edges of $G'$ are boundary edges of some set in $\ocC'$, we just proceed to the next iteration with $\ocC$ being equal to $\ocC'$ and $G$ equal to $G'$. 

Otherwise, i.e., if indeed all the edges of $G'$ are boundary edges for some set in $\ocC'$, we can stop our algorithm and note that this condition together with the $\alpha$-boundedness of $\ocC'$ and Fact \ref{fa:alpha_boundness_and_boundary_edges} implies that $G'$ has at most $O(\alpha m^{2/3})=\tO(m^{2/3})$ edges. (Recall that in our convention $m$ denotes the number of edges of the original input graph.) So, this graph is small enough that we can afford to just simulate the whole covering random walk in it and use Theorem~\ref{thm:rand_tree_via_rand_walk} to recover the corresponding random spanning tree. (Observe that the expected length of this covering walk can be easily bounded in terms of the square of the number of edges of our graph -- see Section~\ref{sec:cover_time_and_effective_resistance} -- which results in a bound of $\tO(m^{2/3}\cdot m^{2/3})=\tO(m^{4/3})$.) 

By combining the edges of this tree with the edges contracted during our repeated constructions of conditioned graphs, we obtain the desired random spanning tree in the original graph. It is not hard to see that the above finishing procedure can be implemented in total $\tO(m^{4/3})$ (expected) time, as needed.

Now, to analyze the iterative part of our algorithm, observe that, by Lemma \ref{lem:partitioning_main}, the only operations used to obtain $\ocC^1$ from $\ocC$ are adding new sets and subdividing the existing ones. As these operations cannot decrease the age of any component in the overlay (but could increase it), we can conclude that the age of $\ocC^1$ is at least the age of $\ocC$. As a result, Lemma \ref{lem:conditioning_progress} ensures that with each iteration of our algorithm, the age of the maintained covering family $\ocC$ increases by at least one. 

Therefore, there can be at most $\ell+1=O(\log m)$ such iterations before the algorithm terminates. This observation implies that, on one hand, our maintained covering family $\ocC$ is indeed always $\alpha$-bounded for some $\alpha$ being polylogarithmic in $m$. (This follows since $\alpha$ is initially $0$ and then, by Lemma \ref{lem:partitioning_main}, each iteration increases it by an additive term of at most $\alpha^*$ that is polylogarithmic.) On the other hand, this observation allows us to conclude that by Lemmas \ref{lem:partitioning_main}, \ref{lem:conditioning_main} and \ref{lem:conditioning_progress}, the total (expected) running time of our algorithm is at most $\tO(m^{4/3})$. 

In light of all the above, Theorem \ref{thm:main} follows.

\section{Cutting the Graph} \label{sec:parititioning}

In this section, we prove Lemma \ref{lem:partitioning_main}. That is, we show how to expand a given $\alpha$-bounded covering family $\ocC$ of the graph $G$ to a covering family $\ocC'$ that is still $\alpha'$-bounded for $\alpha'$ that is not much bigger than $\alpha$ and, furthermore, is shattering, i.e., has its overlay $\cP(\ocC')$ partition $G$ into components of small effective resistance diameter. (Consult Section \ref{sec:overview} for necessary definitions.)

At a very high level, our approach is a simple iterative procedure that, as long as $\ocC$ is not yet shattering, considers one-by-one all the components $P$ of the current overlay partition $\cP(\ocC)$ whose effective resistance diameter may still be too large and applies a fixing routine to them. This fixing routine boils down to first identifying good cuts in $G$ that divide the component $P$ and then adding to $\ocC$ carefully-crafted sets (or subdividing the existing ones) based on these identified cuts. (This results in subdivision of $P$ in the resulting overlay $\cP(\ocC)$ and thus in reduction of its diameter.)

Roughly speaking, the way such good cuts for $P$ are identified is via a combination of two partitioning techniques: one purely combinatorial and the other one relying on a certain new connection between the effective resistance metric and graph cut structure. More precisely, in our fixing routine, we first apply the standard ball-growing primitive, as described in Section \ref{sec:ball_growing}, to $P$. If this results in a partition of $P$ into pieces that have sufficiently small diameter and size (while not cutting too many edges), we use the resulting sets to update $\ocC$ accordingly. On the other hand, if that fails, we show that it must be that $P$ contains two large parts that are far away from each other in effective resistance metric. As we prove then, these parts need to have a relatively small cut separating them in the graph $G$ (see Lemma \ref{lem:good_cut} below) and we (approximately) recover this cut using the recent close-to-linear-time minimum $s$-$t$ cut algorithms -- see Section \ref{sec:approximating_minimum_cuts}. Based on that cut, we update $\ocC$ in a very careful manner. 
  
We proceed now to describing our procedure in more detail.

\subsection{The Graph-Cutting Procedure}

Let us first recall the setting of Lemma~\ref{lem:partitioning_main} more formally. We are given an $\alpha$-bounded covering family $\cC = (\cC_0,\dots,\cC_\ell)$ and we want to expand it so that it would become shattering and still be $(\alpha + \alpha^*)$-bounded with $\alpha^* = \log^{O(1)}m$. In this expanding, we are only allowed to add new sets to $\ocC$ and to subdivide the existing ones.

To facilitate our analysis, we also maintain an additional constraint during our procedure. Namely, for each such subdivision $\cC_i$ with $0\leq i<\ell$, we will only be adding new sets -- never subdividing them. (So, subdivision of existing sets can happen only in the subdivision $\cC_\ell$.) Also, we will never modify subdivision $\cC_0$.

Before we start our procedure, we construct, in nearly-linear time, the $(1+\varepsilon)$-approximate low-dimensional embedding $\cR$ of the effective resistance metric of $G$ -- as described in Theorem \ref{thm:SpielmanSrivastava} in Section~\ref{sec:reff_embedding} -- with $\varepsilon=\frac{1}{2}$. This way, we will be always able to compute quickly (i.e., in $O(\log m)$ time), for any two vertices $u$ and $v$ in $G$, an approximation $\cR(u,v)$ of the effective resistance distance $\Reff(u,v)$ between them such that
\begin{equation} \label{eq:embedding_bound}
\frac 23 \cR(u,v) \le \Reff(u,v) \le 2 \cR(u,v).
\end{equation}

Now, our procedure operates by iterating over all the yet-unseen vertices $u$ of $G$. (Initially, all vertices are unseen.) For each such $u$, we identify the component $P \in \cP(\ocC)$ in the overlay $\cP(\ocC)$ of the current $\ocC$ that contains $u$ and run a cutting procedure -- denoted by $Cut(P)$ -- on it. The objective of this procedure is to subdivide $P$ (by adding new sets to $\cC$, or subdividing existing ones, in order to refine $\cP(\ocC)$) into smaller components, all of effective resistance diameter $O(m^{1/3})$. At the end of $Cut(P)$, we mark off all vertices of $P$ as seen. We now describe this procedure.

To this end, let us fix some $P \in \cP(\ocC)$. We start by applying the ball-growing primitive -- as described by Lemma \ref{lem:ball_growing} in  Section~\ref{sec:ball_growing} -- to it to obtain a $(O\left(\frac{\log m}{m^{1/3}}\right),m^{1/3})$-decomposition of $P$ into pieces of effective resistance diameter at most $m^{1/3}$. (Recall that the graph-diameter distance always upper-bounds the effective resistance diameter -- see Section \ref{sec:cover_time_and_effective_resistance}.) For a given piece $P_i$ in this decomposition, we call it {\em large} if its interior $|E(P_i)|$ is of size greater than $m^{2/3}$; otherwise,  we call it \emph{small}. Let $P_1, \dots, P_k$ (possibly $k=0$) be all the large pieces and let $L = \bigcup_{i=1}^k P_i$ be their union. 

We now choose an arbitrary vertex $v_i\in P_i$, for each such large $P_i$, and then use our low-dimensional embedding $\cR$ to (approximately) compute the effective resistance distances between $v_1$ and all the other $v_i$s. (This can be done in $O(k \log m)=\tO(|P|+|E(P)|)$ time.) 

If it turns out that at least one of these distances is larger than $\frac {27}{2} m^{1/3}$, then we say that the ball-growing procedure {\em failed} and proceed to our alternative, more sophisticated, cutting method -- described in Section \ref{sec:ball_growing_fails} below -- that deals with this case. This method encapsulates the key advantage of our choice to employ the effective resistance-based optics instead of the usual graph distance-based one. (Intuitively, the failure of the ball-growing procedure corresponds to the graph structures -- such as the graph in Figure \ref{fig:barrier_example} -- that give rise to the $\Omega(n^{3/2})$ barrier faced by the Kelner-Mądry method \cite{KelnerM09}.)

On the other hand, if ball-growing procedure does not fail, i.e., all these distances $\cR(v_1,v_i)$ are at most $\frac {27}{2} m^{1/3}$, we are able to deal with the component $P$ relatively easily. To see that, observe first that in this case all the large pieces have to be within a ball of small effective resistance diameter, i.e., we can bound the effective resistance distance $\Reff(u,v)$ between any two vertices $v'\in P_{i'}$ and $v''\in P_{i''}$ as 
\begin{equation}\label{eq:bound_diam_large_pieces}
\Reff(v',v'')\leq \dist(v',v_{i'})+\Reff(v_{i'},v_1)+\Reff(v_1,v_{i''})+\dist(v_{i''},v'')\leq 4\cdot \frac{27}{2} m^{1/3} + 2\cdot m^{1/3} = 56m^{1/3},
\end{equation}
where we used~\eqref{eq:embedding_bound} and the fact that each $P_i$ has a diameter of at most $\diameff(P_i)\leq \diamdist(P_i)\leq m^{1/3}$. In other words, $\diameff(L)\leq 56 m^{1/3}$.

Then, we update the covering family $\ocC$ by just adding all the pieces that are {\em small} (not large!) to the subdivision $\cC_\ell$. By definition, all these small pieces are of size at most $m^{2/3} = \frac{m}{2^{\ell}}$, as needed -- cf. \eqref{eq:def_size_bound_covering_family}. Also, note that if $\age(P)=\ell$ and thus $P$ is already contained in some existing $C\in \cC_{\ell}$, adding these pieces is not valid as they would be contained in $C$. In this case, we just modify $C$ by removing all the pieces from it -- effectively, this corresponds to subdividing $C$ and thus is a valid operation. 

Finally, we note that after the above update, the component $P$ is split into components that correspond to each of the small pieces and the union $L$ of all the large pieces. Each of these components -- in particular, $L$ -- have effective resistance diameter of at most $56 m^{1/3}=\Dstar$ (see \eqref{eq:bound_diam_large_pieces}). So they satisfy the diameter condition needed in the definition of a shattering covering family -- see Section \ref{sec:shattering_alpha_bounded_families} -- and thus we can mark all the vertices of $P$ as seen and our cutting procedure can proceed to the next unseen vertex.

\subsection{Cutting the Component $P$ when Ball-growing Fails}\label{sec:ball_growing_fails}

We describe now how to deal with the case when ball-growing fails, i.e., if we have that for at least one large piece $P_i$ with $2\leq i\leq k$ that was produced by running the ball-growing primitive on $P$, $\cR(v_1,v_i)>\frac {27}{2} m^{1/3}$. Let us assume wlog that $i=2$. 

At this point, we can forget about all the other large and small pieces that our ball-growing procedure found in $P$ -- from now on we focus exclusively on the large pieces $P_1$ and $P_2$. The first observation to make in this situation is that the fact that $\cR(v_1,v_2)>\frac {27}{2} m^{1/3}$ implies that $P_1$ and $P_2$ are relatively far from each other in the effective resistance metric. In particular, by~\eqref{eq:embedding_bound}, the effective resistance distance $\Reff(v_1,v_2)$ between $v_1$ and $v_2$ can be lower-bounded as
\begin{equation} \label{eq:before_good_cut_lemma}
\frac 13 \Reff(v_1,v_2) > \frac {9}{2} \cdot \frac 23 m^{1/3} \ge m^{1/3} + \diamdist(P_1) + \diamdist(P_2) \ge m^{1/3} + \diameff(P_1) + \diameff(P_2),
\end{equation}
where we used the fact that the diameter of $P_1$ and $P_2$ is at most $m^{1/3}$. 

Now, the crucial insight here -- and, in fact, one of the key new ideas behind our improvement -- is that distances in the effective resistance metric between vertices are related to the sizes of minimum cut separating them. In particular, as we show in the lemma below, one can relate the fact that $P_1$ and $P_2$ are far away in the effective resistance metric to existence of a small cut separating these sets in $G$. (This connection can be viewed as a certain analogue -- for the minimum $s$-$t$ cut problem -- of the role Cheeger's inequality \cite{AlonM85,Alon86} and its higher-eigenvalue generalization \cite{KwokLLOT13} play in the context of the sparsest cut problem. In both cases, one is tying certain algebraic properties of the Laplacian -- in the case of Cheeger's inequality, these are its eigenvalues, here it is its full (pseudo-)inverse -- to the quality of the corresponding cut in the graph.)

\begin{lemma} \label{lem:good_cut}
Let $U, W \subseteq V$ be disjoint sets in a graph $G$ having at most $m$ edges and let $u$ (resp., $w$) be a vertex in $U$ (resp., in $W$). There exists a cut $\cut^*$ in $G$ that separates $U$ and $W$ and has size at most
\[
|\cut^*| \leq \sqrt{\frac{m}{\gamma_{u,w}(U,W)}},
\] 
where we define $\gamma_{u,w}(U,W):=\frac 13 \Reff(u,w) - \diameff(U) - \diameff(W)$ and assume it to be strictly positive.
\end{lemma}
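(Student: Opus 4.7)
The plan is to produce $\cut^*$ as a level-set (``sweep'') cut of the electrical potential $\phi$ of the unit $u$-to-$w$ flow in $G$, bounding its size via a Cauchy--Schwarz / threshold-rounding argument enabled by a spectral Lipschitz-type property of $\phi$. Normalize so that $f_e := \phi(x)-\phi(y)$ on each $e=(x,y)$ is the unit $u$-$w$ electrical flow; then $\phi(u)-\phi(w) = \Reff(u,w)$, and by Thomson's principle $\energy(f) = \sum_e f_e^2 = \Reff(u,w)$.

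The key technical ingredient is the inequality $|\phi(a)-\phi(b)| \le \Reff(a,b)$ for every pair $a,b \in V$. I would prove it as follows: let $\psi$ be the potential of the unit $a$-to-$b$ electrical flow. Reciprocity (symmetry of the Laplacian pseudoinverse $L^+$) gives $\phi(a)-\phi(b) = (\chi_a-\chi_b)^T L^+(\chi_u-\chi_w) = (\chi_u-\chi_w)^T L^+(\chi_a-\chi_b) = \psi(u)-\psi(w)$. Since $\psi$ is harmonic on $V\setminus\{a,b\}$, the discrete maximum principle forces $\psi(b)\le\psi(v)\le\psi(a)$ at every vertex $v$, so $|\psi(u)-\psi(w)| \le \psi(a)-\psi(b) = \Reff(a,b)$. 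This is the step where the effective resistance metric, as opposed to combinatorial distances, does the real work, and it is the main obstacle of the proof.

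Applying this inequality with each pair $(u,a)$, $a\in U$, yields $\phi(u) - \min_{a\in U}\phi(a) \le \diameff(U)$, and likewise $\max_{b\in W}\phi(b) - \phi(w) \le \diameff(W)$. Consequently, for every threshold $t$ in the interval $I := [\max_{b\in W}\phi(b),\, \min_{a\in U}\phi(a))$, the set $S_t := \{v : \phi(v) > t\}$ satisfies $U \subseteq S_t$ and $W\cap S_t = \emptyset$, so $\partial S_t$ separates $U$ from $W$. Writing $\gamma := \gamma_{u,w}(U,W)$ and $D := \diameff(U)+\diameff(W)$, this also gives $|I| \ge \Reff(u,w) - D = 3\gamma + 2D$.

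To bound the size of a good sweep cut, I would invoke the co-area identity $\int_{-\infty}^{\infty}|\partial S_t|\,dt = \sum_e |f_e|$ together with Cauchy--Schwarz to get $\int_I |\partial S_t|\,dt \le \sum_e |f_e| \le \sqrt{m\cdot\energy(f)} = \sqrt{m\,\Reff(u,w)}$. Averaging over $I$, some $t^\star \in I$ achieves $|\partial S_{t^\star}| \le \sqrt{m\,\Reff(u,w)}/|I|$. Substituting $\Reff(u,w) = 3(\gamma+D)$ and $|I| \ge 3\gamma+2D$, the desired bound $\sqrt{m\,\Reff(u,w)}/|I| \le \sqrt{m/\gamma}$ reduces to $\gamma\cdot 3(\gamma+D) \le (3\gamma+2D)^2$, i.e.\ $6\gamma^2 + 9\gamma D + 4D^2 \ge 0$, which is trivial. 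Taking $\cut^* := \partial S_{t^\star}$ completes the proof. Note that the purely Cauchy--Schwarz bound $|\phi(a)-\phi(b)| \le \sqrt{\Reff(a,b)\,\Reff(u,w)}$ would only control the spread of $\phi$ on $U,W$ by a geometric mean and would not suffice to close the numerical inequality above; this is precisely why the maximum-principle step is indispensable.
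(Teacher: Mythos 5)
Your proof is correct, but it takes a genuinely different route from the paper's. The paper argues by contradiction using Menger's theorem: if no cut of size $<k$ (with $k\approx\sqrt{m/\gamma_{u,w}(U,W)}$) separates $U$ from $W$, it extracts $k$ edge-disjoint $U$-$W$ paths, builds a unit $u$-$w$ flow as a uniform mixture of $k$ three-stage flows (min-energy flow inside $U$, the disjoint path across, min-energy flow inside $W$), and bounds its energy via Jensen and edge-disjointness to get $\tfrac13\Reff(u,w)\le\diameff(U)+m/k^2+\diameff(W)$, a contradiction. You instead give a direct, constructive argument: take the electrical potential $\phi$ of the unit $u$-$w$ flow, prove the Lipschitz bound $|\phi(a)-\phi(b)|\le\Reff(a,b)$ via reciprocity of $L^+$ and the maximum principle, deduce that there is a potential gap of length at least $\Reff(u,w)-\diameff(U)-\diameff(W)$ with $U$ above and $W$ below, and then extract a small sweep cut in that gap via the co-area identity and Cauchy--Schwarz. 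Your route is essentially the Cheeger-style sweep argument that the paper itself invokes as an analogy but does not carry out; it has the aesthetic advantage of exhibiting the cut explicitly as a level set of $\phi$, while the paper's argument is more elementary (it needs only the energy characterization of $\Reff$ and Menger, not $L^+$, the maximum principle, or co-area). One small caveat: your closing remark that the weaker Cauchy--Schwarz bound $|\phi(a)-\phi(b)|\le\sqrt{\Reff(a,b)\Reff(u,w)}$ could not close the numerics is not quite right -- a short AM--GM computation shows it actually does, with the same constants -- but this is a side remark and does not affect the validity of the argument you gave.
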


\begin{proof}
Let us define $k:=\lfloor\sqrt{\frac{m}{\gamma_{u,w}(U,W)}}+1\rfloor$. We claim in the lemma that there is a cut of size at most $k-1$ that separates $U$ from $W$ in $G$. Assume for the sake of contradiction that there is no such cut of size less than $k$. This means that in a graph $G'$ which is created from $G$ by contracting $U$ into a vertex $u'$ and $W$ into a vertex $w'$ there exists no $u'$-$w'$-cut of that size too. As a result, there exist $k$ edge-disjoint paths between $u'$ and $w'$ in $G'$. Once we lift these paths to $G$, we will obtain $k$ paths $p_1,\ldots, p_k$, with each $p_i$ connecting a vertex $u_i \in U$ to a vertex $w_i \in W$ and being of some length $d_i$.

Our goal is to derive a contradiction by showing that $\frac{1}{3}\Reff(u,w)\leq \diameff(U) + \frac{m}{k^2} + \diameff(W)$. (Note that once we use the definition of $k$ and rearrange the terms, this indeed will be a contradiction.) 

To this end, we will use an alternative characterization of effective resistance. According to this characterization, for any two vertices $v$ and $v'$, the effective resistance distance $\Reff(v,v')$ between these vertices is equal to the energy $\energy(f^*)$ of the minimum-energy flow $f^*$ that sends one unit of flow from $v$ to $v'$ in $G$. Here, the energy $\energy(f)$ of a flow $f$ is equal to $\sum_{e} f_e^2$ and corresponds to $G$ being treated as an electric circuit with all edges being unit resistors. (See, e.g., \cite{LyonsP13,Bollobas98} for a proof of this equivalence.)

In light of the above characterization, we can prove upper bounds on $\Reff(u,w)$ by bounding the energy of some (not necessarily optimal) flow $\of$ that pushes one unit of flow from $u$ to $w$ in $G$. To specify the flow $\of$ that we want to use for this purpose, let us first define, for every $i$, $f_i^u$ to be a minimum-energy unit flow from $u$ to $u_i$. Note that the energy $\energy(f_i^u)$ of $f_i^u$ can be at most $\diameff(U)$ -- this is so as, by the above characterization, $\diameff(U)$ is the upper bound on the energy of a minimum-energy unit flow between any two vertices in $U$. Similarly, let $f_i^w$ be a minimum-energy unit flow from $w_i$ to $w$ -- its energy is at most $\diameff(W)$. Finally, let $f_i^d$ be the unit flow from $u_i$ to $w_i$ that simply pushes the whole flow via the path $p_i$ (of length $d_i$). Observe that, for each $i$, we have $\energy(f_i^d)=d_i$. 

Now, our desired flow $\of$ is defined as the following convex combination:
\begin{equation*}
\of := \sum_i \frac{1}{k} \left( f_i^u + f_i^d + f_i^w \right).
\end{equation*}
Note that each sum of the flows $f_i^u$, $f_i^d$ and $f_i^w$ gives rise to a unit flow from $u$ to $w$. As a result, their convex combination $\of$ is also such a unit flow, as desired.

To bound the energy of $\of$, we observe that by virtue of edge-disjointness of the paths $p_i$ we have
\begin{equation}\label{eq:bound_on_p_i_energy}
\energy\left(\sum_i \frac{1}{k} f_i^d \right) = \frac{1}{k^2} \energy \left( \sum_i f_i^d \right) = \frac{1}{k^2} \sum_i \energy(f_i^d) = \frac{1}{k^2} \sum_i d_i \le \frac{m}{k^2},
\end{equation}
where we also used the fact that $\sum_i d_i\leq m$, as $G$ has at most $m$ edges.

Next, we note that the energy is a sum of quadratic functions and thus Jensen's inequality implies that for any set of flows $f_i$ and parameters $\alpha_i \ge 0$ we have 
\[
\energy \left( \sum_i \alpha_i f_i \right) \le \left( \sum_i \alpha_i \right) \left( \sum_i \alpha_i \energy(f_i) \right).
\] 
Using this, our energy-based characterization of effective resistance and the bound \eqref{eq:bound_on_p_i_energy}, we obtain
\begin{align*}
\frac 13 \Reff(u,w) \le& \frac 13 \energy(\of) =\frac{1}{3} \energy \left( \sum_i \frac 1k f_i^u + \sum_i \frac 1k f_i^d+ \sum_i \frac 1k f_i^u \right)\\
\le&\  \energy\left(\sum_i \frac 1k f_i^u\right) + \energy\left(\sum_i \frac 1k f_i^d\right) + \energy\left(\sum_i \frac 1k f_i^w\right) \\
\le& \left( \sum_i \frac 1k \energy(f_i^u) \right) + \energy \left( \sum_i \frac 1k f_i^d \right) + \left( \sum_i \frac 1k \energy(f_i^w) \right) \\
\le& \ \diameff(U) + \frac{m}{k^2} + \diameff(W),
\end{align*}
which is the desired contradiction.
\end{proof}

Once we have established the above lemma, we can apply it in our context and notice that, by \eqref{eq:before_good_cut_lemma}, $\gamma_{v_1,v_2}(P_1,P_2)>m^{1/3}$, which gives us that there exists a cut $K^*$ in $G$ that separates $P_1$ from $P_2$ and is of size at most $m^{1/3}$. As a result, by contracting $P_1$ and $P_2$ into single vertices $v_1$ and $v_2$ and then using the close-to-linear time approximate minimum $s$-$t$ cut algorithms -- see Section \ref{sec:approximating_minimum_cuts} -- on the resulting graph, we can recover, in total $\tO(m)$ time, such a cut $\cut$ that separates $P_1$ from $P_2$ in $G$ and is still of size at most $O(m^{1/3})$. (It is worth noting here that the cut $\cut$ is a global cut in $G$, i.e., it is not contained in the component $P$ that we are processing and, in fact, it can intersect all the sets in our covering family $\ocC$.)  

Our goal now is to use this cut $\cut$ to carefully craft a new set $S$ to be added to one of the subdivisions $\cC_i$ for $i<\ell$. Adding this set will ensure that $P_1$ and $P_2$ become separated in the resulting new overlay $\cP(\ocC)$. To achieve this, we need to ensure, in particular, that, on one hand, this set does not intersect any existing sets in $\cC_i$ and, on the other hand, that it does not contain too many boundary edges. As the following lemma (proved in Appendix~\ref{app:massaging}) shows, such a set $S$ can indeed be found and, additionally, it satisfies all the other constraints that we imposed on our algorithm. 

\begin{lemma} \label{lem:massaging}
Let $\ocC = (\cC_0,\ldots,\cC_\ell)$ be a covering family of a graph $G$, and let $P$ be a component of its overlay partition $\cP(\ocC)$. Also, let $P_1$ and $P_2$ be two subsets of $P$ satisfying $|E(P_1)|, |E(P_2)| > m^{2/3}$. Given a cut-set of edges $\cut$ which separates $P_1$ and $P_2$ in $G$, we can, in $\tO(m)$ time, produce a set of vertices $S$ and an $i$ such that:
\vspace{-7pt}
\begin{enumerate}[(1)]\addtolength{\itemsep}{-.5\baselineskip}
	\item $0 < i < \ell$, \label{cond:i_0_ell}
	\item the size of the interior of $S$ satisfies $\frac{m}{2^{i+1}} < |E(S)| \le \frac{m}{2^{i}}$, \label{cond:size}
	\item $S$ is disjoint from all sets in subdivision $\cC_i$, \label{cond:disjoint}
	\item exactly one of the sets $P_1$, $P_2$ is contained in $S$, and the other one is disjoint from $S$, \label{cond:separation}
	\item every boundary edge $e \in \partial S$ is either a boundary edge of $\cC_j$ with $j \le i$ or it comes from the cut $\cut$, i.e., $e \in \cut$. \label{cond:boundary_edges}
\end{enumerate}
\end{lemma}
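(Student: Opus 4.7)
\emph{Setup and age bound.} Let $A_1$ and $A_2$ be the connected components of $G \setminus \cut$ containing $P_1$ and $P_2$, respectively; they are disjoint. I first establish that $r := \age(P) \le \ell - 2$: since $P_1, P_2 \subseteq P$ are vertex-disjoint with $|E(P_1)|, |E(P_2)| > m^{2/3} = m/2^\ell$, we have $|E(P)| \ge |E(P_1)| + |E(P_2)| > m/2^{\ell-1}$; combined with $|E(\wit{P})| \le m/2^r$ this forces $r \le \ell - 2$, so $r+1 \in \{1,\ldots,\ell-1\}$. Crucially, for every $i \ge r+1$ the overlay component $P$ (and hence each of $P_1, P_2$) is disjoint from every set in $\cC_i$, which is exactly what makes conditions (3) and (4) compatible at these levels.

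\emph{Constructing $S$.} For each $i$, let $\Pi_i$ be the common refinement of the partition of $V$ by connected components of $G \setminus \cut$ and, for each $C \in \bigcup_{j \le i} \cC_j$, the ``inside $C$ vs.\ outside $C$'' partition. Every edge between two distinct $\Pi_i$-cells lies in $\cut \cup \partial \cC_{\le i}$, so any union of $\Pi_i$-cells has its boundary contained in this set and automatically satisfies (5). Denote by $Q_i^{(1)}$ and $Q_i^{(2)}$ the $\Pi_i$-cells containing $P_1$ and $P_2$; for $i \ge r+1$, they are subsets of $A_1$ and $A_2$ respectively and are disjoint from every $C \in \cC_i$. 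Taking $S$ to be a union of $\Pi_i$-cells that includes exactly one of $Q_i^{(1)}, Q_i^{(2)}$ (and avoids the other) while lying in $V \setminus \bigcup_{C \in \cC_i} C$ therefore satisfies (3) and (4) for free; only size condition (2) remains to be engineered.

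\emph{Selecting $i$ and the main obstacle.} For each side $j \in \{1,2\}$ let $h^{(j)}(i)$ be the unique level with $|E(Q_i^{(j)})| \in (m/2^{h^{(j)}(i)+1}, m/2^{h^{(j)}(i)}]$. Each $h^{(j)}$ is non-decreasing in $i$, and $h^{(j)}(\ell-1) \le \ell - 1$ because $|E(Q_{\ell-1}^{(j)})| \ge |E(P_j)| > m^{2/3}$. A standard monotone intermediate-value argument then produces, for at least one side $j$, an index $i \in \{r+1,\ldots,\ell-1\}$ with $h^{(j)}(i) = i$, giving $S := Q_i^{(j)}$. If $|E(Q_i^{(j)})|$ is slightly too small, I pad $S$ by greedily absorbing further $\Pi_i$-cells of $A_j$ that are disjoint from $\cC_i$, stopping once $|E(S)|$ enters the window $(m/2^{i+1}, m/2^i]$; this preserves (3)--(5). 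The main obstacle is the corner case where both $|E(Q_{r+1}^{(1)})|$ and $|E(Q_{r+1}^{(2)})|$ lie in $(m/2^{r+1}, m/2^r]$, whose natural level $r \in R$ is forbidden; I handle this by incorporating into $S$ cells on the chosen side of $\cut$ that lie outside $\wit{P}$ — whose boundaries are still in $\cut \cup \partial\wit{P} \subseteq \cut \cup \partial\cC_{\le i}$ — in order to shift $|E(S)|$ into a valid window, relying on $|E(A_1)| + |E(A_2)| \le m$ and the age bound $r \le \ell-2$ to guarantee that enough such mass is available. The construction is implemented by maintaining $\Pi_i$ and its cell sizes incrementally via union-find over boundary edges, giving $\tO(m)$ total running time.
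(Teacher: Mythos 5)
Your approach --- building $S$ from cells of the refinement $\Pi_i$ and running a discrete intermediate-value argument on the level functions $h^{(j)}$ --- is genuinely different from the paper's, which simply takes $U_1 := \wit{P}\cap A_1$, $U_2 := \wit{P}\cap A_2$, picks the one with the smaller interior, and then iteratively removes intersections with $\cC_i$-sets while incrementing $i$ until the size falls into the right window. The core of your IVT argument is sound: $h^{(j)}$ is non-decreasing, $h^{(j)}(\ell-1)\le\ell-1$ since $Q_{\ell-1}^{(j)}\supseteq P_j$ has interior larger than $m/2^\ell$, and if $h^{(j)}(r+1)\ge r+1$ for some $j$ then, because the integer quantity $h^{(j)}(i)-i$ can drop by at most one per step, it must equal zero at some $i\in\{r+1,\dots,\ell-1\}$. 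Taking $S:=Q_i^{(j)}$ at that $i$ already lands in the window $(m/2^{i+1}, m/2^i]$, so the ``padding'' step you describe is never needed.

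The genuine gap is in your ``main obstacle'' paragraph. You worry that $h^{(j)}(r+1)<r+1$ might hold for both $j$, and propose to repair this by absorbing cells on the chosen side of $\cut$ that lie outside $\wit{P}$. As written, this repair is not justified: enlarging $S$ only pushes the level down further, and for any target level $i\le r$ the set $S\supseteq P_j$ with $P_j\subseteq \wit{P}\in\cC_r$ is not obviously disjoint from all of $\cC_i$, so condition~(3) is left open. More importantly, the obstacle cannot occur. For $i=r+1>r=\age(P)$, the two cells $Q_{r+1}^{(1)}$ and $Q_{r+1}^{(2)}$ are disjoint and both are contained in $\wit{P}\in\cC_r$, hence
\[
|E(Q_{r+1}^{(1)})| + |E(Q_{r+1}^{(2)})| \;\le\; |E(\wit{P})| \;\le\; \frac{m}{2^r},
\]
which forces $\min_j |E(Q_{r+1}^{(j)})| \le m/2^{r+1}$, i.e.\ $h^{(j)}(r+1)\ge r+1$ for at least one $j$. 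This is precisely the observation the paper exploits (choosing the smaller of $U_1,U_2$ so that $|E(U_1)|\le\tfrac12|E(\wit{P})|\le m/2^{\age(P)+1}$); once you add it, your argument closes cleanly and the entire corner-case discussion can be deleted. A smaller nit: $A_1$ should be the \emph{union} of all components of $G\setminus\cut$ that meet $P_1$, not necessarily a single component, since the lemma as stated does not assume $\cut$ is disjoint from $E(P_1)$.
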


At this point, we can use Lemma~\ref{lem:massaging} above to produce the set $S$ that we then add to the subdivision $\cC_i$ with $i < \ell$ (this is valid thanks to conditions \eqref{cond:disjoint} and \eqref{cond:size}).
As a result of this update, in the new overlay $\cP(\ocC)$, the component $P$ has been separated into two components $P' = P \cap S$ and $P'' = P \setminus S$ (see condition~\eqref{cond:separation} in the Lemma \ref{lem:massaging}), both of which might still have too high effective resistance diameter. For this reason, we still need to recurse on both these pieces, i.e., we call $Cut(P')$ and $Cut(P'')$. As we will shortly see, even though our whole cutting procedure that added this new set $S$ was computationally very expensive, i.e., it took $\tO(m)$ time, which might be very large compared to the size of $P$, we can justify this cost by proving that we do not need to run this procedure too often. (Roughly speaking, this stems from an observation that the sparsity of the cut that $S$ constitutes is relatively small and thus an appropriate charging argument can be applied.)

\subsection{Analysis of the Cutting Procedure}

We now proceed to the analysis which shows that our cutting procedure fulfills all its promises.

Observe that, from the point of view of the running time analysis, our main concern is that whenever ball-growing fails, the runtime of the cutting procedure handling this situation is as large as $\tO(m)$. Fortunately, as the following lemma shows, this does not happen too often. 

\begin{lemma} \label{lem:number_of_insertions}
The number of times a new set is inserted into $\cC_i$ is less than $2^{i+1}$ for $i < \ell$.
\label{lem:number_of_insertions_of_set}
\end{lemma}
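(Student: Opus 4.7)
The plan is to exploit two facts: every set inserted into $\cC_i$ for $i < \ell$ carries a strong lower bound on its interior size, and the sets making up a subdivision $\cC_i$ have pairwise disjoint interiors. Bounding the total interior mass by the edge count $m$ of the original graph will immediately yield the claim.

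First, I would identify exactly where insertions into $\cC_i$ with $i < \ell$ occur in the cutting procedure. The successful ball-growing branch only modifies $\cC_\ell$ (either adding small pieces or subdividing an existing age-$\ell$ set), and the top-level convention forbids touching $\cC_0$. Hence the only way a new set enters $\cC_i$ with $0 < i < \ell$ is through the ``ball-growing fails'' branch, where the set $S$ is delivered by Lemma~\ref{lem:massaging}. Condition~(2) of that lemma guarantees $|E(S)| > \frac{m}{2^{i+1}}$.

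Next, I would use that a subdivision is by definition a collection of \emph{pairwise disjoint} vertex subsets. Thus for any two distinct $C, C' \in \cC_i$ no edge of $G$ can lie in both $E(C)$ and $E(C')$, so the interior edge sets of members of $\cC_i$ are pairwise edge-disjoint subsets of $E$. In particular,
\[
\sum_{C \in \cC_i} |E(C)| \;\le\; m.
\]
Note that the running graph may have fewer than $m$ edges after contractions and deletions, but by the convention stated after Lemma~\ref{lem:conditioning_progress}, $m$ always refers to the original input graph, and the running edge count is only smaller, so the bound remains valid.

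Combining the two ingredients finishes the proof: if $k$ sets have been inserted into $\cC_i$ over the course of the procedure, then each contributes strictly more than $\frac{m}{2^{i+1}}$ to the left-hand side of the displayed inequality, so
\[
k \cdot \frac{m}{2^{i+1}} \;<\; \sum_{C\in \cC_i} |E(C)| \;\le\; m,
\]
which gives $k < 2^{i+1}$. There is no serious obstacle here — the only thing to be careful about is ruling out insertions into $\cC_i$ from sources other than Lemma~\ref{lem:massaging} (handled by inspecting the two branches of the cutting procedure), and confirming that the ``$m$'' in the interior-size bound matches the ``$m$'' in the disjoint-interiors bound, which follows from the convention that $m$ denotes the original edge count throughout.
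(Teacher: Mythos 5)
Your proof is correct and follows essentially the same argument as the paper: insertions into $\cC_i$ with $i<\ell$ only occur via Lemma~\ref{lem:massaging}, each such set has interior size exceeding $m/2^{i+1}$, and disjointness of sets within a subdivision bounds the total interior mass by $m$. The only cosmetic difference is that you invoke the definition of a subdivision to get pairwise disjointness, whereas the paper cites condition~(3) of Lemma~\ref{lem:massaging}; these amount to the same thing.
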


To see how this provides a bound on the total running time of the cutting procedure from Section \ref{sec:ball_growing_fails}, recall that these expensive cut computations are always followed by an insertion of a new set into $\cC_i$ with $i < \ell$. By the above lemma, summing over all $0<i<\ell$, we have a total of at most $2^{\ell}=O(m^{1/3})$ of such insertions. The resulting running time bound is $\tO(m^{4/3})$, as desired.

\begin{proof}[of Lemma~\ref{lem:number_of_insertions}]
Note that a new set $S$ may be inserted into $\cC_i$ with $i < \ell$ only when ball-growing fails, i.e., only during the cutting procedure described above in Section \ref{sec:ball_growing_fails}. By condition~\eqref{cond:size} of Lemma~\ref{lem:massaging}, the size of $S$ satisfies $\frac{m}{2^{i+1}} < |E(S)|$. Moreover, by condition \eqref{cond:disjoint}, it is disjoint from all other sets in $\cC_i$, thus any edge of $G$ can be covered by a set from $\cC_i$ only once. So, as the insertion of one set $S$ covers more than $\frac{m}{2^{i+1}}$ new edges, there must be less than $2^{i+1}$ such insertions.
\end{proof}

The rest of the running time analysis and the adequacy of our cutting procedure is summarized in the following two statements, whose proofs may be found in Appendices \ref{app:partitioning_running_time} and \ref{app:partitioning_alpha_boundedness}, respectively.

\begin{lemma} \label{lem:partitioning_running_time}
Our whole cutting procedure runs in $\tO(m^{4/3 })$ time.
\end{lemma}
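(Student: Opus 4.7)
The plan is to split the work of each invocation $Cut(P)$ into a ``cheap'' part of $\tO(|P|+|E(P)|)$ time (the ball-growing call from Lemma~\ref{lem:ball_growing}, the $O(\log m)$ embedding lookups of $\cR(v_1,v_i)$, classifying pieces as large or small, and inserting the small pieces into $\cC_\ell$) and an ``expensive'' part of $\tO(m)$ time that is incurred only when ball-growing fails (contracting $P_1$ and $P_2$, running the approximate minimum $s$-$t$ cut routine from Section~\ref{sec:approximating_minimum_cuts}, and applying Lemma~\ref{lem:massaging} to produce the set $S$). I would bound these two parts separately.

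For the expensive part, I would charge every failing invocation to the single insertion into some $\cC_i$ with $i<\ell$ that Lemma~\ref{lem:massaging} triggers. Lemma~\ref{lem:number_of_insertions} caps the total number of such insertions by $\sum_{i=0}^{\ell-1} 2^{i+1} = O(2^\ell) = O(m^{1/3})$, so the expensive part contributes at most $\tO(m)\cdot O(m^{1/3}) = \tO(m^{4/3})$ overall.

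For the cheap part, I would organize all $Cut$ invocations into a forest: each top-level call made by the outer sweep over unseen vertices is a root, and at each failing invocation we attach the two recursive calls $Cut(P')$ and $Cut(P'')$ as children. Internal nodes are exactly the failing invocations; there are $O(m^{1/3})$ of them, and each contributes at most $\tO(m)$ of cheap work, so their total cheap cost is already subsumed by the $\tO(m^{4/3})$ bound. Leaves are precisely the successful invocations. The key observation is that a failing invocation splits $P$ into disjoint pieces $P' = P\cap S$ and $P'' = P\setminus S$ (cf.\ condition~\eqref{cond:separation} of Lemma~\ref{lem:massaging}), while a successful invocation marks all vertices of its $P$ as seen so that they are not revisited by the outer loop. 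Consequently the leaf components form a partition of $V(G)$, yielding $\sum_{\text{leaves}}(|P|+|E(P)|) \le n+m = O(m)$ and at most $\tO(m)$ of cheap work summed over the leaves.

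The main obstacle I anticipate is ensuring that the ball-growing and embedding work, which recurs at every node of the forest, is not double-counted. Leaf disjointness handles this at the leaves, while at internal nodes the recurrence is absorbed by the small $O(m^{1/3})$ count of failing invocations together with the crude per-node bound $|P|+|E(P)|=O(m)$. The one-time preprocessing to construct the embedding $\cR$ costs $\tO(m)$ by Theorem~\ref{thm:SpielmanSrivastava} and each distance query costs only $O(\log m)$, so the total embedding-related work is also dominated by the bounds above. Summing all contributions gives the desired $\tO(m^{4/3})$ running time.
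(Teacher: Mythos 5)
Your proof is correct and takes essentially the same approach as the paper: you model the $Cut$ invocations as a binary forest whose internal nodes (ball-growing failures) are bounded by $O(m^{1/3})$ via Lemma~\ref{lem:number_of_insertions} and cost $\tO(m)$ each, while the leaves (successes) partition $V(G)$ and thus contribute $\tO(m)$ in total. The only differences are cosmetic (explicit leaf/internal-node decomposition, and the inclusive index $i=0$ in the sum, which is a harmless overcount since insertions only occur at $0<i<\ell$).
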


\begin{lemma} \label{lem:partitioning_alpha_boundedness}
The new covering family $\cC$ is $(\alpha + \alpha^*)$-bounded with $\alpha^* = \log^{O(1)}m$.
\end{lemma}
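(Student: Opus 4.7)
The plan is to control, for each level $0 \le i \le \ell$, the number of new $i$-based boundary edges introduced by our cutting procedure, showing this is at most $\alpha^* m^{1/3} 2^i$ with $\alpha^* = \log^{O(1)} m$; added to the initial $\alpha m^{1/3} 2^i$ bound this yields the claimed $(\alpha + \alpha^*)$-boundedness. Boundary edges are introduced by exactly two mechanisms in the procedure described above: (a)~insertions of a set $S$ into some $\cC_i$ with $i<\ell$ following a failed ball-growing (Section~\ref{sec:ball_growing_fails}); and (b)~additions of small pieces to $\cC_\ell$ following a successful ball-growing. I will handle each source separately.

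For case (a) at a fixed level $i<\ell$, I will apply condition~\eqref{cond:boundary_edges} of Lemma~\ref{lem:massaging}, which says that every $e \in \partial S$ either was already a boundary edge of some set in $\cC_j$ with $j \le i$, or belongs to the cut $\cut$ returned by the approximate minimum $s$-$t$ cut routine. Pre-existing boundary edges at levels $j<i$ cannot be $i$-based, and those already in the boundary of $\cC_i$ were counted before the insertion, so each insertion of $S$ contributes at most $|\cut|$ fresh $i$-based edges. Lemma~\ref{lem:good_cut} combined with~\eqref{eq:before_good_cut_lemma} gives $|\cut| = O(m^{1/3})$, and Lemma~\ref{lem:number_of_insertions} caps the number of insertions into $\cC_i$ by $2^{i+1}$; hence case (a) contributes at most $O(m^{1/3}) \cdot 2^{i+1} = O(m^{1/3} 2^i)$ new $i$-based edges at each level $i<\ell$.

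For case (b), which affects only level $\ell$, I first observe that any edge $e$ strictly inside a component $P \in \cP(\ocC)$ cannot be a boundary edge of any set in the current $\ocC$: otherwise $e$'s endpoints would lie in different overlay components. Consequently every cut produced by a ball-growing on $P$ becomes a brand-new $\ell$-based boundary edge, and by Lemma~\ref{lem:ball_growing} a single successful ball-growing on $P$ produces at most $O\!\left(\tfrac{\log m}{m^{1/3}}\right)|E(P)|$ such edges. The main obstacle, I believe, is to establish the charging inequality
\begin{equation*}
\sum_{\text{successful ball-growings}} |E(P)| \;\le\; m.
\end{equation*}
This will follow from the outer procedure's bookkeeping: each edge $e$ of $G$ is interior to the ball-grown component during at most one successful ball-growing, since once $Cut(P)$ succeeds every vertex of $P$ is marked seen, and hence neither any later outer iteration nor any already-returned ancestor call in the same recursion tree can ever revisit $e$ as an interior edge.

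Combining this charging bound with the per-ball-growing edge-count yields a case (b) contribution of $O(\log m / m^{1/3}) \cdot m = O(m^{1/3} \cdot 2^\ell \cdot \log m)$ new $\ell$-based edges. Summing the two cases, at every level $0 \le i \le \ell$ the number of new $i$-based boundary edges is at most $O(\log m) \cdot m^{1/3} 2^i$, so $\alpha^* = O(\log m) = \log^{O(1)} m$ suffices, which completes the argument.
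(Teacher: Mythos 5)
Your proof is correct and follows the paper's argument essentially verbatim: the same split into cut-induced insertions at levels $i < \ell$ (bounded via condition~\eqref{cond:boundary_edges} of Lemma~\ref{lem:massaging}, the $O(m^{1/3})$ cut-size bound from Lemma~\ref{lem:good_cut} with \eqref{eq:before_good_cut_lemma}, and the insertion count from Lemma~\ref{lem:number_of_insertions}) versus ball-growing-induced edges at level $\ell$ (bounded by disjointness of the successfully ball-grown components). Your explicit justification of the charging inequality $\sum |E(P)| \le m$ via the seen-marking bookkeeping is a slightly more detailed unpacking of the paper's one-line remark that the relevant ball-growing calls concern disjoint subgraphs, but it is the same idea.
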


We are now in a better position to understand the idea behind the notion of $\alpha$-bounded covering families. Namely, if we have a set $C$ that we want to insert into our covering family $\ocC$, we should strive to do so at the lowest level possible (i.e., the lowest level $i$ where the condition $|E(C)| \le \frac{m}{2^i}$ is satisfied). By doing so, we will guarantee the $\alpha$-boundedness of the resulting covering family (as above), which encodes the trade-off between the number of boundary edges on any level $i$ and sizes of sets on that level (see the discussion in Section~\ref{sec:shattering_alpha_bounded_families}). For instance, if we have a large set $C$, we should place it on a low level $i$ of the covering family; this, through the $\alpha$-boundedness property, guarantees that there will be few boundary edges on that level (and such edges are expensive, as each one incurs a preprocessing time cost proportional to the maximum size of sets in $\cC_i$ -- which is roughly the size of $C$). On the other hand, we should place small sets on high levels of the family; because they are small, we can afford to have many more boundary edges incident to them.
In this way, one can view the $O(\log m)$-level covering family as a way of making sure that we have a right level $i$ on which to insert any set $C$ that we might want to use for shortcutting our random walk.

\section{Fast Sampling from Marginal Distributions} \label{sec:sampling} 

In this section, we prove Lemma \ref{lem:conditioning_main}. That is, we show how, given a covering family $\ocC=(\cC_0,\ldots \cC_{\ell})$ in a graph $G$ that is shattering and $\alpha$-bounded, for some $\alpha$ being polylogarithmic in $m$, one can sample from the minimum-age interior $F^*$ of $\ocC$ in (expected) $\tO(m^{4/3})$ time. (See Section \ref{sec:overview} for all the necessary definitions, and recall that in our convention $m$ refers to the number of edges of our original input graph and thus our graph $G$ here can have potentially much less than $m$ edges.)

As the covering family $\ocC$ that we will be dealing here is fixed, let us simply use $\cP$ to denote the overlay partition $\cP(\ocC)$ of $\ocC$ and fix $r^*$ to be the age of $\ocC$, i.e., the age $\age(P)$ of a minimum-age component $P$ in $\cP$. Also, let us denote by $\cP^*$ the collection of all such minimum-age components $P$ of $\cP$, and make $V^*$ be the union of the vertices of all the components in $\cP^*$. 

Our goal is to sample from the $F^*$-marginal distribution of $G$ (cf.~Section \ref{sec:marginals_overview}). This means that we want to sample a random subset of $F^*$ according to the distribution that returns $F\subseteq F^*$ with probability 
\[
|\{T\in \cT_G: T\cap F^*=F\}|/|\cT_G|,
\]
where $\cT_G$ denotes the set of all spanning trees of $G$. In other words, we want to find the intersection of $F^*$ with a random spanning tree.

As we already discussed, in light of Theorem \ref{thm:rand_tree_via_rand_walk}, this could be done by, first, simulating the random walk starting at some vertex $s$ until all the first-visit edges $e_v$ to $v\in V^*\setminus \{s\}$ are known and, then, returning the intersection $F'$ of $F^*$ and all the edges $e_v$ found. 

Unfortunately, even though we do not need to learn here what the first-visit edges $e_v$ are for all vertices of $G$ -- only the ones that could be in $F^*$, the expected length of the necessary random walk might still be comparable to the cover time of the whole graph. Therefore, we cannot afford to simply execute this simulation.

\subsection{Shortcutting}\label{sec:shortcutting}
To cope with the above-mentioned problem and obtain the desired efficient sampling procedure, we build on the approach developed by Kelner and Mądry \cite{KelnerM09} to sample from the whole random spanning tree distribution. 

Namely, the key observation to make here is the following: even though the transcript of the whole walk defining the first-edges $e_v$ that we need, might be quite long, i.e., even $\Omega(mn)$, in the end, we only utilize at most $|V^*|\leq n$ of its entries. Therefore, one might hope that there is a way of directly generating a  shortcut transcript of that walk, that is much shorter (and, hopefully, possible to generate much more efficiently) while still retaining the information we need.

To get an intuition how that shortcutting could be implemented, let us consider some subset $S$ of vertices of $G$. (For now, one can think that $S$ corresponds to one of the components of $\cP^*$.) Imagine that we are faithfully simulating the random walk in $G$ step-by-step, except that once we cover $S$ we start shortcutting our trajectories inside this set. Namely, whenever we enter some vertex $v\in S$ after $S$ was already covered, we would like to immediately leave $S$ by jumping out over one of the edges of its boundary $\partial S$. Observe that by doing so we are not losing any information that we need later -- at this point, we already know all the first-visit edges $e_v$ for all the vertices in $S$. So, the only challenge here is to ensure that the edge through which we leave $S$ is sampled consistently with the distribution induced by the random walk.  

More precisely, to implement the above shortcutting strategy, we need to be able, for each vertex $v\in S$, to sample from the distribution $\{P_v(e)\}_{e\in \partial S}$, where $P_v(e)$ is the probability that a random walk started at $v$ exits $S$ for the first time via the edge $e$.

Fortunately, as shown by Kelner and Mądry \cite{KelnerM09} and Propp \cite{Propp10}, by exploiting the connection between random walks, electrical flows and the fast (approximate) Laplacian system solvers \cite{SpielmanT03,SpielmanT04,KoutisMP10,KoutisMP11,KelnerOSZ13}, one can indeed construct a data structure -- from now on, called {\em $S$-shortcut sampler} -- that allows one to quickly sample from such edge-exit distribution. Moreover, as made precise in the lemma below, this construction is efficient provided that the size of the boundary of $S$ is relatively small compared to the size of its interior. 

\begin{lemma}\label{lem:cost_shortcutting_sampling}
Given a subset $S \subset V$, we can construct a data structure -- called {\em $S$-shortcut sampler} -- that allows one to sample, for a given vertex $v\in S$, an edge $e\in \partial S$ according to the distribution $\{P_v(e)\}_{e\in \partial S}$ in $O(\log m)$ time. The total expected time spent on constructing and maintaining this data structure is $\tO(|S|+|E(S)||\partial S|)$.
\end{lemma}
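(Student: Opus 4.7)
The plan is to reduce exit-edge sampling to a small number of fast Laplacian solves on the subgraph induced by $S$, following the approach of Kelner--Mądry~\cite{KelnerM09} and Propp~\cite{Propp10}. The starting point is the identity
\[
P_v(e) \;=\; \frac{N_{v,u}}{d(u)} \;=\; (L^{-1})_{v,u},
\]
valid for every boundary edge $e=(u,w)$ with $u\in S$, $w\notin S$, where $N_{v,u}$ is the expected number of visits to $u$ by the random walk started at $v$ before it leaves $S$, and $L$ is the principal submatrix of the Laplacian of $G$ indexed by $S$ (equivalently, the Dirichlet Laplacian on $S$ that grounds all external vertices). This can be checked by writing $P=D^{-1}A_S$ for the restricted sub-stochastic transition matrix, observing $I-P=D^{-1}L$, and so $(I-P)^{-1}=L^{-1}D$, whence $N_{v,u}=(L^{-1})_{v,u}\,d(u)$. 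In particular, all boundary edges sharing the same inner endpoint $u$ have identical exit probability, so it suffices to compute a single column of $L^{-1}$ per interior endpoint of a boundary edge.

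Concretely, I would first invoke a fast (approximate) SDD/Laplacian solver on the system $Lx=\mathbf{e}_u$ for each $u\in N_{\mathrm{in}}(S):=\{u\in S : u \text{ has a boundary edge}\}$. Since $L$ has $O(|S|+|E(S)|)$ nonzeros, each solve runs in $\tO(|E(S)|+|S|)$ time, and there are at most $|\partial S|$ such solves, yielding an aggregate cost of $\tO(|S|+|E(S)|\cdot|\partial S|)$. From the resulting values $(L^{-1})_{v,u}$, for each $v\in S$ I would assemble the full vector of exit probabilities $\{P_v(e)\}_{e\in\partial S}$ (each boundary edge $e=(u,w)$ inheriting weight $(L^{-1})_{v,u}$) and store it as a cumulative-distribution array or alias table. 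A query at $v$ then consists of drawing a single uniform variate and performing one binary search, costing $O(\log|\partial S|)=O(\log m)$ time.

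The two main obstacles -- both addressed in the cited references -- are: (a) the Laplacian solver returns only an $\varepsilon$-approximation of $L^{-1}\mathbf{e}_u$, so the sampled distribution is only approximately $\{P_v(e)\}_e$; this is upgraded to an \emph{exact} sample via Propp's rejection-sampling wrapper, which is unbiased and inflates the expected runtime only by polylogarithmic factors; and (b) careful incidence bookkeeping is required when several boundary edges share an inner endpoint (so that the $1/d(u)$ weights are applied correctly) and when restricting sampling-structure preprocessing to the vertices that can actually be queried (the vertices at which the random walk ever enters $S$, namely $N_{\mathrm{in}}(S)$, of which there are at most $|\partial S|$), so that the per-vertex CDF construction fits within the $\tO(|S|+|E(S)|\cdot|\partial S|)$ budget. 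Combining these three ingredients -- the Laplacian-solver-based computation of $L^{-1}$ columns, the per-vertex CDF structures, and Propp's exactness wrapper -- gives the claimed construction and query bounds.
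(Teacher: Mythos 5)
Your proof is correct and, at the algorithmic level, is the same reduction as the paper's: both boil down to solving the grounded (Dirichlet) Laplacian $L$ of $S$ -- the $S\times S$ principal submatrix of the Laplacian of $G$ -- against right-hand sides $\mathbf{e}_u$ for $u$ ranging over the inner endpoints of boundary edges, then assembling per-vertex cumulative-distribution arrays queried by binary search, with Propp's adaptive-precision wrapper turning the approximate solver output into exact samples. The derivations differ, though: the paper constructs, per boundary edge $e=(u,u')$, an auxiliary electrical network on $S\cup\{u',u^*\}$ and invokes the classical voltage--absorption correspondence, whereas you derive $P_v(e)=(L^{-1})_{v,u}$ directly from the Green's function of the sub-stochastic restricted walk via $(I-D^{-1}A_S)^{-1}=L^{-1}D$. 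If one works out the paper's voltage equations, the system one solves is exactly $L\phi=\mathbf{e}_u$, so the two are the same computation; your derivation is a bit more direct, and it makes explicit that one solve per inner endpoint (rather than one per boundary edge, as the paper counts) suffices -- a small tightening that is implicit but not stated in the paper. Two minor caveats. Your suggestion to restrict CDF preprocessing to $v\in N_{\mathrm{in}}(S)$ is unnecessary -- after the paper's reduction to connected $E(S)$ one has $|S|\le|E(S)|+1$, so $|S|\cdot|\partial S|=O(|E(S)||\partial S|)$ already -- and it does not quite match the lemma, which promises queries at an arbitrary $v\in S$; the algorithm does need this, since the first jump out of $S(P)$ for $P\in\cP^*$ is issued from wherever the walk sits when it finishes covering $P$, which need not be an inner boundary endpoint. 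Also, Propp's device is adaptive precision refinement (draw a uniform $r$, then repeatedly halve $\varepsilon$ and re-solve until $r$ can be placed above or below the approximate threshold) rather than rejection sampling, though your high-level description of its effect -- exactness at polylogarithmic expected overhead -- is accurate.
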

This lemma follows from a simple adaptation of the techniques of Kelner and Mądry (see Lemma 7 and 9 in \cite{KelnerM09}) and the observation of Propp \cite{Propp10} (also see Lemma~7.3.2 in \cite{Madry11}). We sketch its proof in Appendix~\ref{app:cost_shortcutting_sampling}. An interesting feature of this construction is that it never computes the exact values of the probabilities $P_v(e)$ (this is a consequence of fast Laplacian system solvers being only approximate). Nonetheless, it still allows one to obtain an {\em exact} sample from the corresponding distribution.

\subsection{Our Sampling Algorithm}

Having introduced the shortcutting technique, we can proceed to describing our sampling algorithm. For now, we do it assuming a (technical) condition that the overall effective resistance diameter $\diameff(G)$ of our graph is relatively small, i.e., $\diameff(G) = O(m^{2/3})$. Later, in Section \ref{sec:deadling_with_large_diameter_case}, we deal with this assumption. 

Recall that in \cite{KelnerM09}, Kelner and Mądry obtain their shortcut transcript of a covering random walk in $G$ by, first, finding a simple ball-growing-based partition of the graph into pieces that have small graph diameter and, then, running simulation of the walk in $G$ while implementing the shortcutting procedure (as described above) for each such piece. As this partition cuts relatively small number of edges and -- due to their small diameter -- the individual cover time of each piece is small too, they are able to recover the whole random spanning tree within their claimed running time bounds. 

Given all of that, it might be tempting to follow this approach also in our case. Namely, we could simulate a random walk in $G$ until it covers all the vertices in $V^*$, while applying the shortcutting procedure to each component of the overlay partition $\cP$ of our covering family $\ocC$. In this case, we know that all these components have small effective resistance diameter (as $\ocC$ is shattering). So, it follows (cf. Section~\ref{sec:random_walks_restricted_to_subgraphs}) that the individual cover time of each is sufficiently small, as needed. Furthermore, one might hope that the fact that $\ocC$ is $\alpha$-bounded implies that the size of boundaries of each of these components is also sufficiently small (compared to their size) and thus the construction times of corresponding $S$-shortcut samplers (see Lemma~\ref{lem:cost_shortcutting_sampling}) are acceptable. 

Unfortunately, in general, this is not the case. It might actually happen that the sizes of boundaries of the components of $\cP$ are prohibitively large and thus we are still hitting here the same bottleneck as the original Kelner-Mądry approach faces -- see our discussion in Section \ref{sec:our_approach}. (In fact, the graph from Figure \ref{fig:barrier_example} is exactly the example showing that.)

In light of the above, we have to design our shortcutting strategy in a different and much more careful way. This will crucially exploit our lack of need of knowing what the first-visit edges are for the components of $\cP$ that are not in $\cP^*$. Namely, the key realization here is that if all we care about is preserving the first-visit information only for the components of $\cP^*$, there is much more flexibility in the way we implement the shortcutting of our walk. In particular, for a given component $P$ of $\cP^*$ we do not really need to take $S=P$ and construct the corresponding $S$-shortcut sampler (cf. Lemma~\ref{lem:cost_shortcutting_sampling}); it suffices to find some different -- {\em larger} set $S(P)$ that, on one hand, contains $P$ and has small number of boundary edges and, on the other hand, does not overlap any other component in $\cP^*$ (but still can overlap components that are components of $\cP\setminus \cP^*$). In fact, a similar observation applies also to other components of $\cP$ (that might not be in $\cP^*$).

To make this precise, for a given component $P$ of the overlay $\cP$ (that might or might not be in $\cP^*$), let us call a set $S\subseteq V$ an {\em escape set} for $P$ iff $P\subseteq S$ and $S$ does not intersect any component of $\cP^*$ other than $P$ (if $P$ is in $\cP^*$).

Now, it turns out that once we find for each component $P$ its escape set $S(P)$ which is not too big and has sufficiently small boundary, we can use these sets to generate a shortcut version of our random walk. This shortcutting is based on applying a procedure that, whenever our walk enters a vertex $v$ belonging to some component $P$ of $\cP$, does the following:
\begin{itemize}
	\item if $P \in \cP^*$:
	\begin{itemize}
		\item if $P$ has already been covered, perform a jump out of $S(P)$;
		\item otherwise: faithfully simulate a single step of the walk;
	\end{itemize}
	\item otherwise: perform a jump out of $S(P)$.
\end{itemize}
We stop the random walk at the moment we have covered all components $P \in \cP^*$. Note that this procedure always faithfully simulates the walk inside a component from $\cP^*$ until this component is fully covered. Also, whenever we perfom a jump over an escape set and land directly inside a component of $\cP*$, we know what the last traversed edge (through which we entered this component) was. So this shortcutting still retains all the information needed to recover the sample from the $F^*$-marginal.

Now, of course, the crucial part is choosing the sets $S(P)$ appropriately, so that the resulting running time of our simulation is within the desired bounds. This is the aim of the next section. 

\subsection{Choosing Appropriate Escape Sets}
Our goal here is to define the escape sets $S(P)$ for $P\in \cP$ that we use for shortcutting the walk, as described above. To this end, let us introduce a new piece of notation.

For a given $0\leq j\leq \ell$, let us define $\cP_j$ to be the partition resulting from superimposing on top of each other all the sets from subdivisions $\cC_0, \cC_1, \dots, \cC_j$. (So, we have that $\cP_{\ell} = \cP$, $\cP_0 = \{V\}$ and each $\cP_{j}$ is a coarsening of $\cP_{j+1}$.)

 Furthermore, for a component $P \in \cP$, let us define $\cP_j(P)$ to be the component of $\cP_j$ that contains $P$. Finally, we say that a component $R$ of $\cP_j$ is \emph{modest} if it is contained in a set $S \in \cC_j$. The reason why we want to distinguish modest components is because they are small enough for our purposes. Namely, if $R$ is modest on level $j$ and $R\subseteq S\in \cC_j$, then $|E(R)|\leq |E(S)|\leq m/2^{j}$, which might not be true for an arbitrary component.

\begin{lemma}
\label{lem:different_Pp_components}
Let $P_1$ and $P_2$ be different components of $\cP$ with $\age(P_1) = \age(P_2) = r$. Then they belong to different components of $\cP_r$: $\cP_r(P_1) \neq \cP_r(P_2)$.
\end{lemma}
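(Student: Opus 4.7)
The plan is to derive a contradiction by leveraging the definition of age together with the way distinct components of the overlay $\cP$ must be separated by some set of the covering family.

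First I would recall the defining property of the overlay: for any set $C$ appearing in any subdivision $\cC_j$, every component of $\cP$ is either entirely contained in $C$ or entirely disjoint from $C$ (otherwise two vertices of the component would be separated by $C$). Hence, since $P_1$ and $P_2$ are distinct components of $\cP$, there must exist some subdivision $\cC_j$ (with $0 \le j \le \ell$) containing a set $C$ that separates them, i.e., after possibly swapping the roles of $P_1$ and $P_2$, we have $P_1 \subseteq C$ while $P_2 \cap C = \emptyset$.

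Next I would assume for contradiction that $\cP_r(P_1) = \cP_r(P_2)$ and show that the separating set $C$ above cannot come from any level $j \le r$. Picking arbitrary $v_1 \in P_1$ and $v_2 \in P_2$, the assumption says that $v_1$ and $v_2$ lie in the same component of $\cP_r$, so no set in $\cC_0 \cup \cdots \cup \cC_r$ separates $v_1$ from $v_2$. Combining this with the overlay property applied to $\cP$ (which tells us each of $P_1, P_2$ is either contained in or disjoint from such a set), I would conclude that every $C' \in \cC_0 \cup \cdots \cup \cC_r$ satisfies either $P_1 \cup P_2 \subseteq C'$ or $(P_1 \cup P_2) \cap C' = \emptyset$. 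In particular the separating set $C$ constructed in the previous paragraph cannot live at a level $j \le r$; so we must have $j > r$.

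Finally, I would close the argument by observing that the set $C \in \cC_j$ with $j > r$ contains $P_1$, which directly contradicts $\age(P_1) = r$ — by definition the age of $P_1$ being $r$ means $P_1$ is not contained in any set belonging to a subdivision $\cC_{j'}$ with $j' > r$. I do not expect any real technical obstacle: the proof is essentially careful bookkeeping between the definitions of the overlay $\cP_j$, of age, and of the age witness. The only step requiring a bit of care is correctly translating ``$\cP_r(P_1) = \cP_r(P_2)$'' into the joint in/out behavior of the two whole components $P_1$ and $P_2$ with respect to every set in $\cC_0, \dots, \cC_r$, which requires simultaneously invoking the overlay property for both $\cP_r$ (at the level of the chosen vertices) and $\cP$ (to lift from vertices to full components).
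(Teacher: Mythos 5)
Your proof is correct and takes essentially the same approach as the paper's: both hinge on observing that since $P_1 \neq P_2$ there is some set in $\ocC$ containing exactly one of them, and then using $\age(P_1)=\age(P_2)=r$ to force that set to lie at level at most $r$. You run the argument in contrapositive form (assume $\cP_r(P_1)=\cP_r(P_2)$ and derive that the separating set would have to live above level $r$, contradicting the age), whereas the paper argues directly, but the key facts invoked are identical.
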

\begin{proof}
Since $P_1 \neq P_2$, for some $i^*$ there must exist a set $S \in \cC_{i^*}$ such that exactly one of $P_1,P_2$ is contained in $S$. But because of age, neither $P_1$ nor $P_2$ is contained in any set $S \in \cC_i$ for $i > r$. Therefore, $i^* \le r$.
\end{proof}

We infer from the lemma that $\cP_{r^*}(P_1)$ and $\cP_{r^*}(P_2)$ are disjoint and are both modest whenever $P_1,P_2$ are two different pieces in $\cP^*$. It follows that $\cP_{r^*}(P)$ satisfies all the required conditions for an escape set of $P\in \cP^*$, hence we define $S(P):=\cP_{r^*}(P)$. 

For the remaining components $P\in \cP\setminus \cP^*$ we choose $S(P):=W(P)$ -- the witness set of $P$. Then we have $P\subseteq W(P)$, $W(P)\in \cC_{\age(P)}$ and $\age(P)>r^*$. So clearly, none of the components in $\cP^*$ intersects $S(P)$.

Let us now look more closely at the exceptional situation when $r^*=0$. One may see that in this case there is exactly one component in $\cP$ of age $0$, say $P$. It appears that according to our definition, $S(P)=V$, which makes little sense (we cannot jump out of $V$). Fortunately, in this special case we stop the walk immediately after the single component gets covered. This means there is no use for the $S(P)$ set. Note that this issue concerns only the border case $r^*=0$; if $r^*>0$, every $S(P)$ is a proper subset of $V$ and has some boundary edges through which we may jump out of it.

In Section~\ref{sec:analysis_sampling} we will see that such a choice of escape sets allows us to prepare all $S(P)$-shortcut samplers in reasonable time. Now we wish to proceed to the runtime analysis of our entire sampling procedure. We start with some preparations in the next section.

\subsection{Random Walks Restricted To Subgraphs} \label{sec:random_walks_restricted_to_subgraphs}

In this section we state a lemma which will be instrumental to our analysis of the random walk simulation. It bounds the expected length of a random walk which covers a given subgraph.

\begin{lemma} \label{lem:ultimate}
Let $G = (V,E)$ be an unweighted graph and $V' \subseteq V$. Also let $E' \subseteq E(V')$ be any set of edges inside $V'$. Suppose we run a random walk in $G$ from an arbitrary vertex (not necessarily from $V'$) and stop it once all vertices from $V'$ have been visited. Then the expected number of traversals of edges from the set $E'$ is \[O(|E'| \cdot \diameff(V') \cdot \log n).\]
\end{lemma}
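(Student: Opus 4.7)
The plan is to adapt the classical Matthews cover-time argument, working in the effective resistance metric and tracking visit counts rather than total walk length. Since every edge of $E' \subseteq E(V')$ has both endpoints in $V'$, no $E'$-traversal can occur before the walk first reaches $V'$, so by the strong Markov property we may assume without loss of generality that the walk starts at some vertex $s \in V'$. Let $\tau$ denote the first time $V'$ is covered and $V_v(\tau)$ the number of visits to $v$ strictly before $\tau$. A standard optional-stopping computation gives
\[
\bE\bigl[N_{E'}(\tau)\bigr] \;=\; \sum_{(a,b)\in E'}\!\left(\frac{\bE[V_a(\tau)]}{\deg(a)}+\frac{\bE[V_b(\tau)]}{\deg(b)}\right)\;=\;\sum_{v\in V'}\frac{\deg_{E'}(v)}{\deg(v)}\,\bE[V_v(\tau)],
\]
so it suffices to prove the per-vertex bound $\bE[V_v(\tau)]/\deg(v) = O(\diameff(V')\log n)$ for every $v \in V'$; summing $\deg_{E'}(v)$ over $v\in V'$ then contributes the $2|E'|$ factor.

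To establish this bound I would combine Tetali's identity with a Matthews-style coupling. Tetali's formula states that for any vertices $s,u,v$,
\[
\bE^s[V_v(\tau_u)] \;=\; \tfrac{\deg(v)}{2}\bigl(\Reff(s,u)+\Reff(u,v)-\Reff(s,v)\bigr),
\]
which is at most $\deg(v)\,\diameff(V')$ whenever $s,u,v \in V'$. Now fix $v\in V'$, draw a uniformly random ordering $v_1,\ldots,v_k$ of $V'$ independent of the walk, set $\tau_{v_i}$ to be the first hitting time of $v_i$ and $Y_j := \max_{i \le j}\tau_{v_i}$, so that $Y_k = \tau$, and telescope
\[
\bE[V_v(\tau)] \;=\; \bE[V_v(Y_1)] + \sum_{j=2}^k \bE\bigl[V_v(Y_j) - V_v(Y_{j-1})\bigr].
\]
For $j \ge 2$, the $j$-th increment is nonzero only when $v_j$ happens to be the last-visited vertex among $v_1,\ldots,v_j$, an event of probability $1/j$ by the symmetry of the random ordering; conditional on this event, $X_{Y_{j-1}}$ is some $v_{i^*} \in V'$ and $v_j$ is still unvisited, so by the strong Markov property the conditional expected increment equals $\bE^{v_{i^*}}[V_v(\tau_{v_j})] \le \deg(v)\,\diameff(V')$. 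The $j=1$ term is handled analogously using that $s \in V'$. Summing yields $\bE[V_v(\tau)] = O(\deg(v)\,\diameff(V')\,\log n)$, which combined with the first display gives the claimed $O(|E'|\,\diameff(V')\,\log n)$ bound.

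The main technical obstacle I anticipate is making the optional-stopping identity and the Matthews-style decomposition rigorous in this visit-count setting (as opposed to the classical hitting-time setting), including boundary cases where $v$ coincides with some $v_i$ or where $Y_{j-1} = 0$. A clean fix is to first truncate the walk at a deterministic horizon $T$, carry out the computation, and then let $T \to \infty$, using that $\bE[\tau] < \infty$ (which itself follows from Lemma~\ref{lem:cover_weighted} applied to $V'$ viewed as a subgraph, or from a coarse direct bound). The remaining ingredients are Tetali's identity, the triangle inequality for $\Reff$, and the standard Matthews probability calculation based on independence between the random ordering and the walk.
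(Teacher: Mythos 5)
Your proposal is correct (modulo the technical care you already flag), and it is a genuinely different route from the paper's. The paper proves the claim in three moves: a ``cover half of $V'$ in time $2M$'' doubling lemma (Lemma~\ref{lem:LovaszCoverLog}), an application of that lemma in geometrically shrinking phases to bound the expected time for a walk started at $v\in V'$ to cover $V'$ \emph{and return to $v$} by $O(|E|\,\diameff(V')\log n)$ (Lemma~\ref{lem:preUltimate}), and then a renewal-theory argument (Lemma~\ref{lem:renewal_theory}) that converts this cover-and-return time into a per-edge traversal bound of $O(\diameff(V')\log n)$ by comparing against the stationary walk, where every edge carries a $1/|E|$ fraction of the steps. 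You instead reduce edge traversals to vertex visits via an optional-stopping/Wald identity, bound the per-vertex occupation measure $\bE[V_v(\tau)]$ directly via Tetali's Green's-function identity together with the classical Matthews random-permutation device, and then sum. Both proofs extract the $\log n$ factor from a Matthews-type mechanism, but in different disguises: the paper uses the doubling/half-cover form (whence the return-to-$v$ requirement, which is exactly what makes the renewal decomposition into i.i.d.\ excursions work), while you use the random-ordering/harmonic-sum form directly on occupation times. Your approach avoids renewal theory entirely -- the occupation-measure reduction replaces the stationarity argument -- at the cost of invoking Tetali's identity $\bE^s[V_v(\tau_u)]=\tfrac{\deg(v)}{2}(\Reff(s,u)+\Reff(u,v)-\Reff(s,v))$, a standard but nontrivial fact the paper never needs. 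One minor correction to your writeup: $\bE[\tau]<\infty$ follows simply because $\tau$ is dominated by the cover time of the connected graph $G$, not from Lemma~\ref{lem:cover_weighted} applied to the induced subgraph on $V'$ (the walk lives in $G$, not in that subgraph, and the induced subgraph need not even be connected).
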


We prove this lemma in Appendix~\ref{app:proof_of_ultimate}. It may be viewed as a generalization of Lemma~\ref{lem:cover_weighted} in two aspects. First, it lets us look only at traversals of a subset of edges of the graph (for example, in the analysis of our algorithm this might be the set of boundary edges of $\ocC$). Second, we can restrict our attention to a subgraph induced by a vertex subset $V'$ and answer the question: what is the {\em subgraph cover time}, i.e., how many traversals of edges between vertices of $V'$ will there be if we decide to stop the walk as soon as $V'$ is covered? 

Note that this improves upon Lemma~6 of~\cite{KelnerM09}, not only by considering $\diameff$ rather than $\diamdist$, but also by dropping the assumption that the number of boundary edges of $V'$ is at most $|E(V')|$, and by letting us only consider a subset of edges. It also subsumes Fact~5 of~\cite{KelnerM09}. We thus see that both directions of generalizing the upper bound of $O(mn)$ for the cover time were known previously, but in weaker versions, separately and not using effective resistance. We believe that Lemma~\ref{lem:ultimate} is a robust and useful statement which is of independent interest.

We now apply the above lemma to analyze our algorithm.

\subsection{Analysis of the Sampling Algorithm}\label{sec:analysis_sampling}

If we recall our simulation of the random walk that we described in the previous section, there are two main ingredients in the running time of that procedure: the expected time needed to prepare the data structures for implementing shortcutting as in Section \ref{sec:shortcutting}, and the expected running time of the actual simulation of the shortcut random walk. Below, we analyze both of these separately.

\begin{lemma}\label{lem:preparation_cost}
The total expected time spent on preparing the structures for shortcutting sets $S(P)$ for $P\in \cP^*$ is $\tO(m^{4/3})$.
\end{lemma}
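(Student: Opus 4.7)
The plan is to invoke Lemma~\ref{lem:cost_shortcutting_sampling} to upper-bound the preparation cost of each individual $S(P)$-shortcut sampler, then sum these costs across all $P \in \cP^*$ exploiting (i) the fact that the sets $\{S(P) : P \in \cP^*\}$ form a family of \emph{disjoint} modest components of $\cP_{r^*}$, and (ii) the $\alpha$-boundedness of the covering family $\ocC$.

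Step one is the per-set cost bound: by Lemma~\ref{lem:cost_shortcutting_sampling}, building the $S(P)$-shortcut sampler takes expected time $\tO(|S(P)| + |E(S(P))| \cdot |\partial S(P)|)$. Step two is to exploit modesty: since $S(P) = \cP_{r^*}(P)$ is a modest component of $\cP_{r^*}$ (it is contained in some set of $\cC_{r^*}$, namely the age witness of $P$), we have the uniform bound $|E(S(P))| \le m/2^{r^*}$. Step three is to bound the total boundary: by Lemma~\ref{lem:different_Pp_components}, the $S(P)$'s are pairwise disjoint, so each edge of $G$ appears in $\partial S(P)$ for at most two choices of $P$; moreover every edge $e \in \partial S(P)$ must cross some set in $\cC_0,\dots,\cC_{r^*}$, i.e.\ it is a boundary edge of $\cC_j$ for some $j \le r^*$. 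Hence $\sum_{P \in \cP^*} |\partial S(P)|$ is at most twice the total number of boundary edges of the subdivisions $\cC_0,\dots,\cC_{r^*}$.

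Step four is to plug in $\alpha$-boundedness. By Fact~\ref{fa:alpha_boundness_and_boundary_edges}, $\cC_j$ contributes at most $2\alpha m^{1/3} 2^{j}$ boundary edges, so summing the geometric series yields
\[
\sum_{P \in \cP^*} |\partial S(P)| \;\le\; 4\sum_{j=0}^{r^*} \alpha m^{1/3} 2^{j} \;=\; O\!\left(\alpha m^{1/3} 2^{r^*}\right).
\]
Combining with the uniform bound $|E(S(P))| \le m/2^{r^*}$ from step two, the dominant term becomes
\[
\sum_{P \in \cP^*} |E(S(P))| \cdot |\partial S(P)| \;\le\; \frac{m}{2^{r^*}} \cdot O\!\left(\alpha m^{1/3} 2^{r^*}\right) \;=\; O(\alpha m^{4/3}) \;=\; \tO(m^{4/3}),
\]
where the last equality uses that $\alpha$ is polylogarithmic in $m$. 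The $\sum_{P} |S(P)| = O(n) = O(m)$ term is negligible. This yields the claimed $\tO(m^{4/3})$ total preparation cost.

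The main conceptual obstacle is recognizing that the right way to distribute the budget is \emph{multiplicatively}: the factor $2^{r^*}$ lost in the boundary sum over levels $0,\dots,r^*$ is exactly cancelled by the factor $2^{-r^*}$ gained from the modesty bound on $|E(S(P))|$. This cancellation is the whole reason the covering-family construction tracks size and boundary at each level in a geometrically graded fashion, and it is what makes the $m^{1/3}$ from $\alpha$-boundedness combine with an $m$ factor to land at $m^{4/3}$ rather than something larger. Everything else (disjointness of the $S(P)$'s, the $j \le r^*$ restriction on boundary edges) is direct from definitions.
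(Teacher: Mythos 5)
Your proof is correct and follows essentially the same approach as the paper: you invoke Lemma~\ref{lem:cost_shortcutting_sampling} per set, use modesty to get the uniform bound $|E(S(P))| \le m/2^{r^*}$, observe that boundary edges of the disjoint components $S(P)=\cP_{r^*}(P)$ are boundary edges of some $\cC_j$ with $j\le r^*$, and then use $\alpha$-boundedness and the geometric sum so that the $2^{r^*}$ factors cancel to give $\tO(m^{4/3})$; this is precisely the calculation the paper performs for the modest components of $\cP_{r^*}$. The only difference is one of scope: the paper's proof of this lemma also bounds, by the identical calculation, the cost of building samplers for every set $S\in\cC_i$ (which serve as escape sets $S(P)=W(P)$ for $P\notin\cP^*$ and are needed by the algorithm); since the lemma statement literally refers only to $P\in\cP^*$, your argument is complete with respect to the statement as written.
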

\begin{proof}
To initialize our data structures, we need to construct the shortcut samplers for all components in $\cC_i$ for $i=1,\dots,\ell$ and modest components of $\cP_{r^*}$. By Lemma \ref{lem:cost_shortcutting_sampling}, computing and maintaining a shortcut sampler for a set $S$ can be done in time $\tO(|S|+|E(S)|\cdot |\partial S|)$, where $\partial S$ is the set of boundary edges of $S$.

Consider first all sets $S\in \cC_i$, for some $i$. By Fact~\ref{fa:alpha_boundness_and_boundary_edges} and the bound $|E(S)| \le \frac{m}{2^{i}}$, the time needed to maintain the corresponding data structures is at most
\begin{eqnarray*}
\sum_{S \in \cC_i}\tO(|S|+|E(S)|\cdot |\partial S|)&=&\tO\left(n+\sum_{S \in \cC_i}|E(S)|\cdot |\partial S|\right)=\tO\left(n+\frac{m}{2^{i}} \cdot \sum_{S \in \cC_i} |\partial S|\right)\\
&=&\tO\left(n+\frac{m}{2^{i}} \cdot \sum_{S \in \cC_i} |\partial S|\right)=\tO\left(n+\frac{m}{2^{i}} \cdot m^{1/3}2^{i}\right)=\tO(m^{4/3}).
\end{eqnarray*}
Then, summing this time over all $i=1,\ldots,\ell$, we obtain that the overall time is $\tO(m^{4/3})$ as well.

Secondly, let us consider all modest components $R \in \cP_{r^*}$. First notice that $e \in E$ is a boundary edge of $\cP_{r^*}$ iff it is a boundary edge of $\cC_j$ for some $i \le r^*$. Hence, the number of such edges is, by Fact~\ref{fa:alpha_boundness_and_boundary_edges},
\[
\sum_{i \le r^*} \tO\left( m^{1/3} \cdot 2^i \right) = \tO(m^{1/3} \cdot 2^{r^*})
\]
as well. Also, for a modest component $R$ there exists an $S \in \cC_{r^*}$ with $R \subseteq S$, which implies that $|E(R)| \le |E(S)| \le \frac{m}{2^{r^*}}$. As a result, we can see that the previous calculation (that we made for the sets $S\in \cC_i$) holds also in this case. We conclude that the total time spent on maintaining shortcut samplers is $\tO(m^{4/3})$, as desired.
\end{proof}

\begin{lemma}\label{lem:walking_cost}
The expected time of performing the shortcut random walk simulation is $\tO(m^{4/3})$.
\end{lemma}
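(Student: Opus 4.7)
The plan is to convert the desired running time bound into a bound on the expected number of actions (normal steps and jumps) performed by the shortcut walk. By Lemma \ref{lem:cost_shortcutting_sampling}, the shortcut samplers prepared in Lemma \ref{lem:preparation_cost} let us execute each action --- either a single faithful random-walk step (which is a constant-time neighbor draw in the unweighted setting) or one jump out of an escape set --- in $O(\log m)$ time. It therefore suffices to show that the expected total number of actions taken by the shortcut walk is $\tO(m^{4/3})$; absorbing the extra $\log m$ factor into $\tO(\cdot)$ then yields the claim.

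To count actions, I would couple the shortcut walk with the underlying faithful random walk $X$, started at the same vertex and run until $V^*$ is covered. The coupling is the natural one: whenever $X$ sits inside an uncovered $P\in\cP^*$, the shortcut walk plays the corresponding single step of $X$; otherwise, the shortcut walk collapses a maximal excursion of $X$ inside an escape set $S(P)$ (one where $P\notin\cP^*$ or $P$ is already covered) into one jump. The coupling is valid because, by Lemma \ref{lem:cost_shortcutting_sampling}, the jump is distributed exactly as the exit distribution of $X$ from $S(P)$. Under this coupling, each action of the shortcut walk is either (i) a single step of $X$ taken from a vertex in an uncovered $P\in\cP^*$, or (ii) a jump paid for by at least one crossing, by $X$, of a boundary edge of $\ocC$.

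For (i), I would further split such an $X$-step according to whether it traverses an internal edge of $P$ (an edge of $E(P)$) or a boundary edge of $P$. The internal part is handled component-by-component: for each $P\in\cP^*$, Lemma \ref{lem:ultimate} applied with $V'=V(P)$ and $E'=E(P)$ bounds the expected number of internal-edge traversals (during the walk that covers $V^*$, which is bounded by the walk that covers $P$) by $O(|E(P)|\cdot\diameff(P)\cdot\log n)$. Because $\ocC$ is shattering, $\diameff(P)\leq\Dstar=O(m^{1/3})$, and since the sets $E(P)$ for distinct $P\in\cP^*$ are disjoint, summing gives $O(m\cdot m^{1/3}\cdot\log n)=\tO(m^{4/3})$. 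The remaining portion of (i), together with all jumps of type (ii), is accounted for by the number of times $X$ crosses a boundary edge of $\ocC$. To bound this, I would invoke Lemma \ref{lem:ultimate} once more, taking $V'=V$ (a safe upper bound since $V^*\subseteq V$) and $E'$ equal to the set of all boundary edges of $\ocC$. By Fact \ref{fa:alpha_boundness_and_boundary_edges} we have $|E'|=\tO(m^{2/3})$, and by the standing assumption $\diameff(G)=O(m^{2/3})$, so the lemma yields $\tO(m^{2/3}\cdot m^{2/3})=\tO(m^{4/3})$ expected boundary crossings. Combining the two categories, the total expected number of actions is $\tO(m^{4/3})$, which gives the desired running time.

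The main obstacle I foresee is getting the bookkeeping in the coupling argument right: one must verify that every action of the shortcut walk is charged to exactly one internal $X$-step inside an uncovered $\cP^*$-component, or to at least one $X$-traversal of a boundary edge of $\ocC$, with no over- or under-counting (and that the shortcut walk does not secretly stay alive longer than the faithful walk that covers $V^*$). Once this correspondence is pinned down, the rest is two applications of the single lemma \ref{lem:ultimate}, at two different scales --- locally inside each $\cP^*$-component for the internal cost, and globally on $V$ with the boundary edge set for the jumping cost.
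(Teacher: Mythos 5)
Your proof is correct and matches the paper's argument: classify each action of the shortcut walk as either a faithful step inside an uncovered $\cP^*$-component or a jump, then bound the former by applying Lemma~\ref{lem:ultimate} locally with $V'=P$, $E'=E(P)$ and the shattering bound $\diameff(P)=O(m^{1/3})$, and bound the latter (together with any faithful steps crossing boundary edges) by applying Lemma~\ref{lem:ultimate} globally with $V'=V$, $E'$ the set of boundary edges of $\ocC$, using $\alpha$-boundedness and the small-diameter assumption. Your explicit coupling with the underlying faithful walk and the clean split of type-(i) steps into internal versus boundary-edge traversals just makes more precise the bookkeeping that the paper leaves implicit.
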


\begin{proof}
Observe that there are two types of steps in our procedure:
\begin{enumerate}[(a)]
\item regular, single steps inside some age-$r^*$ component, before it is covered,
\item jumps over a $\cC_j$ set (for some $j>r^*$) or over a component in $\cP_{r^*}$.
\end{enumerate}
Consider a component $P\in \cP^*$. The expected time spent on steps of type (a) within $P$ is bounded by $O(|E(P)| \cdot \diameff(P) \cdot \log n)$ by Lemma~\ref{lem:ultimate} (used with $V' = P$ and $E' = E(P)$). Since the effective resistance diameter of $P$ is $O(m^{1/3})$, this is $\tO(m^{1/3} |E(P)|)$. Thus, the expected total time spent on type-(a) steps is 
\[
\sum_{P\in \cP^*} \tO(m^{1/3} |E(P)|)=\tO(m^{1/3} \sum_{P\in \cP^*} |E(P)|)=\tO(m^{1/3} \cdot m)=\tO(m^{4/3}).
\]

Let us now concentrate on steps of type (b). Recall that every step of this kind is carried out using one query to the corresponding shortcutting structure and thus it takes $O(\log m)$ time. Each such jump ends on a boundary edge (either of some $\cC_j$ set or some component in $\cP_{r^*}$). So we can think that every jump corresponds to a boundary edge traversal in the underlying unshortcut random walk.

 We can now bound the time required for type-(b) steps simply by the expected number of times a boundary edge is traversed during our simulation. Note that by the time the whole graph is covered for the first time we are certain to be done with our simulation. Therefore we may use Lemma~\ref{lem:ultimate} to bound the time spent on type-(b) steps. We set $E'$ to be the set of all boundary edges, $V' = V$, and we recall that $\diameff(G) = O(m^{2/3})$ (the small-diameter assumption) and $|E'| = \tO(m^{2/3})$. Thus we obtain the desired bound of $\tO(m^{4/3})$. 

\end{proof}

By Lemmas~\ref{lem:preparation_cost}~and~\ref{lem:walking_cost} the total expected running time of sampling (in the small-diameter case) is $\tO(m^{4/3})$.

\subsection{Dealing with the Large-Diameter Case}\label{sec:deadling_with_large_diameter_case}

Recall that so far we were assuming for simplicity that the effective resistance diameter of $G$ is small, i.e.,  $\diameff(G)=O(m^{2/3})$. We will show how to deal with this assumption by slightly changing our algorithm. 

We have a graph $G$ and a collection of disjoint subgraphs $\cP^* \subseteq \cP$. We intend to sample a set according to the $F^*$-marginal distribution, where $F^*=\bigcup_{P\in \cP^*}E(P)$. 
We already know how to simulate the random walk in $G$ in such a way that we pay only for:
\begin{enumerate}[(a)]
\item traversing interior edges of every component $P\in \cP^*$ until it is covered,
\item traversing edges coming from a set $E'$ of size $\tO(m^{2/3})$.
\end{enumerate}
As seen in the proof of Lemma~\ref{lem:walking_cost}, the time complexity of (a) is $\tO(m^{4/3})$ irrespective of $G$, but the time complexity of (b) is $\tO(\diameff(G)m^{2/3})$, which is too much if we are not able to bound the diameter of $G$ by $O(m^{2/3})$. To circumvent this problem, we use the ball-growing procedure to cut $G$ into small-diameter pieces (cf.~\ref{sec:ball_growing}). Namely, we obtain a partition $V=\bigcup_{i=1}^{q} F_i$ satisfying $\diamdist(F_i)=O(m^{2/3})$ and $|E''|=\tO(m^{1/3})$, where $E''=E\setminus \bigcup_{i=1}^{q} E(F_i)$ is the set of boundary edges of this partition. Take $E'$ to be the set of all boundary edges (of all subdivisions $\cC_i$). By $\log^{O(1)}m$-boundedness of $\ocC$ we have $|E'| = \tO(m^{2/3})$.

In the new algorithm, whenever we cover all interesting vertices (those from $V^*$) within some $F_i$, we start shortcutting the walk over the whole $F_i$. This allows us to treat every edge $e\in E'\setminus E''$ as part of a small-diameter graph $F_i$ and pay for traversing it only until $F_i$ is covered. In the next subsection we make this intuition formal.

We now describe the strategy used to simulate the random walk. Suppose we are at a vertex $v$ and want to make a step in our simulation. Let $F_i$ be the small-diameter piece that $v$ belongs to. Then:
\begin{enumerate}
	\item if all vertices in $V^*\cap F_i$ have already been covered: jump over $F_i$ in one step,
	\item otherwise:
	\begin{enumerate}
		\item let $P$ be the component in $\cP$ such that $v\in P$,
		\item if $P\in \cP^*$:
		\begin{enumerate}
			\item if $P$ is covered: jump over the $S(P)$ set,
			\item else: perform one regular step of the random walk,
		\end{enumerate}
		\item else: jump over $S(P)$.
	\end{enumerate}
\end{enumerate}
Note that case 2. is exactly the same as our simplified strategy from the previous subsections.

\subsection{Analysis of Sampling in the Large-Diameter Case}
As in the small-diameter case, we need to bound two quantities: time spent on preparing the shortcutting structures and the expected time of simulation.
\begin{lemma}\label{lem:preparation_cost_general}
The total expected time spent on preparing the structures for shortcutting sets $\cC_i$ for $i=1,\dots,\ell$, modest components of $\cP_{r^*}$ and $F_i$ for $i=1,\ldots,q$ is $\tO(m^{4/3})$.
\end{lemma}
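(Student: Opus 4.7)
My plan is to reuse Lemma~\ref{lem:preparation_cost} verbatim for the first two families of sets (the $\cC_i$ and the modest components of $\cP_{r^*}$), which already gives the bound $\tO(m^{4/3})$, and then add a separate accounting for the extra family of escape sets $F_1,\ldots,F_q$ introduced in the large-diameter variant of the algorithm.

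First, I would note that the ball-growing decomposition itself can be computed in $O(m)$ time by Lemma~\ref{lem:ball_growing}, and that with the parameter choice made in Section~\ref{sec:deadling_with_large_diameter_case} we have both $\diamdist(F_i)=O(m^{2/3})$ for each $i$, and a total boundary $|E''|=\tO(m^{1/3})$. Then I would invoke Lemma~\ref{lem:cost_shortcutting_sampling} on each $F_i$: the cost of constructing and maintaining the $F_i$-shortcut sampler is $\tO(|F_i|+|E(F_i)|\cdot|\partial F_i|)$.

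The main step is the arithmetic. Since the $F_i$ form a partition of $V$, the vertex-term contributes $\sum_i |F_i| = n = O(m)$. For the product term, the crucial observation is that every edge $e\in E''$ lies in $\partial F_i$ for at most two indices $i$, so $\sum_i |\partial F_i| \le 2|E''| = \tO(m^{1/3})$. Combined with the trivial bound $|E(F_i)|\le m$ for each $i$, we obtain
\[
\sum_{i=1}^{q}|E(F_i)|\cdot|\partial F_i| \;\le\; \Bigl(\max_i |E(F_i)|\Bigr)\cdot\sum_i |\partial F_i| \;\le\; m\cdot \tO(m^{1/3}) \;=\; \tO(m^{4/3}).
\]
Adding this $\tO(m^{4/3})$ term to the $\tO(m^{4/3})$ bound already established by Lemma~\ref{lem:preparation_cost} gives the stated overall bound.

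I do not anticipate a real obstacle here: the proof is essentially a direct reuse of Lemma~\ref{lem:preparation_cost} plus a one-line application of Lemma~\ref{lem:cost_shortcutting_sampling} summed over the $F_i$. The only mild subtlety is being careful that the bound $|E(F_i)|\le m$ is good enough once it is multiplied by $\sum_i|\partial F_i|=\tO(m^{1/3})$ rather than by $|\partial F_i|$ individually; the factor-of-two overcounting of each boundary edge is what makes this summation work cleanly.
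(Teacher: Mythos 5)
Your proof is correct and follows essentially the same approach as the paper: reuse Lemma~\ref{lem:preparation_cost} for the $\cC_i$ and modest components, then apply Lemma~\ref{lem:cost_shortcutting_sampling} to each $F_i$ and bound $\sum_i|E(F_i)|\,|\partial F_i|$ by $\tO(m\,|E''|)=\tO(m^{4/3})$. The paper factors the sum slightly differently (bounding $|\partial F_i|\le|E''|$ uniformly and summing $|E(F_i)|\le m$, rather than bounding $|E(F_i)|\le m$ and summing $|\partial F_i|\le 2|E''|$), but this is an immaterial algebraic variation yielding the same bound.
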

\begin{proof}
An additional part compared to the simplified case are the $F_i$s. The expected time needed to maintain all the shortcut samplers for them is (use Lemma~\ref{lem:cost_shortcutting_sampling}): $\sum_{i=1}^{q} \tO(|F_i|+|E(F_i)||\partial F_i|)=\tO(n+\sum_{i=1}^{q}|E(F_i)||\partial F_i|)=\tO(\sum_{i=1}^{q}|E(F_i)||E''|)=\tO(m|E''|)=\tO(m^{4/3})$.
\end{proof}

\begin{lemma}\label{lem:walking_cost_general}
The expected time of performing the random walk simulation is $\tO(m^{4/3})$.
\end{lemma}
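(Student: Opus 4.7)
My plan is to extend the argument of Lemma~\ref{lem:walking_cost}, which relied on the small-diameter assumption $\diameff(G)=O(m^{2/3})$, by replacing its single global application of Lemma~\ref{lem:ultimate} with a two-scale accounting based on the $F_i$ decomposition. I partition the simulation steps into three types: (a) regular single walk steps, each costing $O(1)$; (b) shortcut jumps over some $S(P)$; and (c) shortcut jumps over some $F_i$, with each jump in (b) and (c) costing $O(\log m)$.

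For type (a) the bound is identical to that in Lemma~\ref{lem:walking_cost}: applying Lemma~\ref{lem:ultimate} to each $P\in\cP^*$ with $V'=P$ and edge set $E(P)$, and using $\diameff(P)\le\Dstar=O(m^{1/3})$ from the shattering property of $\ocC$, yields $\sum_{P\in\cP^*}\tO(m^{1/3}|E(P)|)=\tO(m^{4/3})$. For types (b) and (c), each jump can be charged to the crossing of a boundary edge — in $E'$ for (b) and in $E''$ for (c) — in the underlying walk that our simulation implicitly samples, so it suffices to bound the expected total number of boundary edge crossings up to the $V^*$-cover time $T^*$. I split these into crossings of edges in $E'\setminus E''$ and crossings of edges in $E''$ (absorbing $E'\cap E''$ into $E''$). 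The crucial observation is that once $V^*\cap F_i$ is covered the simulation never again spends time inside $F_i$ (it shortcuts over $F_i$ on every subsequent entry); therefore the simulation's work on edges in $(E'\setminus E'')\cap E(F_i)$ can be bounded by Lemma~\ref{lem:ultimate} applied to $F_i$, with $V'=F_i$ and edge set $(E'\setminus E'')\cap E(F_i)$, together with $\diameff(F_i)\le\diamdist(F_i)=O(m^{2/3})$. This yields $\tO(|(E'\setminus E'')\cap E(F_i)|\cdot m^{2/3})$ per $F_i$; summing over $i$ and using $|E'|=\tO(m^{2/3})$ gives $\tO(m^{4/3})$ in total. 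For the remaining $E''$ crossings, I invoke Lemma~\ref{lem:ultimate} globally with $V'=V$ and edge set $E''$, upper-bounding $T^*$ by the cover time of all of $V$; using $|E''|=\tO(m^{1/3})$ together with the loose bound $\diameff(G)\le\diamdist(G)\le n\le m$ gives $O(|E''|\cdot m\cdot\log n)=\tO(m^{4/3})$.

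Adding the contributions from (a), (b), and (c) yields the claimed $\tO(m^{4/3})$ bound. The main subtlety — and the entire motivation for introducing the $F_i$ partition — is that a naive single application of Lemma~\ref{lem:ultimate} with $V'=V$ to all boundary edges $E'\cup E''$ would cost $\tO(|E'|\cdot m)=\tO(m^{5/3})$, which is too expensive. The $F_i$-based shortcutting resolves this by localizing the traversals of the abundant edges in $E'\setminus E''$ to small-diameter subgraphs, so that only the much sparser set $E''$ of size $\tO(m^{1/3})$ must absorb the potentially large cost of $\diameff(G)$.
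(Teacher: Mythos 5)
Your proposal is correct and follows essentially the same approach as the paper's proof: both reuse the type-(a) bound from Lemma~\ref{lem:walking_cost}, charge $F_i$-jumps to $E''$-crossings bounded globally via Lemma~\ref{lem:ultimate} with $V'=V$ and $\diameff(G)<n$, and charge $S(P)$-jumps landing on $E'\setminus E''$ edges to per-$F_i$ applications of Lemma~\ref{lem:ultimate} with $V'=F_i$ and $\diameff(F_i)=O(m^{2/3})$. Your concluding remark about why a single global application of Lemma~\ref{lem:ultimate} would be too costly accurately captures the motivation behind the $F_i$ decomposition.
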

\begin{proof}
We remarked previously that the expected time spent on covering $P$'s is bounded by $\tO(m^{4/3})$. Thus we only need to estimate the expected number of jumps: those from step 1. and those from step 2. Like in the small-diameter case, we argue that each jump in 1. corresponds to an $E''$-edge traversal. Since $|E''|=\tO(m^{1/3})$, we may easily bound the total number of $E''$-edge traversals during a covering walk of $G$ by $\tO(m^{4/3})$ (use Lemma~\ref{lem:ultimate} with $V' = V$ and $\diameff(G) \le \diamdist(G) < n$). Similarly, for step 2., each jump ends with an $E'$-edge traversal, so it remains only to deal with jumping over $E'\setminus E''$-edges. There are $\tO(m^{2/3})$ of them, and each lies inside a component $F = F_i$.

Fix $F$ and let $E'_F = E' \cap E(F)$. Our aim is to bound the number of $E'_F$-traversals until $F$ is covered. This will also be a bound for the number of jumps ending with an $E'_F$-edge. To this end, we use Lemma~\ref{lem:ultimate} with $V' = F$ and the edge set being $E'_F$. Because $\diameff(V') = O(m^{2/3})$, this yields a bound of $\tO(|E'_F|\cdot m^{2/3})$.

Summing this over all $F$'s we obtain that the total time spent on $E'\setminus E''$-edge traversals caused by jumps is at most $\sum_{i=1}^{q}\tO(|E'_{F_i}| \cdot m^{2/3})=\tO(|E'\setminus E''| \cdot m^{2/3})=\tO(m^{4/3})$. This completes our argument. 
\end{proof}

The two lemmas above let us conclude that the sampling can be performed in total expected time $\tO(m^{4/3})$.
\bibliographystyle{abbrv}
\bibliography{files/references}
\appendix
\section{Appendix}

\subsection{Proof of Lemma~\ref{lem:cover_weighted}}\label{app:cover_weighted}

We prove the lower bound only; the upper bound is a special case of Lemma~\ref{lem:ultimate}.

Let us denote by $H(u,v)$ the expected number of steps before a random walk starting at $u$ visits $v$. We have $\kappa(u,v) = H(u,v) + H(v,u)$ and hence $\max(H(u,v), H(v,u)) \ge \frac 12 \kappa(u,v)$. Now one just needs to perform the simple estimation that
\[
Cov(G) \ge \max_{u,v \in V} H(u,v) \ge \frac 12 \max_{u,v \in V} \kappa(u,v) = \max_{u,v \in V} m \Reff(u,v) = m \diameff(G),
\]
where the first inequality follows as covering a graph $G$ starting from $u$ requires hitting all the other vertices first, and the first equality is by definition of effective resistance -- see Section \ref{sec:effective_resistance}.

\subsection{Proof of Lemma~\ref{lem:ball_growing}}\label{app:ball_growing}

We employ here an adaptation of the partitioning procedure from \cite{KelnerM09} that is itself based on the ball-growing technique of Leighton and Rao \cite{LeightonR99}. 

For a subset of vertices $V'\subseteq V$ let us define the set of neighbors $N(V')$ to be $V'\cup \{w\in V: (v,w) \in E \mbox{ for some }v\in V'\}$. Also recall that $\partial V'$ denotes the set of edges with exactly one endpoint in $V'$.

We start with the original $G$, $i=1$ and perform the following procedure until $G$ is empty:
\begin{itemize}\addtolength{\itemsep}{-.5\baselineskip}
	\item choose any vertex $v \in G$ and set $D:=\{v\}$;
	\item while $|\partial D|>\phi |E(D)|$ do: set $D:=N(D)$;
	\item add $D$ as a new $V_i$;
	\item remove $D$ and all incident edges from $G$;
	\item $i:=i+1$.
\end{itemize}
Obviously $V_1,\ldots, V_k$ is a partition of $V$. One can also easily prove that the condition (2) in Definition \ref{def:ball_decomp} is satisfied.  It remains to show that each piece has a small diameter, i.e., establish condition (1). 

To this end, note that the radius of $D$ at the moment it is added as a new piece is bounded by the number of the while-loop executions. We also see that each execution increases $|E(D)|$ at least by a factor of $1+\phi$. This means that after $s$ steps, $|E(D)|>(1+\phi)^s$. Since $(1+\phi)^{\log (m+1)/\phi}\geq m+1$ for $\phi<\frac 12$, we deduce that the radius of $D$ is bounded by $\frac{\log (m+1)}{\phi}$ and thus the diameter of $D$ is at most $\frac{2\log (m+1)}{\phi}$. It is not hard to see that the presented algorithm can be implemented to run in $O(m+n)$ time.

\subsection{Proof of Lemma~\ref{lem:conditioning_progress}} \label{app:conditioning_progress}

It is not hard to see that given $F'$ and $F^*$, the construction of the corresponding $(F^*,F')$-conditioned graph $G'$ can be done in $\tO(m)$ time. Also, note that as edges of $F^*$ (and thus also edges of $F'$) are always interior edges of components of the overlay $\cP(\ocC)$ of the covering family $\ocC$, removing or contracting them translates to removing or contracting the interior edges of the sets in $\ocC$. So, doing that (which can again be performed in $\tO(m)$ time, as each edge can be in at most $\ell+1=O(\log m)$ different sets) makes the resulting covering family $\ocC''$ remain valid for $G'$ and still $\alpha$-bounded. (As mentioned before, the value of $m$ used in the definition of the covering family and $\alpha$-boundedness always refers to the number of edges of the original graph and thus remains unchanged here.)

Now, observe that the only difference between the overlay $\cP(\ocC)$ of $\ocC$ and the overlay $\cP(\ocC'')$ of $\ocC''$ is that each minimum-age component $P$ of $\cP(\ocC)$ has its interior edges removed and/or contracted in $\cP(\ocC'')$. This means that each such $P$ corresponds to a minimum-age component $P'$ in $\cP(\ocC'')$ that is a collection of vertices with empty interior (the only edges incident to these vertices are the boundary edges of $P$, which now also form the boundary of $P'$). 

So, if the age $r^*$ of $\ocC$ (and thus that of $\ocC''$) was equal to $\ell$, then all the components of $\cP(\ocC)$ must have been minimum-age and thus all of them have empty interiors now. Given the way the overlay components are defined, this means that each edge remaining in $G'$ is part of the boundary of some set in $\ocC''$. Thus, in that case, we simply take $\ocC'$ equal to $\ocC''$ and the lemma follows. 

We can, therefore, restrict ourselves to the case when $r^*$ is actually strictly smaller than $\ell$. In this case, let us take $\ocC'$ to be a covering family in $G'$ that arises from adding, for each vertex of each minimum-age component $P'$ of $\cP(\ocC'')$, a singleton set containing only that vertex to the upper-most subdivision $\cC_{\ell}''$ of $\ocC''$. (Note that as $r^*<\ell$ and $P'$ is minimum-age, it does not intersect any of the sets in $\cC_{\ell}''$ and thus adding these singleton sets to $\cC_{\ell}''$ is valid.)

Observe that performing this operation results in each component $P'$ being split into single-vertex sets in the overlay $\cP(\ocC')$ of $\ocC'$ and each of these sets being of age $\ell$. So, as $r^*<\ell$, this increases the age of all the minimum-age components of $\cP(\ocC'')$ and thus the age of $\ocC'$ is at least $r^*+1$ now.

In light of the above, all that remains to be done is to argue that $\ocC'$ is still $\alpha$-bounded. To this end, note that adding our singleton sets does not introduce any new boundary edges. This is so since, as we already noted above, the only edges incident to vertices of minimum-age components $P'$ of $\cP(\ocC'')$ were boundary edges. Furthermore, as these singleton sets are added to $\cC_\ell''$, it cannot affect these edges' status as $i$-based boundary edges. That is, if such an edge was an $i$-based boundary edge in $\ocC''$ then it remains such in $\ocC'$. Thus, as the $\alpha$-boundedness property boils down to bounding the number of $i$-based boundary edges for each $0\leq i\leq \ell$, it must be preserved by our operation and thus the covering family $\ocC'$ is indeed $\alpha$-bounded, as desired.

\subsection{Proof of Lemma \ref{lem:massaging}} \label{app:massaging}

The cut $\cut$ separates $G$ into connected components (possibly more than two) -- some of them intersect $P_1$, some intersect $P_2$ and some intersect neither (but none can intersect both). Let $A_1$ be the union of those that intersect $P_1$ and $A_2$ be the union of the others; we thus obtain a decomposition $G = A_1 \cup A_2$ of $G$ where $P_1 \subseteq A_1$, $P_2 \subseteq A_2$ and all edges between $A_1$ and $A_2$ come from the cut $\cut$, i.e., $E(A_1,A_2) \subseteq \cut$.

Our plan will be to choose the set $S$ as a subset of $W(P)$ of at most half its size and thus set $i$ to be larger than $\age(P)$. (Recall that $W(P)$ denotes the age witness of $P$ -- see Section \ref{sec:marginals_overview}.) Because no set in $C^i$ with $i > \age(P)$ intersects $P$, this will make it easier for us to guarantee condition~\eqref{cond:disjoint}.

So let us define $U_1 := W(P) \cap A_1$, $U_2 := W(P) \cap A_2$ and assume wlog that $|E(U_1)| \le |E(U_2)|$. 
Since $U_1$ and $U_2$ are disjoint and $U_1 \cup U_2 = W(P)$, we get that
\begin{equation} \label{eq:fresh_color}
|E(U_1)| \le \frac 12 |E(W(P))| \le \frac{m}{2^{\age(P)+1}},
\end{equation}
where we used the bound \eqref{eq:def_size_bound_covering_family}.

The set $U_1$ is our candidate for the set $S$. We will now modify it appropriately (by shrinking) in order to satisfy all the required conditions.

Recall that we want to enforce $\frac{m}{2^{i+1}} < |E(S)| \le \frac{m}{2^i}$. Let $i$ be the right level for $U_1$, that is, the smallest $i$ such that $\frac{m}{2^{i+1}} < |E(S)|$. By~\eqref{eq:fresh_color} we know that $i > \age(P)$. We would ideally like to return $U_1$ as a candidate for insertion into $\cC_i$. However, $U_1$ may intersect some sets in $\cC_i$, which violates our condition~\eqref{cond:disjoint}. To deal with this problem, note that since $i > \age(P)$, no set in $\cC_i$ intersects $P$ (or $P_1$), so we can remove the intersection of $U_1$ with all the sets in $\cC_i$, and we still have that $U_1 \supseteq P_1$. 

Of course, now -- after removing this intersection with $\cC_i$ -- the size $|E(U_1)|$ of the interior of $U_1$ might have decreased considerably. $U_1$ still contains $P_1$, which is a large piece, so it cannot be too small, i.e., it has size at least $m^{2/3}$, but it might no longer hold that $|E(U_1)| > \frac{m}{2^{i+1}}$. If this indeed happens, then we appropriately increase $i$, and repeat.

The resulting set-refining routine works as follows. We start with $i=1$ and then:
\begin{enumerate}[(a)]\addtolength{\itemsep}{-.3\baselineskip}
	\item if $|E(U_1)| \le \frac{m}{2^{i+1}}$, increment $i$ and repeat;
	\item otherwise, if the intersection of all sets in $\cC_i$ with $U_1$ is nonempty, remove it from $U_1$ and repeat (starting from step (a));
	\item otherwise, we can stop and output the desired $S := U_1$ and $i$.
\end{enumerate}

Let us now verify the satisfaction of all conditions. Conditions~\eqref{cond:size}, \eqref{cond:disjoint} and \eqref{cond:separation} are clear given the design of our set-refining routine. To establish condition~\eqref{cond:i_0_ell}, notice that $i < \ell$ because even after all the removals, $U_1$ still contains $P_1$, hence $\frac{m}{2^{i}} \ge |E(U_1)| \ge |E(P_1)| > m^{2/3} = \frac{m}{2^{\ell}}$. Also $i > \age(P) \ge 0$.

Regarding condition~\eqref{cond:boundary_edges}, observe that all boundary edges of $U_1$ are either boundary edges of $A_1$ (and they come from the cut $\cut$), or arise as a result of intersecting $A_1$ with the age witness $W(P) \in \cC_{\age(P)}$ and taking a set difference with some number of sets from subdivisions $\cC_j$ with $j \le i$. In the latter case, they are already boundary edges of $\ocC$ on the respective levels $\age(P) < i$ and $j \le i$.

\subsection{Proof of Lemma \ref{lem:partitioning_running_time}} \label{app:partitioning_running_time}

Note that by Theorem \ref{thm:SpielmanSrivastava}, the construction of our low-dimensional embedding $\cR$ can be carried out in $\tO(m)$ time.

Let us now consider the runtime of the $Cut(P)$ procedure for one component $P \in \cP(\ocC)$. If the ball-growing does not fail, then, using Lemma \ref{lem:ball_growing}, we can bound this running time by $\tO(|P| + |E(P)|)$. On the other hand, if ball-growing fails, then this running time is equal to $\tO(m)$ plus the time needed to execute the recursive calls to $Cut(P')$ and $Cut(P'')$. 

Therefore, to bound the total running time $Cut(P)$, we model it using a binary tree whose leaves correspond to ball-growing succeeding and internal nodes describe handling ball-growing failures and subsequent recursive calls. By summing the contributions of all the nodes of this tree, we can conclude that the total time spent on all calls to the $Cut(P)$ procedure is at most
\[
\tO(m \cdot c) + \sum_{P \in \cP(\ocC)} \tO(|P| + |E(P)|) = \tO(m \cdot c + m)
\] where $c$ is the total number of cuts computed during handling ball-growing failures (the sum is taken over the final overlay $\cP(\ocC)$).

Now, every time we compute a cut, we end up adding a new set to $\cC_i$ with $0 < i < \ell$. By Lemma~\ref{lem:number_of_insertions_of_set}, this can happen at most $\sum_{i=1}^{\ell-1} 2^i = O(2^\ell) = O(m^{1/3})$ times. Hence we get a runtime of $\tO(m\cdot c + m) =\tO(m^{4/3})$.

\subsection{Proof of Lemma \ref{lem:partitioning_alpha_boundedness}} \label{app:partitioning_alpha_boundedness}

We need to prove that, for every $i$, the number of $i$-based boundary edges is at most $(\alpha  + \alpha^*) m^{1/3} 2^i$ with $\alpha^* = \log^{O(1)}m$. To this end, we will show that the number of new $i$-based boundary edges (ones that were not present in the original $\ocC$) is $\tO(m^{1/3} 2^i)$.

\textbf{Case $i < \ell$:} every new $i$-based boundary edge is the boundary edge of a new set $S \in \cC^i$ (added in the cutting case). Therefore we only need to bound the number of $i$-based boundary edges introduced by all new sets $S \in \cC_i$. If we can prove that adding a new set $S \in \cC_i$ introduces at most $O(m^{1/3})$ new $i$-based boundary edges, we will be done by Lemma~\ref{lem:number_of_insertions_of_set}. But condition~\eqref{cond:boundary_edges} of Lemma~\ref{lem:massaging} means exactly that all new $i$-based boundary edges of such a set $S$ must come from the cut $\cut$ of the phase when $S$ was added. And recall that Lemma~\ref{lem:good_cut} applied to the components $P_1$ and $P_2$, of low diameter $m^{1/3}$ and high effective resistance of $\Omega(m^{1/3})$ between them, gave us exactly the bound $m^{1/3}$ for the cardinality of the cut $\cut$, as computed in~\eqref{eq:before_good_cut_lemma}.

\textbf{Case $i = \ell$:} every new $\ell$-based boundary edge is a boundary edge of a new small piece (added when ball-growing succeeds). During the entire partitioning procedure, all the calls to the ball-growing primitive that introduce new $\ell$-based boundary edges concern disjoint subgraphs $H$ of $G$. Each such call yields a set of cut edges of size $O \left( \frac{\log m}{m^{1/3}}|E(H)| \right)$, so all the calls result in at most $\tO \left( \frac{m}{m^{1/3}} \right) = \tO(m^{2/3}) = \tO(m^{1/3} \cdot 2^\ell)$ cut edges.

\subsection{Proof of Lemma \ref{lem:cost_shortcutting_sampling}}\label{app:cost_shortcutting_sampling}

Let us first note that if the interior $E(S)$ of $S$ has more than one connected component, then the problem decomposes into completely independent subproblems corresponding to each of these connected components. So, from now on, we can assume that $E(S)$ forms a connected graph and, in particular, $|S|\leq |E(S)|+1$. 

Fix $e=(u,u')\in \partial S$ (with $u\in S$ and $u'\notin S$) and suppose we want to calculate the probabilities $P_v(e)$ for all $v\in S$. Let us construct an auxiliary graph $G'$ as follows. The vertex set is $S\cup\{u',u^*\}$, where $u^*$ is a dummy vertex. We include in $G'$ all edges from $E(S)$ and the edge $e$; we also include all boundary edges of $S$ (those going from $S$ to $V \setminus S$) except $e$, but we change their endpoint that originally belonged to $V \setminus S$  to be $u^*$ now.

Now, it is not hard to see that $P_v(e)$ is equal to the probability $q_v$ that the random walk in this auxiliary graph starting at $v$ will visit $u'$ before hitting $u^*$. The latter probability turns out, however, to be closely related to the electrical structure of the graph $G'$. 

More precisely, let us treat $G'$ as an electrical network, where every edge has unit resistance. It is known (see Chapter~4 in~\cite{Lovasz93}) that when we impose a voltage of $1$ and $0$ at $u'$ and $u^*$ respectively, the voltage at $v$ becomes exactly $q_v$ (and thus $P_v(e)$). 

Furthermore, this probability $q_v$ can be computed by solving a Laplacian linear system of the form $L'x=y$, where $L'$ is the discrete Laplacian of $G'$. In fact, in this way we can get such probabilities $q_v$ for all possible starting vertices $v\in S$ simultaneously. For details, refer to Lemma~9 in \cite{KelnerM09}.

Therefore, we see that computing all the probabilities $P_v(e)$ boils down to solving a Laplacian linear system $L'x=y$ for all the graphs $G'$ corresponding to each edge $e\in \partial S$.

As shown first by Spielman and Teng~\cite{SpielmanT04} (see also \cite{KoutisMP10,KoutisMP11,KelnerOSZ13}), solving such a system is possible in time that is nearly-linear with respect to the number of edges in $G'$. However, this method does not provide us with exact values of $P_v(e)$'s -- it outputs only approximate values with prespecified precision $\varepsilon$. So, in our case, for a given $\varepsilon>0$, the total computation takes $\tO(|E(S)||\partial S|\log 1/\varepsilon)$ time and computes the values of all $P_v(e)$s up to an additive error of $\varepsilon$.

Now, suppose for a moment that all these computed probabilities were exact. Then, for a fixed $v\in S$, it is easy to build a simple array structure which allows us to sample edges from $\partial S$ with the desired distribution. It boils down to constructing first an array of prefix sums of size $|\partial S|$ and then, whenever a sample is needed, performing  a simple binary search in this array, which takes time $O(\log m)$. (See Lemma~7 in \cite{KelnerM09} for details.)

Finally, the way of coping with the precision issue is based on an idea due to Propp \cite{Propp10}. Here, we only explain the idea on a simpler example and refer the reader to Lemma~7.3.2 in~\cite{Madry11} for a complete description. 

In this example, assume we want to take a sample from a $p$-biased Bernoulli distribution, i.e., sample a $0$-$1$ random variable that is $0$ with probability $p$ and $1$ with probability $1-p$. If we knew the value of $p$, we could perform this sampling by simply choosing a value $r\in [0,1]$ uniformly at random and then outputting $0$ if $r\leq p$ and $1$ otherwise.

Assume now that instead of the exact value of $p$ we only have access to an algorithm which, given $\varepsilon$, outputs $p$ with precision $\pm \varepsilon$ in time proportional to $T\cdot \log 1/\varepsilon$, for some $T>0$. To perform our sampling in this case, we again choose a uniformly random $r\in [0,1]$ and then compare it to the current approximation $p'\in [p-\varepsilon,p+\varepsilon]$ of the value of $p$, for some $\varepsilon>0$. 

Observe that as long as $r<p'-\varepsilon$ or $r>p'+\varepsilon$, we can simply output $0$ and $1$ respectively, as we are certain that $r<p$ or, respectively, $r>p$. Now, in the remaining case of $r\in [p'-\varepsilon, p'+\varepsilon]$, we simply keep halving the value of $\varepsilon$ (and computing the corresponding refinement of our approximation to the value of $p$) until we are able to conclude that either $r< p$ or $r>p$ (in which case we can output an appropriate value).

Clearly, this procedure terminates (with probability $1$) and outputs the correct sample. It thus remains to analyze the expected time spent on recomputing $p'$. To this end, note that the probability that during our sampling the required precision will be less than $2^{-k}$ is at most $2^{-k+1}$. So, the expected time spent on recomputing $p'$ is bounded by
\[
\sum_{k\ge 1} \frac{2}{2^k} \cdot T\cdot \log (1/2^{-k})=T \cdot \sum_{k\ge 1} \frac{2\cdot k}{2^k}=O(T).
\]
Similarly, one can easily show that if we needed $s$ samples (instead of only one) then this expected time would become $O(T \cdot \log s)$.

\subsection{Proof of Lemma~\ref{lem:ultimate}} \label{app:proof_of_ultimate}

Let us define the hitting time $H(u,v)$ between two vertices $u, v \in V$ to be the expected number of steps before a random walk starting at $u$ visits $v$. We have $\kappa(u,v) = H(u,v) + H(v,u)$.

We break the proof down into a sequence of lemmas.

\begin{lemma}[cf. Lemma~2.8 of~\cite{Lovasz93}] \label{lem:LovaszCoverLog}
Let $V' \subseteq V$ be a subset of vertices, $s \in V$ -- a~starting vertex, and let $M = \max_{v \in V'} H(s,v)$ be the maximum hitting time between $s$ and a~vertex in $V'$. Suppose we run a random walk from $s$ until more than half of the vertices from $V'$ are visited. The expected length of this walk is at most $2 M$.
\end{lemma}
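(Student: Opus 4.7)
The plan is to reduce this to a routine combination of Markov's inequality and the tail-sum formula for expectation. I set up notation: for each $v \in V'$, let $T_v$ denote the (random) first time the walk from $s$ visits $v$, so that by definition $E[T_v] = H(s,v) \le M$. Let $N(t) = |\{v \in V' : T_v \le t\}|$ be the number of vertices of $V'$ visited by time $t$, and let $T$ be the stopping time from the statement -- the first time strictly more than $|V'|/2$ vertices of $V'$ have been visited.

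The key reformulation I would use is that the event $\{T > t\}$ coincides with the event $\{\,|V'| - N(t) \ge |V'|/2\,\}$: more than half of $V'$ has been visited by time $t$ iff fewer than half of $V'$ remains unvisited. This lets me apply Markov's inequality to the nonnegative integer-valued random variable $|V'| - N(t)$, whose expectation equals $\sum_{v \in V'}\Pr[T_v > t]$ (by linearity applied to the indicators $\mathbf{1}[T_v > t]$). This gives
$$\Pr[T > t] \;\le\; \frac{E[\,|V'| - N(t)\,]}{|V'|/2} \;=\; \frac{2}{|V'|}\sum_{v \in V'}\Pr[T_v > t].$$

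To finish, I would invoke the tail-sum formula $E[T] = \sum_{t \ge 0}\Pr[T > t]$, swap the order of summation (a routine Fubini step, since all terms are nonnegative), and recognize the inner sum as $\sum_{t \ge 0}\Pr[T_v > t] = E[T_v] \le M$. This yields $E[T] \le \tfrac{2}{|V'|}\sum_{v\in V'} E[T_v] \le 2M$, as required.

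There is no real obstacle; the computation is essentially a few lines once the complementary-event reformulation is in place. The one subtlety worth noting is that the factor $\tfrac12$ in the stopping criterion (``more than half visited'') is exactly what allows Markov's inequality to match the desired constant $2$ in the bound -- a stricter stopping criterion (say, covering a $(1-\varepsilon)$-fraction) would inflate the constant to something like $M/\varepsilon$.
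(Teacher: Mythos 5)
Your proof is correct, but it takes a genuinely different route from the paper's. The paper's argument is pathwise: fixing $|V'| = 2k+1$ and writing $\alpha_v$ for the first-visit time to $v$, it observes that the walk length $\beta$ is the $(k+1)$-th order statistic of $\{\alpha_v\}$, deduces the deterministic inequality $(k+1)\beta \le \sum_{v\in V'}\alpha_v$ (since $k+1$ of the $\alpha_v$'s are at least $\beta$), and then just takes expectations -- one line of algebra, no summation over time. Your argument instead works at the level of marginal distributions: for each fixed horizon $t$ you apply Markov's inequality to the count of unvisited vertices, then integrate via the tail-sum formula and a Tonelli swap. Both are Markov's inequality at heart -- the paper applies it to the empirical distribution on each sample path, you apply it to the ensemble distribution at each time -- but the presentations are distinct, and yours is perhaps the more routine template to find without the order-statistics insight. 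One small remark: your threshold $a = |V'|/2$ gives exactly $2M$, whereas using the integrality of $N(t)$ and taking $a = (|V'|+1)/2$ would reproduce the paper's slightly sharper constant $\frac{2k+1}{k+1} < 2$ in the odd case; but the lemma only claims $\le 2M$, so this is cosmetic.
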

\begin{proof}
We assume first that $|V'| = 2k+1$ is odd. Let $\alpha_v$ be the time when vertex $v$ is first visited. Then the length of the walk, $\beta$, is the $(k+1)$-th largest of the $\alpha_v$. Hence
\[(k+1) \beta \le \sum_{v \in V'} \alpha_v\]
and
\[\bE[\beta] \le \frac{1}{k+1} \sum_{v \in V'} \bE[\alpha_v] \le \frac{1}{k+1}\cdot  |V'| \cdot M = \frac{2k+1}{k+1} M < 2 M.
\]
If $|V'| = 2k$ is even, we proceed analogously and obtain $\frac{2k}{k}M\leq 2M$ at the end.
\end{proof}

\begin{lemma} \label{lem:preUltimate}
Let $V' \subseteq V$. Suppose we run a random walk in $G$ from a vertex $v \in V'$ and stop it on the first visit to $v$ after all vertices from $V'$ have been visited. Then the expected number of steps is \[O(|E| \cdot \diameff(V') \cdot \log n).\]
\end{lemma}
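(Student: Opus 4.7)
The plan is to imitate the classical Matthews-Aleliunas-et-al.\ cover time argument, but applied to the set $V'$ and with commute times used in place of graph-diameter hitting times, so that the bound scales with $\diameff(V')$ rather than $\diamdist(V')$.

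First I would divide the walk into phases. Phase $1$ starts at $s_1 = v$ and runs until strictly more than half of the vertices in $V'_1 := V' \setminus \{v\}$ have been visited; phase $i \ge 2$ starts at the vertex $s_i$ where the previous phase ended and runs until strictly more than half of the currently unvisited set $V'_i \subseteq V'$ has been visited. Note that $s_i \in V'$ for every $i$, because each phase terminates the moment a new vertex of $V'$ is hit. After at most $K := \lceil \log_2 |V'| \rceil = O(\log n)$ phases the unvisited set $V'_{K+1}$ is empty, so $V'$ is fully covered. Afterwards we append a last segment that returns to $v$.

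Next, conditional on $s_i$ and $V'_i$, Lemma~\ref{lem:LovaszCoverLog} (applied with starting vertex $s_i$ and target set $V'_i$) bounds the expected length of phase $i$ by $2 M_i$, where $M_i := \max_{u \in V'_i} H(s_i, u)$. The key step is to bound $M_i$ uniformly. Since $s_i \in V'$ and $u \in V'_i \subseteq V'$, we have by the commute time identity
\[
H(s_i, u) \le H(s_i, u) + H(u, s_i) = \kappa(s_i, u) = 2|E| \cdot \Reff(s_i, u) \le 2|E| \cdot \diameff(V').
\]
Hence every phase has expected length at most $4|E| \cdot \diameff(V')$, regardless of the (random) identity of $s_i$. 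The final return-to-$v$ segment is governed by $H(s_{K+1}, v) \le \kappa(s_{K+1}, v) \le 2|E| \diameff(V')$ by the same estimate.

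Summing over the $K = O(\log n)$ phases plus the return leg, the total expected number of steps is
\[
\sum_{i=1}^{K} 2 \, \bE[M_i] + 2|E| \diameff(V') \;\le\; (K+1) \cdot 4|E| \diameff(V') \;=\; O(|E| \cdot \diameff(V') \cdot \log n),
\]
which is the claimed bound. The only subtlety worth watching is that the starting vertex $s_i$ of each phase is a random vertex depending on prior phases; but because the per-phase bound $2 M_i \le 4|E|\diameff(V')$ holds deterministically as long as $s_i \in V'$, taking expectations and using the tower property gives the sum above without further work. This is the main (and essentially only) non-cosmetic point in the argument.
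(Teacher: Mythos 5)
Your proof is correct and follows essentially the same route as the paper's: a halving-phase decomposition driven by Lemma~\ref{lem:LovaszCoverLog}, with the per-phase hitting time bounded uniformly via the commute-time identity $\kappa(s_i,u)=2|E|\Reff(s_i,u)\le 2|E|\diameff(V')$ once one notes $s_i\in V'$, and a final return leg back to $v$. Your explicit invocation of the tower property to handle the random starting vertex $s_i$ spells out a point the paper only asserts informally, but the argument is otherwise identical.
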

\begin{proof}
The proof is similar to that of Theorem~2.7 of~\cite{Lovasz93}. We divide the walk into phases. The first one ends when more than half of vertices from $V'$ have been visited. By Lemma~\ref{lem:LovaszCoverLog} (with $s = v$), we know that its expected length is at most \[
2 \max_{u \in V'} H(v,u) \le 2 \max_{u \in V'} \kappa(v,u) = 4 |E| \max_{u \in V'} \Reff(v,u) = O(|E| \diameff(V'))
\] 
(recall that we defined $\Reff(v,u)$ to be $\frac{\kappa(v,u)}{2|E|}$).

The second phase ends when at least half of the remaining vertices of $V'$ have been covered, and the lemma applied to that set of vertices (and the starting vertex $s$ being the vertex where the first phase ended) again bounds the expected length of the second phase by the same quantity. (It is important here that $s \in V'$, as the first phase must have ended with a visit to a vertex of $V'$.) We proceed in this fashion until $V'$ is covered. By linearity of expectation, the expected length of the walk up to this point is $O(|E| \diameff(V'))$ times the number of phases, which is clearly $\log |V'| = O(\log n)$. (Note that the number of phases is a fixed quantity here.) Finally, the remaining (expected) time needed to return to $v$ from the last visited vertex of $V'$ is at most $\max_{u \in V'} H(u,v) = O(|E| \diameff(V'))$.
\end{proof}

Now, we will argue that every edge in $G$ has exactly the same contribution to the expected length of the random walk of Lemma~\ref{lem:preUltimate}. Intuitively, this is because the (infinite) stationary random walk on $G$ (where, in the long run, every edge has the same contribution) can be viewed as an (infinite) sequence of samples from the trajectory of our random walk (plus an initial fragment of bounded expected length). Thus, if any edge contributed more than a $1/|E|$-fraction to the length of our walk, then this discrepancy would also emerge in the stationary walk, which is a contradiction.

We proceed to formalizing the above intuition. To this end, let us state the following lemma from renewal theory. It appears as Proposition~7.3 in~\cite{Ross09}.
\begin{lemma} \label{lem:renewal_theory}
Let $X_1, X_2, \ldots$ be an infinite sequence of nonnegative i.i.d. random variables with finite expectation, i.e., $X_1 \ge 0$, $\bE[X_1] < \infty$. Let $R_1, R_2, \ldots$ be another such sequence. (Each sequence is independent, but $R_i$ may depend on $X_i$ for every $i$.) For any $t \ge 0$, set $R(t) = R_1 + R_2 + \dots + R_{N(t)}$, where $N(t)$ is the maximum integer such that $X_1 + X_2 + \dots + X_{N(t)} \le t$. Then
\[ \lim_{t \to \infty} \frac{\bE[R(t)]}{t} = \frac{\bE[R_1]}{\bE[X_1]}. \]
\end{lemma}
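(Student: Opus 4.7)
The plan is to prove this as the classical renewal reward theorem, combining a strong law for the renewal counting process $N(t)$ with Wald's identity applied to the i.i.d. pairs $(X_i,R_i)$. The natural proof has two parts: first establish an almost-sure statement about $N(t)/t$ and $R(t)/t$, then upgrade it to a statement about $\bE[R(t)]/t$ via Wald.

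First, I would observe that the partial sums $S_n = X_1+\cdots+X_n$ satisfy $S_{N(t)}\le t<S_{N(t)+1}$ by definition of $N(t)$. Since $\bE[X_1]<\infty$ and $X_1\ge 0$, we have $S_n\to\infty$ a.s.\ and hence $N(t)\to\infty$ a.s.\ as $t\to\infty$. Dividing the sandwich by $N(t)$ and applying the strong law of large numbers to $S_n/n\to \bE[X_1]$ gives $t/N(t)\to \bE[X_1]$ a.s., i.e.\ $N(t)/t\to 1/\bE[X_1]$ a.s. Writing $R(t)/t=(R(t)/N(t))\cdot(N(t)/t)$ and applying the SLLN to the i.i.d.\ sequence $R_1,R_2,\dots$ yields the pathwise conclusion $R(t)/t\to \bE[R_1]/\bE[X_1]$ a.s.

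To promote this to expectations, I would use Wald's identity. The key observation is that $N(t)+1$ is a stopping time with respect to the filtration $\mathcal{F}_n=\sigma\big((X_i,R_i)_{i\le n}\big)$, since the event $\{N(t)+1=n\}=\{S_{n-1}\le t<S_n\}$ depends only on $X_1,\dots,X_n$. Because the pairs $(X_i,R_i)$ are i.i.d., $R_{n+1}$ is independent of $\mathcal{F}_n$, so Wald's equation gives
\[
\bE\!\left[\sum_{i=1}^{N(t)+1}R_i\right]=\bE[N(t)+1]\,\bE[R_1],\qquad \bE\!\left[\sum_{i=1}^{N(t)+1}X_i\right]=\bE[N(t)+1]\,\bE[X_1].
\]
The second identity, together with $S_{N(t)+1}>t$, yields $(\bE[N(t)]+1)\bE[X_1]>t$, which (with a standard truncation of $X_1$ to handle the overshoot $S_{N(t)+1}-t$) produces the elementary renewal theorem $\bE[N(t)]/t\to 1/\bE[X_1]$. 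Combining the $R$-Wald identity with this gives
\[
\frac{\bE[R(t)]}{t}=\frac{(\bE[N(t)]+1)\,\bE[R_1]-\bE[R_{N(t)+1}]}{t}\longrightarrow \frac{\bE[R_1]}{\bE[X_1]}-\lim_{t\to\infty}\frac{\bE[R_{N(t)+1}]}{t}.
\]

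The main obstacle is the overshoot term $\bE[R_{N(t)+1}]/t$: the index $N(t)+1$ is the renewal that straddles $t$, so $X_{N(t)+1}$ is size-biased (the inspection paradox), and $R_{N(t)+1}$ is correlated with it. The clean way to kill this term is to truncate: replace $R_i$ by $R_i\wedge M$ (and separately by $(-R_i)\wedge M$ for the negative part) to obtain uniformly bounded rewards, apply the argument above to get the limit with $\bE[R_1\wedge M]/\bE[X_1]$, and then let $M\to\infty$ using monotone/dominated convergence together with $\bE[|R_1|]<\infty$. In the nonnegative case relevant to the paper's use, the truncation is one-sided and boundedness of $R_i\wedge M$ makes $\bE[R_{N(t)+1}\wedge M]\le M$, so the residual is $O(M/t)\to 0$; the argument is then standard. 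This is exactly the content of Proposition 7.3 in Ross, whose proof we are reproducing in outline.
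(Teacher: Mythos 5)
The paper does not supply its own proof of this lemma; it cites Proposition~7.3 of Ross, which is precisely the source you acknowledge reproducing in outline. Your sketch -- SLLN for the almost-sure limit, Wald's identity applied at the stopping time $N(t)+1$, the elementary renewal theorem, and a one-sided truncation of the nonnegative rewards to kill the overshoot term -- is the correct standard argument and matches that reference.
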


We are now ready to analyze the number of traversals over a given edge $e$ in our random walk. 

\begin{lemma} \label{lem:ultimate_for_one_edge}
Let $V' \subseteq V$ and $v \in V'$. Also, fix any edge $e \in E$. Consider again the random walk from the statement of Lemma~\ref{lem:preUltimate}. The expected number of times  $e$ will be traversed during this walk is $O(\diameff(V') \cdot \log n)$.
\end{lemma}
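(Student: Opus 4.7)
My plan is to apply the renewal reward theorem in the form of Lemma~\ref{lem:renewal_theory} to compare two different ways of accounting for edge-$e$ traversals. First, imagine extending the one-excursion walk of Lemma~\ref{lem:preUltimate} into an infinite process: after returning to $v$ with $V'$ covered, immediately start a fresh excursion of the same kind, and so on. By the strong Markov property at $v$, the pairs $(X_i, R_i)$ -- where $X_i$ is the length of the $i$-th excursion and $R_i$ is the number of $e$-traversals in it -- are i.i.d. Lemma~\ref{lem:preUltimate} gives $\bE[X_1] = O(|E|\cdot\diameff(V')\cdot\log n) < \infty$, and since $R_1 \le X_1$, also $\bE[R_1]$ is finite. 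Crucially, $\bE[R_1]$ is \emph{exactly} the expected number of $e$-traversals in a single excursion, i.e., the quantity that the lemma asserts to be $O(\diameff(V')\cdot\log n)$.

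Invoking Lemma~\ref{lem:renewal_theory} with this choice, we obtain
\[
\lim_{t\to\infty} \frac{\bE[R(t)]}{t} \;=\; \frac{\bE[R_1]}{\bE[X_1]}.
\]
To evaluate this limit independently, observe that the concatenation of all excursions is just a single infinite random walk on $G$ started at $v$. For such a walk, the stationary probability that any particular step crosses the edge $e = \{u,w\}$ is $\pi(u)/d(u) + \pi(w)/d(w) = 1/(2|E|) + 1/(2|E|) = 1/|E|$, so the expected number of $e$-traversals in the first $t$ steps of the walk is $t/|E| + o(t)$ as $t\to\infty$ (by the Cesàro form of the ergodic theorem, which holds even in the bipartite case). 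Since $R(t)$ counts exactly the $e$-traversals occurring in complete excursions that finish by time $t$, and the in-progress excursion at time $t$ accounts for at most an $o(t)$ correction in expectation, we conclude that $\bE[R(t)]/t \to 1/|E|$ as well. Matching the two expressions for the limit yields $\bE[R_1] = \bE[X_1]/|E| = O(\diameff(V')\cdot\log n)$, which is what is claimed.

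The main technical obstacle is the last step in the previous paragraph: rigorously arguing that the in-progress excursion at time $t$ contributes only $o(t)$ to the expected $e$-traversal count, so that $\bE[R(t)]/t$ and the expected edge-traversal rate of the infinite walk share the same limit. This is standard renewal theory and follows from $\bE[X_1] < \infty$ together with the fact that $R_i \le X_i$. A clean way to package it is to apply Lemma~\ref{lem:renewal_theory} a second time with $R_i := X_i$, which gives $\bE[X_1 + \cdots + X_{N(t)}]/t \to 1$ and turns the comparison between $R(t)$ and the total number of $e$-traversals in $t$ steps into a simple sandwich argument. Once this bookkeeping is in place, all the conceptual content is carried by the uniform stationary distribution on undirected edges, which is exactly what yields the $1/|E|$ factor and therefore the per-edge improvement of $|E|$ over Lemma~\ref{lem:preUltimate}.
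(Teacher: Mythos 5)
Your proof is correct and follows essentially the same route as the paper: both divide an infinite walk from $v$ into i.i.d.\ excursions of the type in Lemma~\ref{lem:preUltimate}, apply the renewal-reward Lemma~\ref{lem:renewal_theory} to obtain $\lim_t \bE[R(t)]/t = \bE[R_1]/\bE[X_1]$, and then identify this limit with the stationary per-step edge-traversal rate $1/|E|$. The only difference is cosmetic, in how the in-progress excursion at time $t$ is dismissed: the paper bounds $\bE[R'(t)-R(t)]$ directly by $\bE[X_1]$, while you propose a second application of the renewal theorem with $R_i := X_i$, which is, if anything, a slightly cleaner way to get the same sandwich.
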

\begin{proof}
Consider an infinite random walk starting at $v$. We will divide this walk into phases, each of which starts at $v$ and ends on the first visit to $v$ after all vertices from $V'$ have been visited. In this way we get an i.i.d. sequence of copies of the random walk of Lemma~\ref{lem:preUltimate}. For every $i$ let $X_i$ be the length of the $i$-th phase, and let $R_i$ be the number of times the edge $e$ was traversed during this phase. By Lemma~\ref{lem:renewal_theory} we obtain that
\[
\frac{\mbox{the expected number of traversals of edge $e$ during one phase}}{\mbox{the expected length of one phase}}
=
\frac{\bE[R_1]}{\bE[X_1]} = \lim_{t \to \infty} \frac{\bE[R(t)]}{t}, \]
where $R(t)$ is the number of traversals of $e$ during all phases which have finished by time $t$. We need to show that this quantity is equal to $1/|E|$.

Indeed, if we define $R'(t)$ to be the number of traversals of $e$ up to time $t$, then it is well-known
that $\lim\limits_{t \to \infty} \frac{\bE[R'(t)]}{t} = \frac{1}{|E|}$ (which is the stationary probability of traversing $e$). Furthermore, the difference between $R'(t)$ and $R(t)$ is just the number of traversals of $e$ during the current phase at time $t$, and we can upper-bound $\bE[R'(t) - R(t)]$ by $\bE[X_1]$. Hence, we have that
\[
\frac{\bE[R'(t)]}{t} \ge \frac{\bE[R(t)]}{t} = \frac{\bE[R'(t)] - \bE[R'(t) - R(t)]}{t} \ge \frac{\bE[R'(t)] - \bE[X_1]}{t}.
\]
Taking a limit of $t\rightarrow \infty$ of both sides concludes the proof.
\end{proof}

Now Lemma~\ref{lem:ultimate} follows easily. Namely, let us consider a random walk started at an arbitrary vertex (possibly outside of $V'$) and let $v_0$ be the first vertex in $V'$ on that walk. (No edges from $E'$ will be traversed before $v_0$ is reached, as $E' \subseteq E(V')$.) Applying Lemma~\ref{lem:ultimate_for_one_edge} with $v=v_0$ to each edge $e \in E'$, we obtain that the total number of traversals of edges in $E'$ in this random walk until the set $V'$ is covered is $O(|E'|\cdot \diameff(V') \cdot \log n)$, as we wanted to show.

\end{document}